\title{Countdown $\mu$-calculus}
\author{Jędrzej Kołodziejski}{University of Warsaw, Poland}{j.kolodziejski@mimuw.edu.pl}{https://orcid.org/0000-0001-5008-9224}{National Science Center (NCN) grant 2021/41/B/ST6/00535}
\author{Bartek Klin}{University of Oxford, UK}{bartek.klin@cs.ox.ac.uk}{https://orcid.org/0000-0001-5793-7425}{}
\authorrunning{J. Kołodziejski and B. Klin} %TODO mandatory. First: Use abbreviated first/middle names. Second (only in severe cases): Use first author plus 'et al.'
\keywords{countdown $\mu$-calculus, games, automata} 
\newcommand{\A}{\mathcal{A}}
\newcommand{\D}{\mathcal{D}}
\newcommand{\G}{\mathcal{G}}
\newcommand{\HH}{\mathcal{H}}
\newcommand{\M}{\mathcal{M}}
\newcommand{\RR}{\mathcal{R}}
\newcommand{\SSS}{\mathcal{S}}
\newcommand{\U}{\mathcal{U}}
\newcommand{\letterA}{\mathsf{a}}
\newcommand{\letterB}{\mathsf{b}}
\newcommand{\arrowA}{\stackrel{\letterA}{\to}}
\newcommand{\arrowB}{\stackrel{\letterB}{\to}}
\newcommand{\arrowAction}{\stackrel{\action}{\to}}
\newcommand{\powersetNoArgs}{\mathcal{P}}
\newcommand{\powerset}[1]{\powersetNoArgs(#1)}
\newcommand{\Ord}{\mathsf{Ord}}
\newcommand{\ML}{\mathsf{ML}}
\newcommand{\MSO}{{\sf MSO}}
\newcommand{\MSOU}{{\sf MSO+U}}
\newcommand{\WMSOU}{{\sf WMSO+U}}
\newcommand{\muML}{\mu\text{-}\ML}
\newcommand{\muMLC}{\mu^\alpha\text{-}\ML}
\newcommand{\SubFor}{\mathsf{SubFor}}
\newcommand{\rank}{\mathsf{rank}}
\newcommand{\rangeRank}{\RR}
\newcommand{\nonstandardRanks}{\D}
\newcommand{\bounds}{\ctr_I}
\newcommand{\positions}{\mathsf{pos}}
\newcommand{\phase}{\SSS}
\newcommand{\phaseScalar}{\phase}
\newcommand{\phaseScalarBis}{\mathcal{Z}}
\newcommand{\val}{\mathsf{val}}
\newcommand{\ctr}{\mathsf{ctr}}
\newcommand{\eve}{\exists\text{ve}}
\newcommand{\adam}{\forall\text{dam}}
\newcommand{\game}{\G}
\newcommand{\semanticGame}{\game}
\newcommand{\exit}{\mathsf{exit}}
\newcommand{\exitPos}{Z}
\newcommand{\Var}{\mathsf{Var}}
\newcommand{\FreeVar}{\mathsf{FreeVar}}
\newcommand{\Actions}{\mathsf{Act}}
\newcommand{\action}{\tau}
\newcommand{\point}{\mathsf{m}}
\newcommand{\altpoint}{\mathsf{n}}
\newcommand{\exitEquiv}[1]{\equiv_{#1}}
\newcommand{\semantics}[1]{\llbracket #1 \rrbracket}
\renewcommand{\phi}{\varphi}
\renewcommand{\lim}{\text{lim}}
\renewcommand{\diamond}[1]{\langle #1 \rangle}
\newcommand{\boxmodal}[1]{[ #1 ]}
\renewcommand{\subset}{\subseteq}
\renewcommand{\max}{\mathsf{max}}
\newcommand{\worse}{\preccurlyeq}
\newcommand{\pconf}[1]{\langle#1\rangle}
\newcommand{\cconf}[1]{[#1]}
\newtheorem*{theorem-no-number}{Theorem}
\newtheorem*{lemma-no-number}{Lemma}
\theoremstyle{definition}
\begin{document}
\hideLIPIcs
\maketitle

\begin{abstract}
We introduce the countdown $\mu$-calculus, an extension of the modal $\mu$-calculus with ordinal approximations of fixpoint operators. In addition to properties definable in the classical calculus, it can express (un)boundedness properties such as the existence of arbitrarily long sequences of specific actions. The standard correspondence with parity games and automata extends to suitably defined countdown games and automata. However, unlike in the classical setting, the scalar fragment is provably weaker than the full vectorial calculus and corresponds to automata satisfying a simple syntactic condition. We establish some facts, in particular decidability of the model checking problem and strictness of the hierarchy induced by the maximal allowed nesting of our new operators.
\end{abstract}

\section{Introduction}

The modal $\mu$-calculus~\cite{Koz83} is a well-known logic for defining and verifying behavioural properties of state-and-transition systems. It extends propositional logic with basic next-step modalities and fixpoint operators to describe long-term behaviour. It is expressive enough to include other temporal logics such as CTL* as fragments, but it has good computational properties, and
its simple syntax and semantics makes it a convenient formalism to study.

%A simple example of a $\mu$-calculus formula is $\phi = \nu x.\diamond{\tau}x$, where $\tau$ comes from a fixed set of actions. Such a formula is interpreted on models that are directed graphs edge-labeled with actions. The formula $\diamond{\tau}\psi$ means ``there exists a $\tau$-labeled edge to a point where $\psi$ holds'', and $\nu$ is the greatest fixpoint operator, meaning that $\phi$ is interpreted as the greatest fixpoint of a function that, given a set $Z$ of points, returns the set $\diamond{\tau}Z$ of those points that have outgoing edges to points in $Z$. As a result, $\phi$ holds in a point if and only if there is an infinite path of $\tau$-labeled edges starting in that point.
%
The $\mu$-calculus has a straightforward inductively-defined semantics, but it is often useful to consider an alternative (but equivalent) semantics based on parity games. A formula $\phi$ together with a model $\M$ define a game between two players called $\adam$ and $\eve$. Positions in the game are of the form $(\point,\psi)$ where $\point$ is a point in $\M$ and $\psi$ is a subformula of $\phi$, and moves are defined so that $\eve$ has a winning strategy from $(\point,\phi)$ if and only if $\phi$ holds in $\point$. Among other advantages, the game-based semantics provides more efficient algorithms for model checking of $\mu$-calculus formulas than an inductive computation of fixpoints~\cite{CJKLS17}.

The model component can be abstracted away from parity games. Indeed, a formula $\phi$ itself gives rise to an alternating parity automaton $\A_\phi$ that recognizes models. The behaviour of an automaton on a model is defined in terms of a parity game, states of $\A_\phi$ are subformulas of $\phi$, and the transition relation is defined so that it accepts a model $\M$ rooted in a point $\point$ if and only if $\phi$ holds in $\point$. The advantage of this is that $\A_\phi$, while conceptually closer to a parity game, is a finite structure even if it is then applied to infinite models.

The modal $\mu$-calculus is a rather expressive formalism: it can define all bisimulation-invariant properties definable in monadic second-order logic (\MSO)~\cite{JW96}, such as ``there is an infinite path of $\tau$-labeled edges''. However, there are some properties of interest which are not definable even in \MSO. Notable examples include (un)boundedness properties such as ``for every number $n$, there is a path with at least  $n$ consecutive $\tau$-labeled edges''. An extension of \MSO{} called \MSOU{}, aimed at defining such properties, has been considered~\cite{Boj04}. However, the satisfiability problem of \MSOU{} turned out to be undecidable even for word models~\cite{BPT16}. Since the modal $\mu$-calculus is a fragment of \MSO{}, it is worthwhile to extend it with a mechanism for defining (un)boundedness properties, in the hope of retaining decidability.

In this paper we propose such an extension: the {\em countdown $\mu$-calculus} $\muMLC$. In addition to $\mu$-calculus operators, it features countdown operators $\mu^\alpha$ and $\nu^\alpha$ parametrized by ordinal numbers $\alpha$. Instead of least and greatest fixpoints, they define ordinal approximations of those fixpoints. Intuitively, while the meaning of classical $\mu$-calculus formulas $\mu x.\phi(x)$ and $\nu x.\phi(x)$ is defined by infinite unfolding of the formula $\phi$ until a fixpoint is reached, for $\mu^\alpha x.\phi(x)$ and $\nu^\alpha x.\phi(x)$ the unfolding stops after $\alpha$ steps (which makes a difference if $\alpha$ is smaller than the \emph{closure ordinal} of $\phi$). The classical fixpoint operators are kept but renamed to $\mu^\infty$ and $\nu^\infty$, to make clear the lack of any restrictions on the unfolding process.

An inductive definition of the semantics of countdown formulas is just as straightforward as in the classical case. With some more effort, we are able to formulate game-based semantics as well. We introduce {\em countdown games} and {\em countdown automata}, which are similar to parity games and alternating automata known from the classical setting, but are additionally equipped with counters that are decremented and reset by the two players according to specific rules. Intuitively, the counters say how many more times various ranks can be visited, in similar manner to the signatures introduced by Walukiewicz \cite[Section 3]{Wal01}. A player responsible for decrementing a counter may lose the game if the value of that counter is zero, just as a player responsible for finding the next position in a game may lose if there is no position to go to. The key mechanism of countdown games is implicit in \cite{HK17}, where the authors investigate a nonstandard semantics for the scalar fragment of the $\mu$-calculus equivalent to replacing every $\mu$ and $\nu$ by our countdown operators $\mu^\alpha$ and $\nu^\alpha$, respectively. However, the authors do not abstract from formulas in their definition of games, nor consider the full vectorial calculus that corresponds to automata.
% Moreover, only formulas using operators $\mu^\alpha$ and $\nu^\alpha$ for the same $\alpha$ are taken into account.

A correspondence between countdown formulas, automata and games is as tight as for the classical $\mu$-calculus. However, complications arise: the distinction between {\em vectorial} and {\em scalar} formulas, which in the classical case disappears to a large extent due to the so-called Beki\'c principle, now becomes pronounced. We prove that vectorial countdown calculus is more expressive than its scalar fragment. We also prove that the countdown operator nesting hierarchy of formulas is proper.

We conjecture that the satisfiability problem is decidable for $\muMLC$. Unfortunately, the lack of positional determinacy in countdown games prevents us from using proof techniques known from parity automata (where one can transform an alternating automaton into a nondeterministic one that guesses the positional strategy). Nevertheless, the existence of an automata model equivalent to logic is encouraging. Apart from allowing us to solve some fragments of the logic, it implies that $\muMLC$ does not share some of the troublesome properties of $\MSOU$ that result in undecidability. In particular, it can be used to show that all languages definable in $\muML$ have \emph{bounded topological complexity} (i.e. at most $\Sigma^1_2$, see \cite{Skrzyp16} for an introduction to topological methods in computer science). Since $\MSOU$ defines a $\Sigma^1_n$-complete language for every $n<\omega$ \cite[Theorem 2.1]{HS12}, \cite[Theorem 7]{Skrzyp16}, it follows that some $\MSOU$-definable languages are not expressible in $\muMLC$ (whether $\muMLC$-definability implies $\MSOU$-definability remains an open question). Since by \cite[Theorem 1.3]{BDGPS20bis} every logic closed under boolean combinations, projections and defining the language $U$ from Example \ref{Ex-Language-U} contains $\MSOU$, this means that our calculus is \emph{not closed under projections}. This is an arguably good news, as in the light of \cite[Theorem 1.4]{BKSZ20}, giving up closure under projections is the only way to go if one wants to design a decidable extension of $\MSO$ closed under boolean operations. Decidability of the weak variant $\WMSOU$ of $\MSOU$ over infinite words \cite{Boj11} and infinite (ranked) trees \cite{BT12} shows that such extensions are possible. In fact, both results are obtained by establishing a correspondence with equivalent automata models, namely deterministic max-automata \cite[Theorem 1]{Boj11} and nested limsup automata \cite[Theorem 2]{BT12}. Since the existence of accepting runs for such automata can be expressed in $\muMLC$, we get that $\muMLC$ contains $\WMSOU$ on infinite words and trees. The opposite inclusion is false (due to topological reasons), at least for the trees. The relation between $\muMLC$ and the $\omega B$-, $\omega S$- and $\omega BS$-automata of \cite{BC06} remains unclear, as these models do not admit determinization. Also, the relation between our logic and regular cost functions (see e.g. \cite{Colc13}) is less immediate than it could seem at first glance and requires further research.

\section{Preliminaries}\label{section 2}

\noindent{\bf Fixpoints.} Let $\Ord$ be the class of all ordinals, and $\Ord_\infty$ the class $\Ord$ extended with an additional element $\infty$ greater than all ordinals.

Knaster-Tarski theorem says that every monotonic function $F:A \to A$ on a complete lattice $A$ has the least and the greatest fixpoint, which we denote $F^\infty_\mu$ and $F^\infty_ \nu$. Moreover:
    \begin{itemize}
      \item $F^\infty_\mu$ is the limit of the increasing sequence $F^\alpha_\mu = \bigvee_{\beta<\alpha}F(F^\beta_\mu)$
      \item $F^\infty_ \nu$ is the limit of the decreasing sequence $F^\alpha_\nu = \bigwedge_{\beta<\alpha}F(F^\beta_\nu)$
    \end{itemize}  
  where $\alpha \in \Ord$ and $\bigvee,\bigwedge$ are the join and meet operations in $A$.

\medskip
\noindent{\bf Parity games.} 
A \emph{parity game} is played between two players $\eve$ and $\adam$ (or simply $\exists$ and $\forall$). It consists of a set of {\em positions} $V = V_\exists \sqcup V_\forall$ divided between both players, an edge relation $E \subseteq V \times V$, and a labeling $\rank:V \to \rangeRank$ for some finite linear order $\rangeRank=\rangeRank_\exists\sqcup\rangeRank_\forall$ divided between the two players. 

A {\em play} is a sequence of positions. After a play $\pi = v_1\ldots v_n\in V^*$, the owner of $v_n$ chooses $(v_n,v_{n+1})\in~E$ and the game moves to $v_{n+1}$. A player who has no legal moves loses immediately. To determine the winner of an infinite play, we look at the highest $r \in \rangeRank$ such that positions with rank $r$ appear infinitely often in the play, and the owner of $r$ loses. 

A \emph{strategy} for a player $P\in\{\exists,\forall\}$ is a partial map $\sigma:V^*V_P \to E$ that tells the player how to move. A play $v_1v_2\ldots$ is {\em consistent} with $\sigma$ if for every $n$ such that $v_n\in V_P$ we have $\sigma(v_1\ldots v_n)=v_{n+1}$. A strategy $\sigma$ is \emph{winning} from a position $v$ if every play that begins in $v$ and is consistent with $\sigma$ is a win for $P$. A strategy is \emph{positional} if $\sigma(\pi)$ depends only on the last position in $\pi$. Parity games are \emph{positionally determined}: if a player has a winning strategy from $v$ then (s)he has a winning positional strategy.

\medskip
\noindent{\bf Modal $\mu$-calculus.}
A model $\M$ for a fixed set $\Actions$ of atomic {\em actions} consists of a set of {\em points} $M\ni\point,\altpoint,\cdots$ together with a binary relation ${\arrowAction}\subset M\times M$ for every $\action\in\Actions$. 

Formulas of the modal $\mu$-calculus $\muML$ are given by the grammar:
\begin{equation}\label{eq:classicalmu}
  \phi ::=  x\ |\ \top\ |\ \bot\ |\ \phi_1\vee\phi_2\ |\ \phi_1\wedge\phi_2\ |\ \mu x.\phi\ |\ \nu x.\phi\ |\ \diamond{\action}\phi\ |\ \boxmodal{\action}\phi
\end{equation}
where $x$ ranges over a fixed infinite set $\Var$ of variables and $\action\in\Actions$. Given a valuation $\val:\Var\to\powerset{M}$, the semantics $\semantics{\phi}^\val\subseteq M$ for all formulas $\phi$ is defined inductively, with $\mu x.\phi$ and $\nu x.\phi$ denoting the least and greatest fixpoints, respectively, of the monotonic function $H\mapsto\semantics{\phi}^{\val[x\mapsto H]}$ on the complete lattice $\powerset{M}$. More details can be found e.g.~in~\cite{AN01,Ven20}, but they can also be discerned from Section~\ref{sec:ctdmu} below, where the semantics of countdown $\mu$-calculus is presented in detail.

The above syntax does not include negation, but $\mu$-calculus formulas are semantically closed under negation. For every formula $\phi$ there is a formula $\widetilde{\phi}$ that acts as the negation of $\phi$ on every model, defined by induction in a straightforward way:
\begin{align}\label{eq:neg}
	\widetilde{\phi_1 \vee \phi_2} = \widetilde{\phi_1} \wedge \widetilde{\phi_2} ,\qquad
	 \widetilde{\diamond{\action}\phi} = \boxmodal{\action}\widetilde{\phi}, \qquad
	 \widetilde{\mu x.\phi} = \nu x.\widetilde{\phi}, \qquad \text{etc.}
\end{align}

\medskip
\noindent{\bf Vectorial $\mu$-calculus.} A syntactically richer version of the modal $\mu$-calculus admits mutual fixpoint definitions of multiple properties, in formulas such as
$
	\mu_1 (x_1,x_2).(\phi_1,\phi_2),
$ 
where variables $x_1$ and $x_2$ may occur both in $\phi_1$ and $\phi_2$. Given a valuation $\val$ as before, this formula is interpreted as the least fixpoint of the monotonic function $(H_1,H_2)\mapsto (\semantics{\phi_1}^{\val[x_i\mapsto H_i]},\semantics{\phi_1}^{\val[x_i\mapsto H_i]})$ on the complete lattice $\powerset{M}^2$; the resulting pair of sets is then projected to the first component as dictated by the subscript in $\mu_1$. Tuples of any size are allowed. This {\em vectorial} calculus is expressively equivalent to the scalar version described before, thanks to the so-called {\em Beki\'c principle} which says that the equality:
\begin{align}\label{eq:bekic}
  \mu
  \begin{pmatrix}
    x_1 \\
    x_2
  \end{pmatrix}
  .
  \begin{pmatrix}
    f_1(x_1,x_2) \\
    f_2(x_1,x_2)
  \end{pmatrix}
  =
  \begin{pmatrix}
    \mu x_1.f_1(x_1,\ \mu x_2.f_2(x_1,x_2)) \\
    \mu x_2.f_2(\mu x_1.f_1(x_1,x_2),\ x_2)
  \end{pmatrix}
\end{align}
holds for every pair of monotone operations $f_i:A_1\times A_2 \to A_i$ on complete lattices $A_1,A_2$, and similarly for the greatest fixpoint operator $\nu$ in place of $\mu$. %This principle will fail in the countdown setting.

\section{Countdown $\mu$-calculus}\label{sec:ctdmu}

We now introduce the \emph{countdown $\mu$-calculus} $\muMLC$. We begin with the scalar version.

\subsection{The scalar fragment}

As before, fix an infinite set $\Var$ of variables and a set $\Actions$ of actions. The syntax of \emph{(scalar) countdown $\mu$-calculus} is defined as follows:
\begin{equation}\label{eq:scalarmu}
  \phi ::=  x\ |\ \top\ |\ \bot\ |\ \phi_1\vee\phi_2\ |\ \phi_1\wedge\phi_2\ |\ \mu^\alpha x.\phi\ |\ \nu^\alpha x.\phi\ |\ \diamond{\action}\phi\ |\ \boxmodal{\action}\phi
\end{equation}
for $x\in\Var$, $\action \in \Actions$ and $\alpha \in \Ord_\infty$; the presence of ordinal numbers $\alpha$ is the only syntactic difference with~\eqref{eq:classicalmu}. A formula with no free variables is called a \emph{sentence}. In case $|\Actions|=1$, we may skip the labels and write $\Diamond$ and $\Box$ instead of $\diamond{\action}$ and $\boxmodal{\action}$. In statements that apply both to least and greatest fixpoints, we will sometimes use $\eta^\alpha$ to denote either $\mu^\alpha$ or $\nu^\alpha$.

Given a model $\M$, for every valuation $\val: \Var \to \powerset{M}$, the \emph{semantics} $\semantics{\phi}^\val \subset M$ is defined inductively as follows:
\begin{align*}
    \semantics{x}^\val = \val(x);&\\
    \semantics{\top}^\val = M
    \text{\ \ \ and\ \ \ }&
    \semantics{\bot}^\val = \emptyset\\
    \semantics{\phi_1 \vee \phi_2}^\val = \semantics{\phi_1}^\val \cup \semantics{\phi_2}^\val
    \text{\ \ \ and\ \ \ }&
    \semantics{\phi_1 \wedge \phi_2}^\val = \semantics{\phi_1}^\val \cap \semantics{\phi_2}^\val;\\
    \semantics{\diamond{\action}\phi}^\val = \{\point \in M\ |\ \exists_{\altpoint \in \semantics{\phi}^\val}\ \point \arrowAction \altpoint\}
    \text{\ \ \ and\ \ \ }&
    \semantics{\boxmodal{\action}\phi}^\val = \{\point \in M\ |\ \forall_{\altpoint \in \semantics{\phi}^\val}\ \point \arrowAction \altpoint\};\\
    \semantics{\mu^\alpha x.\phi}^\val = F^\alpha_\mu
    \text{\ \ \ and\ \ \ }&
    \semantics{\nu^\alpha x.\phi}^\val = F^\alpha_\nu
\end{align*}
where in the last clause $F(H)=\semantics{\phi}^{\val[x \mapsto H]}$. %Since such $F:\powerset{M}\to\powerset{M}$ is a monotone operation (which follows by a straightforward induction on $\phi$), by the Knaster-Tarski Theorem (\ref{Thm-Knaster-Tarski}), the operators $\mu^\infty$ and $\nu^\infty$ correspond to its least and greatest fixpoints, respectively -- and the operators $\mu^\alpha$ and $\nu^\alpha$ for $\alpha \in \Ord$, to the ordinal approximations of these fixpoints. 
We will skip the index $\val$ if it is immaterial or clear from the context.

This obviously contains the classical $\mu$-calculus, but is capable of capturing \emph{boundedness} and \emph{unboundedness} properties which are not expressible in the classical setting:

\begin{example}\label{ex:aaaa}
  For $|\Actions|=1$,  consider the formula $\nu^\alpha x . \Diamond x$. In a model $\M$, for $\alpha<\omega$ the set $\semantics{\nu^\alpha x . \Diamond x}$ consists of the points from which there is a path of length at least $\alpha$. Hence, $\nu^\omega x . \Diamond x$ holds in a point if there are arbitrarily long finite paths starting from there.
\end{example}

%A major advantage of the classical modal $\mu$-calculus is the tight connection between fixpoint formulas, parity automata, and games. To lift this correspondence to the countdown setting it is natural to consider a slightly richer syntax, with \emph{vectorial} rather than \emph{scalar} fixpoint operators.

\subsection{The vectorial calculus}

The (full) \emph{countdown $\mu$-calculus} is defined as for its scalar fragment, except that fixpoint operators act on tuples (vectors) of formulas rather than on single formulas.
\begin{definition}
  The syntax of \emph{countdown $\mu$-calculus} is given as follows:
  \[
    \phi ::=  x\ |\ \top\ |\ \bot\ |\ \phi_1\vee\phi_2\ |\ \phi_1\wedge\phi_2\ |\  \mu^\alpha_i \overline{x}.\overline{\phi}\ |\ \nu^\alpha_i \overline{x}.\overline{\phi}\ |\ \diamond{\action}\phi\ |\ \boxmodal{\action}\phi
  \]
for $1\leq i \leq n<\omega$, $\overline{x} = \langle x_1, ..., x_n\rangle \in \Var^n$, $\overline{\phi} = \langle \phi_1, ..., \phi_n \rangle$ a tuple of formulas, $\action \in \Actions$ and $\alpha \in \Ord_\infty$.
\end{definition}

%The semantics is then defined accordingly.

\begin{definition}\label{def:vec-semantics}
  The meaning $\semantics{\phi}^\val\subset M$ of a formula $\phi$ in a model $\M$ under valuation $\val$ is defined by induction the same way as for the scalar formulas except for the operators $\mu^\alpha_i$ and $\nu^\alpha_i$, in which case:
  \[
    \semantics{\mu^\alpha_i \overline{x}.\overline{\phi}}^\val = \pi_i(F^\alpha_\mu) \text{\ \ \ and\ \ \ } \semantics{\nu^\alpha_i \overline{x}.\overline{\phi}}^\val = \pi_i(F^\alpha_\nu)
  \]
  where the monotone map $F: {(\powerset{M})}^n \to {(\powerset{M})}^n$ is given as:
  % \[
  %   F
  %   \begin{pmatrix}
  %     H_1\\
  %     \vdots\\
  %     H_n
  %   \end{pmatrix}
  %   =
  %   \begin{pmatrix}
  %     \semantics{\phi_1}^{\val'}\\
  %     \vdots\\
  %     \semantics{\phi_n}^{\val'}
  %   \end{pmatrix}
  % \]
  \[
    F(H_1, ..., H_n) = (\semantics{\phi_1}^{\val'}, ..., \semantics{\phi_n}^{\val'})
  \]
  for $\val' = \val[x_1 \mapsto H_1, ..., x_n \mapsto H_n]$ and $\pi_i:{(\powerset{M})}^n \to \powerset{M}$ is the $i$-th projection.
\end{definition}
Note that operators $\mu^\infty$ and $\nu^\infty$ are equivalent to $\mu$ and $\nu$ from the classical $\mu$-calculus. 
%We therefore omit the subscript from now on. 
Furthermore, for every ordinal $\alpha$, the formula $\mu^{\alpha+1}_i\overline{x}.\overline{\psi}$ is equivalent to \[
\psi_i[x_1 \mapsto \mu^\alpha_1\overline{x}.\overline{\psi},\ldots, x_n \mapsto \mu^\alpha_n\overline{x}.\overline{\psi}]
\]
and similarly for $\nu^{\alpha+1}$. As a result, without loss of generality we may assume that in countdown operators $\mu^\alpha$ and $\nu^\alpha$ only limit ordinals $\alpha$ are used.

The countdown $\mu$-calculus is semantically closed under negation in the same way as the classical calculus, extending~\eqref{eq:neg} with the straightforward
$\widetilde{\mu^\alpha_i \overline{x}.\overline{\phi}} = \nu^\alpha_i \overline{x}.\overline{\widetilde{\phi}}$ and  $\widetilde{\nu^\alpha_i \overline{x}.\overline{\phi}} = \mu^\alpha_i \overline{x}.\overline{\widetilde{\phi}}$.

In Section~\ref{Sec-vectorial-scalar} we will compare the expressive power of the vectorial and scalar countdown $\mu$-calculus in detail. For now, let us show that Beki\'c principle~\eqref{eq:bekic} fails for countdown operators:

\begin{example}\label{Ex-Language-U}
  An infinite word $W\in\Gamma^\omega$ over the alphabet $\Gamma=\{\letterA,\letterB\}$ can be seen as a model for $\Actions=\Gamma$ with $\omega$ as the set of points and with transition relations defined by:
  \[
  	n\arrowAction m \iff m=n+1 \text{ and }W_n=\action.
  \]
For every regular language $K\subset \Gamma^*$ and $x \in \Var$, it is straightforward to define a fixpoint formula (in the classical $\mu$-calculus, so without countdown operators) $\diamond{K}x$ that holds in a point $n$, for a valuation $\val$, if and only if there exists a word $w \in K$ and a path in $W$ labelled with $w$ that starts in $n$ and ends in a point that belongs to $\val(x)$. Then, the formula: 
  \[
    \phi = \nu^\omega_1(x_1, x_2) . (\diamond{\Gamma^*}x_2, \diamond{\letterA} x_2)
  \]
  is true in a word $W$ iff it contains arbitrarily long blocks of consecutive $\letterA$'s. To see this, observe that at the $i$-th step of approximation: (i) the second component ($x_2$) contains a point $n$ iff the next $i$ transitions are all labelled with $\letterA$, and (ii) the first component ($x_1$) contains a point $n$ iff the second component contains at least one point after $n$.
  
  However, the following scalar formula constructed by analogy to the Beki\'c principle:
  \[
    \psi = \nu^\omega x_1 . \diamond{\Gamma^*}(\nu^\omega x_2. \diamond{\letterA} x_2)
  \]
is equivalent to $\diamond{\Gamma^*}( \nu^\omega x_2. \diamond{\letterA} x_2)$, and the formula under $\diamond{\Gamma^*}$ holds in a point iff all the future transitions from that point are labelled with $\letterA$. Thus, $\psi$ holds (in any point) iff the word $W$ is of the form $\Gamma^* \letterA^\omega$, and so $\psi$ is not equivalent to $\phi$.
\end{example}

\section{Countdown Games}

The notion of a countdown game extends that of a parity game. As for parity games, it assumes a fixed finite linear order of ranks $\rangeRank=\rangeRank_\exists\sqcup\rangeRank_\forall$. In addition, we fix a subset $\nonstandardRanks\subseteq\rangeRank$ of {\em nonstandard} ranks; at positions with these ranks countdowns will occur. Denote $\nonstandardRanks_\exists=\nonstandardRanks\cap\rangeRank_\exists$ and $\nonstandardRanks_\forall=\nonstandardRanks\cap\rangeRank_\forall$. 

%\begin{definition}
  A \emph{countdown game} consists of a set of {\em positions} $V = V_\exists \sqcup V_\forall$ divided between players $\eve$ and $\adam$, an edge relation $E\subseteq V \times V$, a labelling $\rank: V \to \rangeRank$, and an initial counter valuation $\bounds: \nonstandardRanks \to \Ord$. Each nonstandard rank has an associated counter.
  
Each game {\em configuration} consists of a position $v \in V$ together with a counter valuation $\ctr: \nonstandardRanks \to \Ord$. We consider {\em positional} and {\em countdown} configurations, denoted respectively $\pconf{v,\ctr}$ and $\cconf{v,\ctr}$, with the following moves allowed:
  \begin{itemize}
    \item From a positional configuration $\pconf{v, \ctr}$, the owner of $v$ chooses an edge $(v,w) \in E$ and the game proceeds from the countdown configuration $\cconf{w, \ctr}$;
    \item From a countdown configuration $\cconf{v, \ctr}$, the owner of $r=\rank(v)$ chooses a counter valuation $\ctr'$ such that:
    \begin{itemize}
    	\item $\ctr'(r') = \bounds(r')$ for $r'<r$,
	\item $\ctr'(r) < \ctr(r)$ (if $r$ is nonstandard),
	\item $\ctr'(r') = \ctr(r')$ for $r'>r$,
    \end{itemize}
%   \[
%      \ctr'(r') =
%      \begin{cases}
%      \bounds(r') & \text{ if } r' < r\\
%      \alpha & \text{ for some  } \alpha < \ctr(r) \text{ if } r'=r\\
%      \ctr(r') & \text{ if } r' > r,
%      \end{cases}
%    \]
    and the game proceeds from the positional configuration $\pconf{v,\ctr'}$. In words: counters for ranks lower than $r$ are reset, the counter for $r$ (if any) is decremented, and counters for higher ranks are left unchanged. Note that if $r$ is standard then there is no real choice here: $\ctr'$ is determined by $\ctr$. And if $r$ is nonstandard then the move amounts to choosing an ordinal $\alpha<\ctr(r)$.
  \end{itemize}
  
Every play of the game alternates between positional and countdown configurations, and in each move only one component of the configuration is modified. Therefore, although a play is formally a sequence of configurations, it can be more succinctly represented as an alternating sequence of positions and counter valuations:
\begin{align}\label{eq:play}
	\pi = v_1\ctr_2v_2\ctr_2v_3\ctr_3\cdots
\end{align}
This has the same length as the sequence of configurations, and we will call it the length of the play. A \emph{phase} of a game is a set of its finite plays that is convex with respect to the prefix ordering.
Given a phase $\phase$ and a play $\pi\in\phase$, we denote by $\phase_\pi$ the subset of $\phase$ consisting of all the plays having $\pi$ as a prefix.
 
In any configuration, if the player responsible for making the next move is stuck, (s)he looses immediately. Otherwise, in an infinite play, the owner of the greatest rank appearing infinitely often looses, as in parity games. Strategies and winning strategies are defined as for classical parity games, as partial functions from finite plays to moves.

Given configuration $\gamma$, we denote the game \emph{initialized in the configuration $\gamma$} by $\game,\gamma$. The default initial counter assignment is $\bounds$ and the default initial mode is the positional one, meaning that $\game,v$ stands for $\game,\pconf{v,\bounds}$. %Sometimes we consider games with a fixed initial position $v_I\in V$.
%\end{definition}
%

Note that the only way the counters may interfere with a play is when a counter has value $0$ and so its owner cannot decrement it. It is therefore beneficial for a player to have greater ordinals at his/her counters.

Countdown games are not positionally determined, in the sense that the players may need to look at the counter values in order to choose a winning move (although they are \emph{configurationally determined}, since a countdown game $\game$ can be seen as a parity game with configurations of $\game$ as its positions). Later, we will show how to upgrade strategies to enforce a very limited form of counter-independence.

\section{Countdown Automata}

Countdown automata are a stepping stone between formulas and games. A countdown formula will define an automaton, which will then recognize a model in terms of a countdown game. Since formulas can have free variables, for technical reasons we will also consider automata with free variables. These variables resemble terminal states in that they can be targets of transitions, but no transitions originate in them, and whether they accept or not depends on an external valuation.

\begin{definition}
  A \emph{countdown automaton} consists of:
  \begin{itemize}
    \item a finite set of states $Q = Q_\exists \sqcup Q_\forall$ divided between two players;
    \item an initial state $q_I \in Q$;
    \item a transition function $\delta: Q \to \powerset{Q\sqcup \Var} \sqcup (\Actions \times (Q\sqcup \Var))$
    (we call the left part \emph{$\epsilon$-transitions} and the right one \emph{modal transitions});% and for $\delta(q) = (\action,p)$ denote $\delta(q)=\diamond{\action}p$ if $q\in V_\exists$ and $\delta(q)=\boxmodal{\action}p$ otherwise);
    \item an assignment of ranks $\rank : Q \to \rangeRank$ and an assignment of initial counter values $\bounds:\nonstandardRanks \to \Ord$, as in a countdown game.
\end{itemize}
%Moreover, for technical reasons we will sometimes consider \emph{automata with free variables} i.e. with $\delta: Q \to \powerset{Q\sqcup \Var} \sqcup (\Actions \times (Q\sqcup \Var))$.
\end{definition}

The language of an automaton is defined in terms of a countdown game, analogously to parity games and parity automata.

\begin{definition}\label{def:semantic-game}
  Fix an automaton $\A = (Q, q_I, \delta, \rank, \bounds)$. Given a model $\M$, a valuation $\val:\Var\to\powerset{M}$ and a point $\point_I \in M$, we define the semantic game $\semanticGame^\val(\A)$ to be the countdown game $(V,E,\rank', \bounds %(v_I,q_I)
  )$ where positions are of the form $V = M \times (Q\sqcup\Var)$ %(or $V=M\times(Q\sqcup\Var)$ if $\A$ uses variables) 
  and the edge relation $E$ is defined as follows. In a position $(\point,q)$ for $q\in Q$:
    \begin{itemize}
      \item if $\delta(q)\subset Q\sqcup\Var$, outgoing edges (called $\epsilon$-edges, or $\epsilon$-moves) are $
        \{((\point,q),(\point,z))\ |\ z\in\delta(q)\}$,
      \item if $\delta(q) = (\action, p)$, outgoing edges (modal edges, modal moves) are
      $
        \{((\point,q),(\altpoint,p))\ |\ \point \arrowAction \altpoint\}.
      $
    \end{itemize}
    There are no outgoing edges from positions $(\point,x)$ for $x\in \Var$.
    
    For $q\in Q$, the owner of the position $(\point,q)$ is the owner of the state $q$, and $\rank'(\point,q)=\rank(q)$. For $x\in \Var$, the position $(\point,x)$ belongs to $\adam$ if $\point\in\val(x)$ and to $\eve$ otherwise. The rank $\rank'(\point,x)$ can be set arbitrarily, as it does not affect the outcome of the game. The initial counter assignment $\bounds$ is kept the same. 
    
    The language $\semantics{\A}^\val\subset M$ of an automaton $\A$ is the set of all points $\point\in M$ for which the configuration $\pconf{(\point,q_I),\bounds}$ in the game $\semanticGame^\val(\A)$ is winning for $\eve$.
\end{definition}

It is worth to mention that although in general countdown games are not positional, one can show a much weaker but still useful fact: in the particular case of semantic games, the winning player always has a strategy that does not look at the counters in the initial \emph{pre-modal} phase of the game (that is, \emph{before the first modal move}). The precise statement can be found in Proposition~\ref{Prop-PreModal-CtrIndep} in Appendix~\ref{app:guarded}; its game-theoretic core is Proposition~\ref{Prop-Finite-CtrIndep} in Appendix~\ref{app:games}.

The countdown calculus and countdown automata have the same expressive power, i.e. there are language-preserving translations $\phi \mapsto \A_\phi$ and $\A \mapsto \phi_\A$ between formulas and automata. As in the classical setting, the link between formulas and automata is very useful in establishing facts about the logic. For example, one can use game semantics to show that every formula of the standard $\muML$ can be transformed into an equivalent guarded one. Thanks to the equivalence between \emph{countdown} formulas and \emph{countdown} automata, the same is true for $\muMLC$, as stated in Proposition~\ref{Prop-Guardedness} in Appendix~\ref{app:guarded}.

We will now explain the translations between logic and automata in turn.

\subsection{From formulas to automata -- Game Semantics}\label{sec:formtoaut}

%\begin{theorem}
    Every countdown formula $\phi\in\muMLC$ gives rise to a countdown automaton $\A_\phi$ such that $\semantics{\phi}^\val=\semantics{\A_\phi}^\val$ for every model $\M$ and valuation $\val$.
%\end{theorem}
%The construction is rather straightforward:
%\begin{definition}
   Specifically, given a formula $\phi$ (with some free variables), we define an automaton $\A_\phi=(Q,q_I,\delta,\rank,\bounds)$ (over the same free variables) as follows:
    \begin{itemize}
        \item $Q=\SubFor(\phi)-\FreeVar(\phi)$ is the set of all subformulas other than the free variables of $\phi$ (\emph{without} identifying different occurrences of identical subformulas, i.e., here a subformula means a path in the syntactic tree of $\phi$ from the root of $\phi$ to the root node of the subformula). Ownership of a state in $Q$ depends on the topmost connective, with $\eve$ owning $\vee$ and $\diamond{\action}$ and $\adam$ owning $\wedge$ and $\boxmodal{\action}$; ownership of fixpoint subformulas, countdown subformulas and variables can be set arbitrarily as it will not matter;
        \item $q_I=\phi$;
        \item the transition function is defined by cases:
        \begin{itemize}
        		\item $\delta(\theta_1\vee\theta_2)=\delta(\theta_1\wedge\theta_2) = \{\theta_1,\theta_2\}$,
		\item $\delta(\diamond{\action}\theta)=\delta(\boxmodal{\action}\theta)=(\action,\theta)$,
		\item $\delta(\eta^\alpha_i \overline{x}.\overline{\theta})=\{\theta_i\}$ (for $\eta=\mu$ or $\eta=\nu$),
		\item $\delta(x) = \{\theta_i\}$, where $\eta^\alpha_j (x_1, ..., x_n).(\theta_1, ..., \theta_n)$ is the (unique) subformula of $\phi$ binding $x$ with $x=x_i$.
        \end{itemize}
%        \[
%          \delta(\psi)=
%          \begin{cases}
%              \{\theta_1,\theta_2\} & \text{if $\psi=\theta_1\vee\theta_2$ or $\psi=\theta_1\wedge\theta_2$}\\
%              (\action,\theta) & \text{if $\psi=\diamond{\action}\theta$ or $\psi = \boxmodal{\action}\theta$}\\
%              \{\theta_i\} &
%              \parbox[t]{9cm}{
%                if either:\\
%                (i) $\psi=\eta^\alpha_i \overline{x}.\overline{\theta}$ (where $\eta\in\{\mu,\nu\}$) or\\
%                (ii) $\psi=x_i \notin \FreeVar(\phi)$ and $\eta^\alpha_j (x_1, ..., x_n).(\theta_1, ..., \theta_n)$ is the (unique) subformula of $\phi$ binding $x_i$;}
%          \end{cases}  
%        \]
        \item For the ranking function, assume that the lowest rank in $\rangeRank$ is standard and call it $0$ (ownership of this rank does not matter). Then let $\rank$ assign $0$ to all subformulas of $\phi$ except for immediate subformulas of fixpoint operators. To those, assign ranks in such a way that subformulas have strictly smaller ranks than their superformulas, and for every subformula $\eta^\alpha_i \overline{x}.\overline{\phi}$:
        \begin{itemize}
            \item all formulas in the tuple $\overline{\phi}$ have the same rank $r$,
            \item $r$ belongs to $\eve$ if $\eta = \mu$ and to $\adam$ if $\eta=\nu$, and
            \item if $\alpha=\infty$ then $r$ is standard, otherwise it is nonstandard and $\bounds(r)=\alpha$.
        \end{itemize}
    \end{itemize}
    We denote $\semanticGame^\val(\phi)=\semanticGame^\val(\A_\phi)$.
%\end{definition}

%As the reader may expect, the above translation correctly characterizes the semantics of formulas.

\begin{theorem}[Adequacy]\label{Thm-Adequacy-Countdown}
  For every model $\M$ and valuation $\val$, $\semantics{\phi}^\val = \semantics{\A_\phi}^\val$.
\end{theorem}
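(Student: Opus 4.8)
The plan is to prove the adequacy theorem by induction on the structure of the formula $\phi$, with the key invariant being a statement relating winning positions in the semantic game $\semanticGame^\val(\A_\phi)$ to membership in $\semantics{\phi}^\val$, formulated uniformly over \emph{all} counter values at the nonstandard ranks created by $\phi$, not just the initial assignment $\bounds$. Concretely, I would prove: for every point $\point$, $\point \in \semantics{\phi}^\val$ if and only if $\eve$ wins $\semanticGame^\val(\A_\phi)$ from the configuration $\pconf{(\point, q_I), \bounds}$ — but the induction hypothesis needs to be strengthened so that, at a fixpoint subformula $\eta^\alpha_i \overline{x}.\overline{\phi}$ with associated (nonstandard) rank $r$, the relevant correspondence holds with $\semantics{\mu^\beta_i \overline{x}.\overline{\phi}}^\val$ (the $\beta$-th approximant) on the logic side matched to the game initialized with counter value $\beta$ at $r$ on the game side. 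This is the natural bridge between the ordinal-approximation semantics of Section~\ref{sec:ctdmu} and the decrementing-counter mechanics of Section~4.

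The core of the argument is the fixpoint case. For a subformula $\psi = \eta^\alpha_i \overline{x}.\overline{\phi}$, a play of the semantic game that stays ``inside'' $\psi$ repeatedly passes through the states $\overline{\phi}$ and the bound variables $x_1, \dots, x_n$, and each such loop forces one visit to rank $r$ and hence (if $r$ is nonstandard) one strict decrease of the counter $\ctr(r)$; when the counter reaches $0$ the responsible player is stuck. For $\eta = \nu$ the rank $r$ belongs to $\adam$, and I would show by transfinite induction on $\beta$ that $\eve$ wins from $\pconf{(\point,\psi), \ctr[r \mapsto \beta]}$ iff $\point \in \pi_i(F^\beta_\nu)$: the successor step uses the equivalence $\mu^{\beta+1}_i \overline{x}.\overline{\phi} \equiv \phi_i[x_j \mapsto \mu^\beta_j \cdots]$ noted in the excerpt (and its $\nu$-analogue), unfolding one layer and invoking the induction hypothesis on the smaller formula $\phi_i$ with the updated valuation $\val[x_j \mapsto \semantics{\eta^\beta_j \cdots}]$; the limit step matches the meet $\bigwedge_{\beta < \lambda}$ with $\adam$'s freedom to pick any $\beta < \lambda$ as the new counter value. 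The case $\eta = \mu$ is dual, with $\eve$ owning $r$, and the case $\alpha = \infty$ (standard rank) is exactly the classical $\mu$-calculus adequacy argument, where the parity condition at $r$ — rather than a counter — enforces the least/greatest fixpoint. For the non-fixpoint connectives ($\vee, \wedge, \diamond\action, \boxmodal\action, \top, \bot$, free variables) the argument is the standard one-move unfolding, noting that the ownership assignment makes $\eve$ choose disjuncts and witnesses and $\adam$ choose conjuncts and universal successors, and that a position $(\point,x)$ for a free variable $x$ is winning for $\eve$ exactly when $\point \in \val(x)$ by construction.

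There is a subtlety in how the counters for \emph{several} nested nonstandard ranks interact: when a play moves from an inner fixpoint subformula back out to an enclosing one, the move at the enclosing rank $r$ resets all counters $\ctr(r') = \bounds(r')$ for $r' < r$, so re-entering the inner fixpoint restarts its approximation from the top — which is precisely what the iterated semantics demands, since each fresh unfolding of the outer fixpoint re-evaluates the inner one from scratch. To handle this cleanly I would state the induction hypothesis for an arbitrary ``context'' counter valuation on ranks strictly above those introduced by $\phi$, held fixed throughout, and only the ranks internal to $\phi$ governed by the hypothesis; the reset rule guarantees that when we descend into a subformula we may indeed assume its internal counters start at $\bounds$. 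I expect this bookkeeping — tracking which counters are ``live'', aligning the reset-on-lower-ranks rule with the restart-of-approximants phenomenon, and making the nested transfinite induction precise — to be the main obstacle, whereas the individual move-by-move simulations are routine. A secondary point worth care is that the semantic game's plays alternate positional and countdown configurations, so ``one loop through the fixpoint states'' corresponds to one countdown move at $r$; I would verify that every infinite play either eventually stays within some fixpoint block (so its outcome is decided by the counter/parity at that block's rank) or passes through a modal move infinitely often, and in the latter case reduces to the successors via the modal transitions, so the recursion is well-founded on the syntax.
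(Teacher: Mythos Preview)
Your proposal is correct and follows essentially the same approach as the paper: an outer structural induction on $\phi$, with the fixpoint case handled by an inner transfinite induction on the ordinal, reducing one unfolding of $\eta^\alpha_i\overline{x}.\overline{\phi}$ to the structural hypothesis on the body $\phi_i$ under the valuation that assigns the $\beta$-approximants to the bound variables. The one simplification the paper makes over your plan is that it avoids the ``context counter valuation'' bookkeeping entirely: rather than strengthening the induction hypothesis to quantify over counter values at outer ranks, it applies the plain theorem statement to $\phi_j$ \emph{with $x_1,\dots,x_n$ treated as free variables} under the valuation $\val_\beta = \val[x_j\mapsto H_j^\beta]$, and then observes that the game $\semanticGame^{\val_\beta}(\phi_j)$ coincides with the residual game in $\semanticGame^\val(\mu^\alpha_i\overline{x}.\overline{\phi})$ after the first countdown move, until a bound variable is reached --- at which point the inner transfinite hypothesis closes the loop.
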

\begin{proof} As with the classical $mu$-calculus, the proof proceeds by induction on the complexity of the formula. The only new cases of $\mu^\alpha\overline{x}.\overline{\phi}$ and $\nu^\alpha\overline{x}.\overline{\phi}$ are proven by transfinite induction on $\alpha$. For the details, see Appendix~\ref{app:formtoaut}.
\end{proof}

\begin{example}
For $\Actions=\{\action\}$, consider the formula $\phi = \nu^\omega x . \Diamond x$ from Example~\ref{ex:aaaa}. The automaton $\A_\phi$ has three states: $Q = \{\phi,{\Diamond x},x\}$, with $\phi$ the initial state, and the transition function comprises two deterministic $\epsilon$-transitions and one modal transition:
\[
	\delta(\phi) = \{{\Diamond x}\}, \qquad
	\delta({\Diamond x}) = (\tau,x), \qquad
	\delta(x) = \{{\Diamond x}\}.
\]
The state ${\Diamond x}$ is owned by $\eve$; ownership of the other two states does not matter. The automaton uses two ranks, $0<1$, where $0$ is standard and $1$ is nonstandard, assigned to states by:
$\rank(\phi)=\rank(x)=0$ and $\rank({\Diamond x}) = 1$.
Rank $1$ is owned by $\adam$; ownership of rank $0$ does not matter. (Note how the state ${\Diamond x}$ is owned by $\eve$, but its rank is owned by $\adam$). The initial counter value is $\bounds(1)=\omega$.

Now consider any model $\M$. Since $\Actions$ has only one element, $\M$ is simply a directed graph. The semantic game $\semanticGame(\phi)$ on $\M$ ($\phi$ has no free variables, so neither has $\A_\phi$ and we need not consider valuations $\val$) has positions of the form $(\point,q)$ where $\point \in M$ and $q\in Q$, with ownership and rank inherited from $q$. Edges are of the form:
\begin{itemize}
\item $((\point,\phi),(\point,{\Diamond x}))$ and $((\point,x),(\point,\phi))$ -- the $\epsilon$-edges,
\item $((\point,{\Diamond x}),(\altpoint,x))$ such that $\point\to\altpoint$ is an edge in $\M$ -- the modal edges.
\end{itemize}
Configurations of the game arise from positions together with counter valuations; there is only one nonstandard rank, so a counter valuation is simply an ordinal. 

For a point $\point\in\M$, the default initial configuration of the game is the positional configuration $\pconf{(\point,\phi),\omega}$. A play that begins in this configuration proceeds as follows:
\begin{enumerate}
\item The first move is deterministic, to the countdown configuration $\cconf{(\point,{\Diamond x}),\omega}$.
\item $\adam$, as the owner of the rank of ${\Diamond x}$, makes the next move: he chooses a number $k<\omega$, and the games moves to the positional configuration $\pconf{(\point,{\Diamond x}),k}$.
\item $\eve$ owns the position, so she makes the next move: she chooses a point $\altpoint\in M$ such that $\point\arrowAction\altpoint$, and the game moves to the countdown configuration 
$\cconf{(\altpoint,x),k}$.
\item The rank of $x$ is standard, so in the next move the counter does not change and the game moves to $\pconf{(\altpoint,x),k}$. The next move is also deterministic, to the countdown configuration
$\cconf{(\altpoint,{\Diamond x}),k}$. The game then goes back to step 2.~above, with $k$ in place of $\omega$.
\end{enumerate}
From this it is clear that $\eve$ wins from $\pconf{(\point,\phi),\omega}$ if and only if $\M$ has arbitrarily long paths that begin in $\point$, as stated in Example~\ref{ex:aaaa}.
\end{example}

\subsection{From automata to formulas}\label{sec:auttoform}

\begin{theorem}\label{thm:auttoform}
    For every countdown automaton $\A$ there exists a countdown formula $\phi_\A$ s.t. $\semantics{\A}^\val=\semantics{\phi_\A}^\val$ for every model $\M$ and valuation $\val$.
\end{theorem}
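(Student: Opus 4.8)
The plan is to adapt the classical translation of alternating parity automata into the vectorial $\mu$-calculus, carrying along the countdown data. Fix $\A=(Q,q_I,\delta,\rank,\bounds)$, with ranks $\rangeRank=\rangeRank_\exists\sqcup\rangeRank_\forall$ and nonstandard ranks $\nonstandardRanks\subseteq\rangeRank$. Introduce a fresh variable $x_q$ for each $q\in Q$, and write $x_z:=z$ for $z\in\Var$ occurring in $\A$. To each $q\in Q$ assign a \emph{body} $\theta_q$ by inspecting $\delta(q)$: if $\delta(q)\subseteq Q\sqcup\Var$ put $\theta_q=\bigvee_{z\in\delta(q)}x_z$ for $q\in Q_\exists$ and $\theta_q=\bigwedge_{z\in\delta(q)}x_z$ for $q\in Q_\forall$ (so $\delta(q)=\emptyset$ yields $\bot$ resp.\ $\top$); if $\delta(q)=(\action,p)$ put $\theta_q=\diamond{\action}x_p$ for $q\in Q_\exists$ and $\theta_q=\boxmodal{\action}x_p$ for $q\in Q_\forall$. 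This is precisely the system of equations whose simultaneous solution is computed by the semantic game $\semanticGame^\val(\A)$ of Definition~\ref{def:semantic-game}.

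It remains to solve this system by nesting one vectorial countdown fixpoint per rank, in the order dictated by $\rangeRank$. List the ranks as $r_1<\dots<r_k$ and let $Q_{=j}$ collect the states of rank $r_j$. Processing ranks from the lowest upward, at stage $j$ form the vectorial fixpoint $\Psi_j=\eta^{\gamma_j}\,\overline{x^{=j}}.\,\overline{\theta^{=j}}$ over the rank-$r_j$ variables, where $\eta=\mu$ if $r_j\in\rangeRank_\exists$ and $\eta=\nu$ if $r_j\in\rangeRank_\forall$, and $\gamma_j=\bounds(r_j)$ if $r_j\in\nonstandardRanks$ and $\gamma_j=\infty$ otherwise (recall $\mu^\infty,\nu^\infty$ are the classical fixpoints); then substitute the components of $\Psi_j$ for the variables $\overline{x^{=j}}$ in all the remaining, higher-rank bodies. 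After all $k$ stages, $\phi_\A$ is the formula so obtained for $x_{q_I}$; one may assume without loss of generality, by adding a fresh top-rank state that $\epsilon$-moves to $q_I$, that $\phi_\A=(\Psi_k)_{q_I}$. Note that we genuinely use the \emph{vectorial} calculus, since distinct states of the same rank may be mutually dependent and, by the failure of the Beki\'c principle (Example~\ref{Ex-Language-U}), cannot in general be disentangled.

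For correctness I would not re-run the transfinite induction behind Adequacy but reduce to it: it is enough to show that the automaton $\A_{\phi_\A}$ produced from $\phi_\A$ by the construction of Section~\ref{sec:formtoaut} has the same winning region as $\A$ at the respective initial configurations, for then, using Theorem~\ref{Thm-Adequacy-Countdown}, $\semantics{\phi_\A}^\val=\semantics{\A_{\phi_\A}}^\val=\semantics{\A}^\val$. Inspecting that construction, the states of $\A_{\phi_\A}$ of nonzero rank are exactly (copies of) the substituted bodies $\theta_q$, one group for each $q\in Q$, and each carries a rank that is the image of $\rank(q)$ under an order-isomorphism from $\rangeRank$ onto the nonzero ranks of $\A_{\phi_\A}$ that preserves owner, standard/nonstandard status and initial counter value; every other state — the fixpoint-operator subformulas and the variable leaves — has the minimal, standard rank $0$, is deterministic, and occurs only finitely often between genuine moves. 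Thus $\semanticGame^\val(\A_{\phi_\A})$ arises from $\semanticGame^\val(\A)$ by renaming ranks along an order-isomorphism and inserting finite chains of forced rank-$0$ $\epsilon$-moves, and a short lemma says a countdown game is insensitive to both operations: the winner of an infinite play depends only on the order type of the ranks seen infinitely often — and rank $0$, being least, never realizes that maximum — while a move out of a rank-$0$ position, being of minimal standard rank, resets no counter, decrements no counter, and so leaves the counter valuation unchanged.

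The real work, exactly as in the classical case, is the bookkeeping of the nested substitution — forcing the scope of each countdown operator to coincide with the rank it governs — now with the added obligation to verify that the counter discipline of a countdown game (reset every lower counter, decrement your own if nonstandard, keep the higher ones) is reproduced verbatim by this nesting, and in particular that the rank-$0$ detours introduced by the back-translation can never spuriously exhaust or reset a counter. That last point holds precisely because those detour states inherit the least, standard rank $0$ and hence never touch a counter; granting it, the argument runs along familiar lines.
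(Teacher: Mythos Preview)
Your construction of $\phi_\A$ is the same as the paper's: one vectorial countdown fixpoint per rank, nested from inside out, substituted into the bodies of higher-rank states. Where you diverge is in the correctness argument. The paper does not round-trip through Adequacy; instead it proves directly, by induction on the rank threshold $r$, an \emph{exit-equivalence} $\semanticGame(\A),(\point,q)\exitEquiv{\exitPos_r}\semanticGame(\psi_{r,q}),(\point,\psi_{r,q})$ between the two semantic games, where $\exitPos_r$ is the set of positions of rank $\geq r$. The inductive step is powered by a Decomposition Lemma: two games that are exit-equivalent when stopped at all positions of the maximal rank (and at a further exit set) are exit-equivalent when stopped only at the further exit set, provided the maximal ranks have the same type. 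This lets one peel off one rank at a time and glue the pieces back together.

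Your route via Adequacy is legitimate and conceptually economical, but the step where you write that $\semanticGame^\val(\A_{\phi_\A})$ ``arises from $\semanticGame^\val(\A)$ by renaming ranks along an order-isomorphism and inserting finite chains of forced rank-$0$ $\epsilon$-moves'' is not quite right. Substitution duplicates: each time a variable $x_q$ with $\rank(q)=r_j$ occurs in some higher-rank body, a fresh copy of the entire $\Psi_j$-component for $q$ is spliced in, so $\A_{\phi_\A}$ typically contains \emph{many} subformula-occurrences playing the role of a single $q$. What you actually need is not an isomorphism-plus-padding but a rank- and owner-preserving \emph{bisimulation} from positions of $\semanticGame^\val(\A_{\phi_\A})$ onto positions of $\semanticGame^\val(\A)$ (collapsing all copies of $\theta_q$ to $q$ and sending each rank-$0$ intermediary to its evident target), together with the observation that countdown games are invariant under such bisimulations because counter updates depend only on the rank of the visited position. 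You should also note that the rank assignment in $\A_{\phi_\A}$ is not canonical, and you must \emph{choose} it so that all copies of $\theta_q$ receive the same rank; this is consistent with the ``subformulas get strictly smaller ranks'' constraint since the nesting depth of fixpoints above a copy of $\theta_q$ is always at least the nesting depth of the $\Psi_{\rank(q)}$ that produced it. Once these two points are made precise your argument goes through, and it has the pleasant feature of not re-doing the transfinite induction hidden in Adequacy; the paper's exit-equivalence proof, by contrast, is self-contained but pays for that with the Decomposition Lemma.
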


\begin{proof}
See Appendix~\ref{app:auttoform}; here we just sketch the construction of $\phi_\A$.
For an automaton $\A=(Q,q_I,\delta,\rank,\bounds)$, by induction on $r\in \rangeRank$ we build a formula $\psi_{r,q}$ for each $q\in Q$. Then we put $\phi_\A = \psi_{r_{\max},q_I}$. Thus for the base case of the lowest rank $r=0$:
  \begin{itemize}
    \item if $\delta(s)=(\action,p)$ then for $\psi_{0,s}$ we put $\diamond{\action}x_p$ if $q$ belongs to $\eve$ and $\boxmodal{\action}x_p$ if $q$ belongs to $\adam$, 
    \item if $\delta(s)\subset Q$ then for $\psi_{0,s} $ we put  $\bigvee_{p\in\delta(s)}x_p$ if $q$ belongs to $\eve$ and $\bigwedge_{p\in\delta(s)}x_p$ if $q$ belongs to $\adam$.
  \end{itemize}
 
  For the inductive step, let $q_1,...,q_d$ be all states in $Q$ with rank $r$. For every $q_i$ define the vectorial formula:
  \[
    \theta_i=\eta^\alpha_{q_i}(x_{q_1},...,x_{q_d}).(\psi_{r,q_1},...,\psi_{r,q_d})
  \]
  with $\alpha=\bounds(r)$ and $\eta=\mu$ if $r$ belongs to $\eve$ and $\eta=\nu$ if $r$ belongs to $\adam$. Then put $\psi_{r+1,q}= \psi_{r,q}[x_{q_1}\mapsto\theta_1, ..., x_{q_d}\mapsto\theta_d]$.
\end{proof}

% \subsection{Applications of the equivalence}
% As in the classical setting, the equivalence between logic and automata allows us to establish facts about both. Such facts are interesting in its own right, but also useful in our further proofs. An important example is \emph{guardedness}. We say that an automaton $\A$ is guarded if it does not contain a loop without modal transitions. A formula $\phi$ is guarded if it is guarded when seen as an automaton $\A_\phi$.

% \begin{proposition}\label{Prop-Guardedness}
%   Every countdown formula can be transformed into an equivalent guarded one.
% \end{proposition}

% \begin{proof}
%   See Appendix \ref{app:guarded}
% \end{proof}

% From now on, unless stated otherwise, we will assume that formulas are guarded.

\section{Vectorial vs. scalar calculus}\label{Sec-vectorial-scalar}

In this section we investigate the relation between scalar and vectorial formulas. We have already seen with Example \ref{Ex-Language-U} that unlike with standard fixpoints, the Beki\'{c} principle is not valid in the countdown setting. Interestingly, scalar formulas correspond to automata with a simple syntactic restriction.

\begin{proposition}\label{Prop-Scalar=Injectively-Ranked}
  Scalar countdown formulas and automata where every two states have different ranks have equal expressive power.
\end{proposition}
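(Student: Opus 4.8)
The plan is to prove both directions by a translation that refines the constructions of Theorem~\ref{thm:auttoform} and Section~\ref{sec:formtoaut}, being careful about the ranking functions involved.

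For the easy direction, suppose $\A$ is a countdown automaton in which every two states have pairwise different ranks. I would apply the automaton-to-formula construction from the proof of Theorem~\ref{thm:auttoform}. Observe that when all ranks are distinct, at each inductive step $r$ there is at most one state $q_i$ with $\rank(q_i)=r$, so the vectorial formula $\theta_i=\eta^\alpha_{q_i}(x_{q_1},\dots,x_{q_d}).(\psi_{r,q_1},\dots,\psi_{r,q_d})$ degenerates to a scalar fixpoint $\eta^\alpha x_{q}.\psi_{r,q}$. Hence the resulting $\phi_\A$ uses only scalar countdown operators (interspersed with the boolean and modal connectives appearing in the $\psi_{0,s}$ and the substitutions), and by Theorem~\ref{thm:auttoform} it is equivalent to $\A$. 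So every injectively-ranked automaton is equivalent to a scalar formula.

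For the converse, let $\phi$ be a scalar countdown formula. The automaton $\A_\phi$ from Section~\ref{sec:formtoaut} is already close to what we want: its states are the subformulas of $\phi$ (as syntactic-tree paths), and the ranks are assigned only to immediate subformulas of fixpoint operators. Because $\phi$ is scalar, every countdown/fixpoint subformula is of the form $\eta^\alpha x.\theta$ with a single body $\theta$, so each fixpoint operator contributes exactly one ``tuple component'' and hence one state that needs a nonzero-level rank. The construction already requires subformulas to have strictly smaller ranks than their superformulas; it remains only to note that we are free to choose these ranks \emph{injectively}, assigning a distinct rank to each such immediate subformula (there are finitely many), and to give all the remaining states rank $0$. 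The one subtlety is the states of rank $0$: the construction puts \emph{all} non-fixpoint-immediate subformulas at rank $0$, so these share a rank. I would handle this by observing that rank $0$ is standard, its owner is irrelevant, and — since in the semantic game no counter is ever attached to rank $0$ and, on any infinite play, some strictly higher (fixpoint) rank is necessarily visited infinitely often in a guarded formula — the value on rank $0$ never decides a play; alternatively, one first guards $\phi$ (Proposition~\ref{Prop-Guardedness}) and then spreads the rank-$0$ states over fresh distinct standard ranks below all the fixpoint ranks without changing the winner, since relative order among standard low ranks visited infinitely often is immaterial when a higher rank dominates. Either way one obtains an equivalent automaton with all ranks distinct.

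\textbf{Main obstacle.} I expect the delicate point to be exactly this last one: ensuring that collapsing or re-spreading the rank-$0$ states does not change the language. In the classical $\mu$-calculus this is handled automatically because the parity condition only cares about the maximal rank seen infinitely often; here one must also check the counter dynamics, i.e.\ that resets of counters for higher ranks triggered by visiting a rank-$0$ position are unaffected by how we split rank $0$ (they are, since splitting rank $0$ into several standard ranks all below the fixpoint ranks triggers exactly the same resets). Making this verification precise — probably via a straightforward strategy-transfer argument between the two semantic games, or by invoking guardedness to eliminate the problematic infinite rank-$0$ paths altogether — is the crux; the rest is bookkeeping on the two translations already established.
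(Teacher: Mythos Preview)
Your proposal is correct and matches the paper's approach, which is simply to observe that the translations of Sections~\ref{sec:formtoaut} and~\ref{sec:auttoform} send injectively-ranked automata to scalar formulas and (with a suitable choice of ranking) scalar formulas to injectively-ranked automata. The subtlety you flag about many states sharing rank $0$ is real but is what the paper dispatches with the single phrase ``the choice of the assignment of ranks is not deterministic''; your fix of spreading those states over fresh distinct standard ranks below all fixpoint ranks works for exactly the reason you give, and in fact guardedness is not even needed since every cycle in $\A_\phi$ already passes through a fixpoint body (the only backward transitions are $\delta(x)=\{\theta_i\}$), so no infinite play has its dominant rank among these low standard ranks.
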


\begin{proof}
  Inspecting the translations between formulas and automata from Sections~\ref{sec:formtoaut} and~\ref{sec:auttoform}, it is evident that injectively ranked automata are translated to scalar formulas, and that, although in our translation the choice of the assignment of ranks is not deterministic, every scalar formula can be translated to an injectively ranked automaton.% The first claim is immediate and the later follows from the observation that for every scalar $\phi$ if we take any linear order $\preceq$ on $\SubFor(\phi)$ extending the relation of being a subformula, then $\rank=\Id_{\SubFor(\phi)}$ (with naturally chosen $\nonstandardRanks\subset\SubFor(\phi)$ and $\bounds$) satisfies the specification of the translation.
\end{proof}

Since the Beki\'{c} principle fails, a natural question is whether there is another way of transforming vectorial formulas to scalar form (or, equivalently, arbitrary countdown automata to injectively ranked ones). We shall give a negative answer in Theorem \ref{Thm-Vectorial-Vs-Scalar}. However, before we proceed, let us analyse the following example, which shows that scalar formulas are more expressive than they may seem, covering in particular the property from Example~\ref{Ex-Language-U}.

\subsection{Languages of unbounded infixes}

  Fix a regular language of finite words $L \subset \Gamma^*$. Let $\U(L) \subset \Gamma^\omega$ be the language of all infinite words that contain arbitrarily long infixes from $L$. For instance, the language from Example \ref{Ex-Language-U} is $\U(\letterA^*)$. We shall now show that $\U(L)$ can be defined in the countdown $\mu$-calculus, first by a vectorial formula, then by a scalar one.
  
Consider a finite deterministic automaton $\A=(Q,\delta,q_I,F)$ that recognizes $L$. Let $\delta^+:\Gamma^+\times Q \to Q$ be the unique inductive extension of the transition function $\delta:\Gamma\times Q\to Q$ to nonempty words. Define $K_{p,q}=\{w \in \Gamma^+\ |\ \delta^+(w,p)=q\}$ the (regular) language of nonempty words leading from $p$ to $q$ in $\A$, and let $K_{p,F}$ denote the union $\bigcup_{q\in F}K_{p,q}$. By the pigeonhole principle we have
  $
    \U(L)=\bigcup_{q\in Q} \U_q(L)
  $,
  where $\U_q(L) \subset \Gamma^\omega$ consists of words such that for every $n<\omega$, $w$ has an infix $w_n=v_Iu_1...u_nv_F \in L$ s.t. (i) $v_I \in K_{q_I,q}$, (ii) $u_1,...,u_n\in K_{q,q}$, and (iii) $v_F\in K_{q,F}$. Then $\U_q(L)$ can be defined by a vectorial formula:
  \[
    \U_q(L)=\semantics{\nu^\omega_1(x_1,x_2).(\diamond{\Gamma^*K_{q_I,q}}x_2, \diamond{K_{q,q}} x_2 \wedge \diamond{K_{q,F}}\top)}
  \]
  where $\diamond{K}\psi$ is the formula as explained in Example~\ref{Ex-Language-U}. Indeed, the corresponding semantic game on a word $w$ proceeds as follows:
  \begin{enumerate}
    \item $\adam$ chooses a number $n<\omega$ as the value of his only counter, 
    \item $\eve$ skips a prefix $v_0v_I\in\Gamma^*K_{q_I,q}$ of $w$, 
    \item $\adam$ decrements his counter;
    \item $\eve$ keeps moving through $u_1, u_2, ... \in K_{q,q}$ so that after each step, some state in $F$ is reachable from $q$ by some prefix of the remaining word. After each such choice of $u_i$ $\adam$ has to decrement his counter, and so $\eve$ wins iff she can make at least $n-1$ such steps.
  \end{enumerate}
  The two different stages in which $\adam$'s counter is decremented reflect the two-phase dynamics of the game: first $\adam$ challenges $\eve$ with a number, and then $\eve$ shows that she can provide an infix long enough. 
  
  It is more tricky to define the language $\U_q(L)$ with a scalar formula, but it turns out to be possible. To this end, observe that without loss of generality we may restrict attention to words $w$ such that:
  \begin{enumerate}
    \item the infixes $w_n \in L$ start arbitrarily far in $w$;
    \item each $w_n$ can be decomposed as $v_Iu_1...u_nv_F \in L$ s.t. (i) $v_I \in K_{q_I,q}$, (ii) $u_1,...,u_n\in K_{q,q}$, (iii) $v_F\in K_{q,F}$, and additionally \emph{(iv) all $u_i$ begin with the same letter $\letterA \in \Gamma$};
    \item there are at least two distinct letters $\letterA,\letterB \in \Gamma$ that appear infinitely often in $w$;
    \item the first letter of $w$ is $\letterB$.
  \end{enumerate}
  Indeed, for (1) note that otherwise $w_n$ start in the same position $k$ for all $n$ large enough. But then even the stronger property ``There exists a position $k$ such that the run of $\A$ from $k$ visits $q$ and $F$ infinitely often'' holds, and this is easily definable by a fixpoint formula.

  Item (2) follows from the pigeonhole principle and the observation that in $w_{n\times |\Gamma|}=v_Iu_1...u_{n\times |\Gamma|}v_F$ at least $n$ $u_i$'s begin with the same letter.
  
  For (3) observe that otherwise $w$ has a suffix $\letterA^\omega$ for some $\letterA \in \Gamma$, in which case membership in $\U_q(L)$ is definable by a fixpoint formula. This is because an ultimately periodic word is bisimilar to a finite model, and so every monotone map reaches its fixpoints in finitely many steps, meaning that the countdown operator $\nu^\omega$ is equivalent to $\nu^\infty$.
  
  Finally, for (4) note that the language $\U_q(L)$ is closed under adding and removing finite prefixes, and so if a formula $\phi$ defines $\U_q(L)\cap \letterB\Gamma^\omega$, then the formula $\diamond{\Gamma^*}(\diamond{\letterB}\top\wedge\phi)$ defines $\U_q(L)$.

  With this in mind, define:
  \[
    \phi = \nu^\omega x.(\diamond{\letterB}\top \wedge \diamond{\Gamma^*K_{q_I,q}}(\diamond{\letterA}\top \wedge x))
    \vee
    (\diamond{K_{q,q}}(\diamond{\letterA}\top \wedge x) \wedge \diamond{\letterA}\top \wedge \diamond{K_{q,F}}\top).
  \]
  Note how $\diamond{\letterB}\top \wedge x$ and $\diamond{\letterA}\top\wedge x$ replace $x_1$ and $x_2$ from the vectorial formula. Consider the corresponding semantic game on a word $w$. Consider configurations of the game with the main disjunction as the formula component. Every infinite play of the game must visit such configurations infinitely often. In such a configuration, if the next letter in the model is either $\letterA$ or $\letterB$ then $\eve$ must choose the right or left disjunct respectively. In particular, once the game reaches a configuration where $\diamond{\letterA}\top$ holds, it must also hold every time the variable $x$ in unraveled in the future.
  As a result, $\eve$ wins from a configuration where $\diamond{\letterA}\top$ holds against $\adam$'s counter $n<\omega$ iff there is $u_1...u_{n+1}v_F$ starting in the current position such that $u_1,...,u_{n+1} \in K_{q,q}$, each $u_i$ starts with $\letterA$, and $v_F \in K_{q,F}$. Moreover, $\eve$ wins from a position where $\diamond{\letterB}\top$ holds, against $\adam$'s $n+1<\omega$, iff there is $v_I \in \Gamma^* K_{q_I,q}$ starting in the current position such that the next position after $v_I$ satisfies $\diamond{\letterA}\top$ and $\eve$ wins from there against $n$. Putting this together, we get that $\eve$ wins from a position satisfying $\diamond{\letterB}\top$ against $n$ iff there is $v_Iu_1...u_nv_F = w_n$ as in condition (2) above. Since the game starts with $\adam$ choosing an arbitrary $n<\omega$, it follows that indeed $\phi$ defines $\U_q(L)$.  

\subsection{Greater expressive power of the vectorial calculus}

We now show an example of a property that is definable in the vectorial countdown calculus but not in the scalar one. 

Fixing $\Actions = \{\letterA,\letterB\}$, consider a model $\M = (M,\arrowA,\arrowB)$ with points $M=\{\point_i,\altpoint_i\ |\ i <\omega\}$, and with exactly the edges: $\point_i \arrowA \point_j$,  $\altpoint_i \arrowA \point_j$  and $\altpoint_i \arrowB \point_j$ for all $i>j$; and $\point_i \arrowB \point_j$ for all $i$ and $j$.
Note that the relation $\arrowA$ is a subset of $\arrowB$. The model is shown in Fig.~\ref{Fig-Model}.

\begin{figure}
\hspace*{2cm}
% $$
% \xymatrix@+=40pt{
% \cdots & *+[o][F-]{\point_2}\ar@/_1ex/[r]\ar@/_4ex/[rr] \ar@{.>}@(u,l)[] & *+[o][F-]{\point_1}\ar@/_1ex/[r]\ar@{.>}@/_1ex/[l]\ar@{.>}@(ru,ul)[] &*+[o][F-]{\point_0}\ar@{.>}@/_1ex/[l]\ar@{.>}@/_5ex/[ll]\ar@{.>}@(r,u)[] \\
% \cdots &*+[o][F-]{\altpoint_2}\ar@/_2ex/[rru] \ar@/_1ex/[ru]& *+[o][F-]{\altpoint_1}\ar@/_3ex/[ru] &*+[o][F-]{\altpoint_0}
% }
% $$
\begin{tikzpicture}[every initial by arrow/.style={->, thick, scale = 0.6}]
\newcommand{\pictureScale}{0.65}
\newcommand{\horizontalStep}{5*\pictureScale}
\newcommand{\verticalGap}{5*\pictureScale}

\definecolor{colorNode}{RGB}{241, 178, 225}
\definecolor{colorArrow1}{RGB}{8,39,245}
\definecolor{colorArrow2}{RGB}{228,0,110}

\tikzstyle{main node} = [scale=(\pictureScale*1.4, circle, draw, fill=colorNode, minimum size = \pictureScale*1.6cm]
\tikzstyle{dotBox} = [draw=none, font={\LARGE$\cdot$}]
\tikzstyle{initial} = [initial text = {}, initial distance = \pictureScale*1.25cm]
\tikzstyle{arrowA+B} = [-stealth', color=colorArrow1, line width=\pictureScale*1.0mm]
\tikzstyle{arrowB} = [-stealth', color=colorArrow2, line width=\pictureScale*0.75mm]

% nodes
\node[main node]                                  (v^+_0) at ($(0, 0)+ (0,\verticalGap)$)    {$\point_0$};
\node[main node]                                  (v^+_1) at ($(v^+_0)-(\horizontalStep,0)$) {$\point_1$};
\node[main node]                                  (v^+_2) at ($(v^+_1)-(\horizontalStep,0)$) {$\point_2$};

\node[dotBox]                                     (dot0+) at ($(v^+_2)-(\pictureScale*2.8,0)$) {};
\node[dotBox]                                     (dot1+) at ($(dot0+)-(\pictureScale*1.7,0)$) {};
\node[dotBox]                                     (dot2+) at ($(dot1+)-(\pictureScale*1.7,0)$) {};

\node[main node]                                  (v^-_0) at ($(0, 0)$)    {$\altpoint_0$};
\node[main node]                                  (v^-_1) at ($(v^-_0)-(\horizontalStep,0)$) {$\altpoint_1$};
\node[main node]                                  (v^-_2) at ($(v^-_1)-(\horizontalStep,0)$) {$\altpoint_2$};

\node[dotBox]                                     (dot0-) at ($(v^-_2)-(\pictureScale*2.8,0)$) {};
\node[dotBox]                                     (dot1-) at ($(dot0-)-(\pictureScale*1.7,0)$) {};
\node[dotBox]                                     (dot2-) at ($(dot1-)-(\pictureScale*1.7,0)$) {};

% arrows
\path[every node/.style={font=\sffamily\small}]
     (v^+_2) edge [arrowA+B, out = 350, in = 195] (v^+_1)
     (v^+_1) edge [arrowA+B, out = 350, in = 190] (v^+_0)
     (v^+_2) edge [arrowA+B, out = 345,  in = 214] (v^+_0)

     (v^-_2) edge [arrowA+B, bend left = 10, out = 20, in = 180] (v^+_1)
     (v^-_2) edge [arrowA+B, bend left = 15, out = 10, in = 213] (v^+_0)
     (v^-_1) edge [arrowA+B, bend left = 15, out = 350, in = 220] (v^+_0)

     (v^+_0) edge [arrowB, loop, out = 0, in = 60,   min distance = \pictureScale*37] (v^+_0)
     (v^+_1) edge [arrowB, loop, out = 65, in = 125, min distance = \pictureScale*35] (v^+_1)
     (v^+_2) edge [arrowB, loop, out = 75, in = 135, min distance = \pictureScale*35] (v^+_2)

     (v^+_0) edge [arrowB, out = 170, in = 15] (v^+_1)
     (v^+_1) edge [arrowB, out = 165, in = 10] (v^+_2)
     (v^+_0) edge [arrowB, out = 150, in = 35] (v^+_2)
;
\end{tikzpicture}

\caption{The model $\M$. Blue arrows represent edges labeled both with $\letterA$ and $\letterB$, and pink arrows are edges labeled only with $\letterB$.}\label{Fig-Model}
\end{figure}

Consider the vectorial sentence $\nu^\omega_1 (x_1,x_2) . (\diamond{\letterB}x_2, \diamond{\letterA}x_2)$.
This describes the property {\em there are arbitrarily long paths with labels in $\letterB\letterA^*$}, and so it is true in all points $\point_i$ and false in all points $\altpoint_i$. The following result immediately implies that this property cannot be defined in the scalar countdown calculus:

\begin{theorem}\label{Thm-Vectorial-Vs-Scalar}
  For every scalar sentence $\phi$, there exists $i<\omega$ s.t. $
    \altpoint_i\in\semantics{\phi} \iff \point_i\in\semantics{\phi}$.
\end{theorem}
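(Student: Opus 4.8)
The plan is to prove, by induction on the structure of an arbitrary scalar formula $\phi$ (allowing free variables), the following strengthening. Call a set $H\subseteq M$ \emph{tame} if there is $N<\omega$ such that for all $j\ge N$ we have $\point_j\in H\iff\altpoint_j\in H$, $\point_j\in H\iff\point_N\in H$, and $\altpoint_j\in H\iff\altpoint_N\in H$; that is, past some index $H$ is constant on the $\point_j$'s, constant on the $\altpoint_j$'s, and these two constants agree. One shows: whenever $\val$ assigns tame sets to the free variables of a scalar $\phi$, the set $\semantics{\phi}^\val$ is tame. Applied to a sentence this produces an $N$, and then any $i\ge N$ satisfies $\point_i\in\semantics{\phi}\iff\altpoint_i\in\semantics{\phi}$, which is the theorem.

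The boolean cases are routine: intersections and unions of tame sets are tame (keep the larger threshold). The modal cases use the geometry of $\M$ and are the clean part. Every modal edge of $\M$ lands in a $\point$-point, and $\point_j$ and $\altpoint_j$ have \emph{exactly the same} $\letterA$-successors, namely $\point_0,\dots,\point_{j-1}$; hence $\point_j\in\semantics{\diamond{\letterA}\psi}^\val\iff\altpoint_j\in\semantics{\diamond{\letterA}\psi}^\val$ for \emph{every} $j$, and likewise for $\boxmodal{\letterA}$ --- these operators erase any asymmetry. For $\diamond{\letterB}\psi$, instead, $\point_k$ qualifies iff $\semantics{\psi}^\val$ contains some $\point_\ell$ at all, while $\altpoint_k$ qualifies iff it contains some $\point_\ell$ with $\ell<k$; using tameness of $\semantics{\psi}^\val$ one checks the result is tame, with threshold controlled by the least index of a $\point$-point in $\semantics{\psi}^\val$ (dually for $\boxmodal{\letterB}$, via the least index of a $\point$-point outside $\semantics{\psi}^\val$).

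The real content is the countdown/fixpoint case, scalar so of the form $\eta^\alpha x.\psi$. Writing $F(H)=\semantics{\psi}^{\val[x\mapsto H]}$, the inductive hypothesis shows $F$ preserves tameness, so every ordinal approximant $F^\beta_\mu$ (resp. $F^\beta_\nu$) is tame; and since ``tame with threshold $\le N$'' is closed under arbitrary intersections and unions, it suffices to bound the thresholds of the approximants \emph{uniformly in} $\beta$. This is precisely where scalarity is indispensable and precisely what the vectorial sentence in the statement exploits: there the $\diamond{\letterB}$-guarded first component is fed the \emph{second} component's approximant, which is iterated through $\diamond{\letterA}$ in isolation, so the least $\point$-index occurring in it climbs without bound and the threshold of the first component diverges --- intersecting with $\diamond{\letterA}$ of the \emph{same} set would pin it, but with two components nothing does. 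With a single recursion variable this escape is unavailable: the set placed under a $\diamond{\letterB}$ in $\psi$ is the same set placed under every $\diamond{\letterA}$ of $\psi$, and the $\letterA$-reset observed above keeps the relevant indices bounded along the iteration.

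I expect the heart of the proof, and the point needing the most care, to be turning this heuristic into a genuine uniform bound. The naive estimate ``threshold$(F(H))\le$(constant depending on $\psi$)$\,+\,$threshold$(H)$'' is true but useless, as it blows up at stage $\omega$. One instead tracks a refined invariant of a tame set --- its threshold \emph{together with} the least $\point$-index inside it, the least $\point$-index outside it, and its eventual $\point$-value --- and shows each syntactic construct transforms this invariant by an operation that is non-increasing in the crucial coordinate (notably, a $\diamond{\letterB}$ sends ``least $\point$-index inside'' to $0$ or $\infty$ regardless of its input, so repeated application cannot pump it), and then carries out the transfinite bookkeeping for limit stages and for $\alpha=\infty$, where the approximating chain runs up to the closure ordinal on the countable model $\M$. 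An alternative route via Proposition~\ref{Prop-Scalar=Injectively-Ranked} argues directly on the semantic games of injectively ranked automata, where the same phenomenon surfaces as the impossibility of a single rank being visited both before and during a level-consuming $\letterA$-loop; but the bookkeeping there is, if anything, heavier.
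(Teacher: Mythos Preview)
Your approach is genuinely different from the paper's: you attempt a semantic structural induction with a ``tameness'' invariant, whereas the paper works entirely in game semantics, using Proposition~\ref{Prop-Scalar=Injectively-Ranked} to reduce to injectively ranked automata and then carrying out a delicate strategy-transfer argument from $\point_i$ to $\altpoint_i$ (the technical core being a bound on the number of modal moves in a specific phase of the play). The two routes are not mere reformulations of each other.

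However, your sketch has a real gap at the fixpoint step. You correctly observe that ``$F$ preserves tameness'' alone is \emph{insufficient}: an increasing chain of tame sets with unbounded thresholds need not have a tame union (take $H_n=\{\point_j:j<n\}$; each is tame, the union is $\{\point_j:j<\omega\}$, which is not). You therefore need a uniform quantitative bound, and you propose the refined invariant $(N_H,a_H,b_H,v_H)$. But you never identify a coordinate that is actually non-increasing under the relevant transformations. Concretely: $\diamond{\letterA}$ sends $a_H\mapsto a_H+1$, and in a formula such as $\nu^\omega x.(\diamond{\letterA}x\wedge\diamond{\letterB}x)$ the approximants satisfy $a_{F^n}=n$ and have threshold $n$, so both $a$ and the threshold blow up along the iteration. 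Your observation that $\diamond{\letterB}$ resets $a$ to $0$ or $\infty$ is correct but does not control what $\diamond{\letterA}$ does \emph{inside} $\psi$ before any $\diamond{\letterB}$ is applied; and the threshold produced by $\diamond{\letterB}H$ is $a_H+1$, so a large $a_H$ coming from an inner $\diamond{\letterA}$-chain still propagates. What actually saves the day in such examples is a structural constraint on the \emph{joint} $(\point,\altpoint)$-pattern of the approximants that forces the limit to be symmetric---but you have not formulated it, and extracting it in a way that survives boolean combinations and, crucially, \emph{nested} scalar fixpoints (where the inner $\eta^\beta y.\theta(x,y)$ is itself an iterated object whose threshold may depend on the outer parameter $x$) is exactly the hard part.

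In short, the inductive hypothesis ``$F$ preserves tameness'' is too weak, and the refined invariant you gesture at does not yield a monotone quantity. The paper's route through injectively ranked automata is not merely ``heavier bookkeeping'': the injectivity is what lets one argue that revisiting a rank means revisiting the \emph{same} subformula, enabling the strategy-copying that replaces the missing uniform bound. If you want to push the semantic approach through, you would need a substantially stronger inductive statement---something that bounds the threshold of $\semantics{\phi}^\val$ by an explicit function of finitely many numerical parameters of $\val$ in a way stable under transfinite iteration---and this is not supplied.
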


\begin{proof}
The heart of the proof is Proposition \ref{Prop-Scalar=Injectively-Ranked} which says that scalar formulas correspond to injectively ranked automata. In such an automaton, whenever the counter corresponding to rank $r$ is modified, the automaton must be in \emph{the same state}, which allows the players to copy their strategies between different positions of the semantic game. For the details, see Appendix~\ref{app:vectorial-scalar}.
\end{proof}

\section{Strictness of the countdown nesting hierarchy}
A natural question is whether greater \emph{coutdown nesting}, i.e. the maximal nesting of $\mu^\alpha$ and $\nu^\alpha$ operators with $\alpha\neq\infty$, results in more expressive power. We give a positive answer: under mild assumptions, the hierarchy is strict. From now on, focus on the monomodal case (i.e. $|\Actions|=1$) and we assume that the only ordinal used by formulas is $\omega$.\footnote
{This assumption could be replaced with a weaker requirement: there exists a maximal ordinal $\alpha$ that we are allowed to use, and $\alpha$ is additively indecomposable.}

\begin{theorem}\label{Thm-Strictness-Nesting}
  For every $k<\omega$, formulas with countdown nesting $k+1$ have strictly more expressive power than those with nesting at most $k$.
\end{theorem}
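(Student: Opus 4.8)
The plan is to establish strictness by exhibiting, for each $k<\omega$, a property $P_k$ definable with countdown nesting $k+1$ but not with nesting $\le k$, and to separate them by a family of models on which formulas of bounded nesting are ``blind'' to a certain quantitative parameter. Since we are in the monomodal case with $\omega$ the only ordinal, countdown operators behave like a bounded-iteration mechanism: $\nu^\omega x.\phi(x)$ over a model $\M$ asks for arbitrarily long $\phi$-chains. A natural candidate for $P_k$ is an iterated unboundedness property: roughly, ``there are arbitrarily long paths, each step of which starts a point from which there are arbitrarily long paths, each step of which\dots'', nested $k+1$ times. Concretely, set $\phi_0 = \top$ and $\phi_{m+1} = \nu^\omega x.\Diamond(x \wedge \phi_m)$ up to appropriate bookkeeping, so that $\phi_{k+1}$ has countdown nesting $k+1$. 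The models will be trees $T_{k,f}$ built so that the ``depth of genuine unboundedness'' is exactly controlled by $k$ and by an auxiliary function $f$; on a well-chosen family, $\phi_{k+1}$ distinguishes members that no nesting-$\le k$ formula can tell apart.

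The core technical tool will be the game semantics of Section~\ref{sec:formtoaut}: a formula $\psi$ of countdown nesting $\le k$ translates to a countdown automaton $\A_\psi$ whose nonstandard ranks (the ones carrying an $\omega$-counter) have \emph{nesting depth $\le k$} in the rank order, i.e. along any branch of the syntactic/rank structure at most $k$ nonstandard counters are ``active'' and can be reset by one another. In the semantic game, $\adam$ (or $\eve$, depending on parity) commits to finite counter values $n_1,\dots,n_k<\omega$ for these at most $k$ nested counters; resets happen only when a strictly higher rank is visited. I would prove a pumping/bounding lemma: on the separating family of models, for any fixed tuple $(n_1,\dots,n_k)$ the relevant plays are confined to a bounded portion of the model, so the winner is determined by finitely much information — and by a Myhill--Nerode / composition argument over trees, a nesting-$\le k$ automaton with state set $Q$ cannot distinguish two models in the family once they agree up to a threshold depending only on $|Q|$ and the $n_i$'s. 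Choosing the family so that its members differ only beyond every such threshold yields that $P_{k+1}$, which \emph{does} separate them (directly from the game, by having $\eve$ or $\adam$ exploit the genuine $(k+1)$-fold nesting), is not definable with nesting $\le k$.

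The main obstacle is the bounding/pumping lemma itself: because countdown games are \emph{not} positionally determined, I cannot reduce to finite-memory strategies in the usual way, and a player's optimal move at a nonstandard rank genuinely depends on the current counter values. I would handle this by working at the level of a \emph{fixed} counter assignment $(n_1,\dots,n_k)$ — once these finitely many ordinals are frozen, the game restricted to plays before any counter hits $0$ is essentially a finite-duration object, and one can run an Ehrenfeucht--Fraïssé / back-and-forth argument between the two models of the family, the number of rounds being bounded by $\sum n_i$ together with $|Q|$. The subtlety is that $\adam$ chooses the $n_i$'s \emph{adaptively} and after seeing part of the play (the two-phase dynamics highlighted after Example~\ref{Ex-Language-U}), so the argument must be organized as: for every choice of counters by the ``bounding'' player there is a matching choice on the twin model, uniformly, which forces the threshold to depend only on $|Q|$ and not on the $n_i$'s at the outer levels. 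Getting this uniformity right — so that a single model family defeats \emph{all} nesting-$\le k$ formulas at once — is the delicate point; the rest is routine transfinite/finite induction on rank as in the proof of Theorem~\ref{Thm-Adequacy-Countdown}. Full details appear in the appendix.
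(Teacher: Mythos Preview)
Your approach is genuinely different from the paper's, and as written it has a real gap. You correctly identify the central obstacle---counters are chosen adaptively and countdown games are not positionally determined---but your resolution (``for every choice of counters by the bounding player there is a matching choice on the twin model, uniformly, which forces the threshold to depend only on $|Q|$'') is an assertion, not an argument. Once an outer $\nu^\omega$-counter is set to some $n$, the opponent may choose inner counters arbitrarily large as a function of $n$ and of the play so far; there is no reason offered why an EF/back-and-forth between your unspecified tree models $T_{k,f}$ survives this, nor why the threshold is independent of the outer $n_i$'s. Without a concrete construction of the $T_{k,f}$ and a precise statement and proof of the pumping lemma, this is a plan rather than a proof, and the ``delicate point'' you flag is exactly the point that is missing.

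The paper sidesteps the game-theoretic difficulty entirely. It works on a single rigid model---the ordinal $\omega_1$ with the transitive relation $>$---and proves a \emph{semantic stabilization lemma} (Lemma~\ref{Lem-Strictness}): any formula of countdown nesting $k$, evaluated under a valuation that stabilizes at $\beta$, defines a set stable above $\beta+\alpha_\phi$ for some $\alpha_\phi<\omega^{k+1}$. The proof is by structural induction on the formula; the $\eta^\omega$ case contributes a factor of $\omega$ to the bound via the $\omega$-indexed union/intersection of approximants, and the $\eta^\infty$ case is handled by a finite-alternation argument (Proposition~\ref{Prop-Finite-Alternation}). The separating property is then simply $[\omega^{k+1},\omega_1)$, definable by $\nu^\omega x_1\ldots\nu^\omega x_{k+1}.\Diamond\bigwedge_i x_i$ with nesting $k{+}1$, but out of reach for nesting $\leq k$ since any such sentence stabilizes strictly below $\omega^{k+1}$. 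This avoids tree constructions, EF games, and any direct confrontation with non-positionality.
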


In order to prove strictness, it suffices to prove it on a restricted class of models. We will show that the hierarchy is strict already on the class of transitive, linear, well-founded models -- i.e. (up to isomorphism) ordinals. 

More specifically, an ordinal $\kappa\in\Ord$ can be seen as a model with $\alpha \to \beta$ iff $\alpha>\beta$.
  Since $\kappa$ is an induced submodel of $\kappa'$ whenever $\kappa\leq\kappa'$, we can consider a single ordinal model with $\kappa$ big enough. For our purposes, the first uncountable ordinal $\omega_1$ is be sufficient.

  We call a subset $S \subset\omega_1$ \emph{stable above $\alpha$} if either $[\alpha,\omega_1)\subset S$ or $[\alpha,\omega_1)\cap S=\emptyset$. A \emph{stabilization point} of a valuation $\val:\Var\to\powerset{\omega_1}$ is the least $\alpha\leq\omega_1$ such that interpretations of all the variables are stable above $\alpha$.

Observe that the set $[\omega^k,\omega_1)\subset[0,\omega_1)$ can be defined by the following sentence with countdown nesting $k$:
\begin{align}\label{Eq-Language-Omega^k}
\textstyle
  [\omega^k,\omega_1) = \semantics{\nu^\omega x_1 ... \nu^\omega x_k.\Diamond(\bigwedge_{i\leq k}x_i)}.
\end{align}

Indeed, the semantic game can be decomposed into two alternating steps: (i) $\adam$ chooses a tuple of finite ordinals $(\alpha_1,...,\alpha_k) \in \omega^k$ and (ii) $\eve$ responds with a successor in the model. Since at each step $\adam$ has to pick a lexicographically smaller tuple (and he starts by picking any tuple) it is easy to see that he wins iff the initial point is at least $\omega^k$. We will show that for all $k>0$, countdown nesting $k$ is \emph{necessary} to define this language. The proof relies on the following lemma.

\begin{lemma}\label{Lem-Strictness}
  For every $k<\omega$ and a formula $\phi$ with countdown nesting $k$, there exists an ordinal $\alpha_\phi< \omega^{k+1}$ such that $\phi$ stabilizes $\alpha_\phi$ above the valuation, i.e. for every valuation $\val$ stabilizing at $\beta$, $\semantics{\phi}^\val$ is stable above $\beta+\alpha_\phi$.
\end{lemma}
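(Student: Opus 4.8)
The plan is to prove Lemma~\ref{Lem-Strictness} by induction on the structure of the formula $\phi$, strengthening the statement so that it applies to all subformulas (possibly with free variables), and tracking how the stabilization bound behaves under each syntactic construction. Define, for a formula $\phi$ with countdown nesting $k$, the claim $P(\phi)$: there is $\alpha_\phi<\omega^{k+1}$ such that for every valuation $\val$ stabilizing at $\beta$, the set $\semantics{\phi}^\val$ is stable above $\beta+\alpha_\phi$. First I would dispatch the easy cases. For $\phi=x$ (a variable), $\top$, $\bot$ we may take $\alpha_\phi=0$ (countdown nesting $0$, and $\omega^1=\omega>0$). For $\phi_1\vee\phi_2$ and $\phi_1\wedge\phi_2$: if $\phi_i$ stabilizes above $\beta+\alpha_i$, then union/intersection stabilizes above $\beta+\max(\alpha_1,\alpha_2)$, and since both $\alpha_i<\omega^{k+1}$ with $k$ the max nesting, the max is again $<\omega^{k+1}$. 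For $\diamond{}\phi$ and $\boxmodal{}\phi$ in the ordinal model: here the key observation is that $\beta\to\gamma$ iff $\beta>\gamma$, so membership of $\beta$ in $\semantics{\diamond{}\phi}^\val$ depends on whether $[0,\beta)\cap\semantics{\phi}^\val\neq\emptyset$. If $\phi$ is stable above $\beta_0:=\beta+\alpha_\phi$, then for all $\gamma>\beta_0$ the truth of $\diamond{}\phi$ at $\gamma$ is constant (either $\semantics{\phi}^\val$ is nonempty below $\beta_0$, fixing the answer, or it contains all of $[\beta_0,\omega_1)$, again fixing the answer for $\gamma>\beta_0$); so $\alpha_{\diamond{}\phi}=\alpha_\phi+1$ works, and this stays below $\omega^{k+1}$ when $\alpha_\phi<\omega^{k+1}$ (indecomposability of $\omega^{k+1}$). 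Dually for $\boxmodal{}$.

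The real work is the countdown fixpoint case $\phi=\eta^\omega_i\overline{x}.\overline{\psi}$ where $\overline{\psi}=(\psi_1,\dots,\psi_n)$ has countdown nesting $k-1$ (so $\phi$ has nesting $k$). By the inductive hypothesis, each $\psi_j$ has an ordinal $\alpha_{\psi_j}<\omega^k$; let $\delta=\sup_j\alpha_{\psi_j}<\omega^k$ (finite sup of ordinals $<\omega^k$, which is indecomposable, so the sup is still $<\omega^k$). Recall from the remark after Definition~\ref{def:vec-semantics} that $\mu^{\alpha+1}$ (resp. $\nu^{\alpha+1}$) corresponds to one unfolding applied to $\mu^\alpha$ (resp. $\nu^\alpha$); here $\semantics{\eta^\omega_i\overline{x}.\overline{\psi}}^\val=\pi_i(F^\omega_\eta)$ where $F^\omega_\eta=\bigsqcup_{m<\omega}F(F^m_\eta)$ (join for $\mu$, meet for $\nu$), and $F^0_\eta$ is $\bot$ (for $\mu$) or $\top$ (for $\nu$) in each coordinate. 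The plan is to prove by induction on $m<\omega$ that each coordinate of $F^m_\eta$ (as a subset of $\omega_1$), under a valuation $\val$ stabilizing at $\beta$, is stable above $\beta+\delta\cdot m$ — each application of $F$ is a vector of formulas of nesting $k-1$ whose free variables now include $x_1,\dots,x_n$ bound to the previous approximant; crucially, if the previous approximant is stable above $\beta+\delta\cdot m$, then the valuation $\val'$ feeding into $\psi_j$ stabilizes at $\beta+\delta\cdot m$, so $\semantics{\psi_j}^{\val'}$ is stable above $\beta+\delta\cdot m+\alpha_{\psi_j}\leq\beta+\delta\cdot(m+1)$. Then the $\omega$-th approximant $F^\omega_\eta$, being a countable union/intersection of sets each stable above $\beta+\delta\cdot m$, is stable above $\beta+\sup_m(\delta\cdot m)=\beta+\delta\cdot\omega$. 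Since $\delta<\omega^k$ we have $\delta\cdot\omega\leq\omega^k\cdot\omega=\omega^{k+1}$ — and in fact we can arrange the strict inequality $\delta\cdot\omega<\omega^{k+1}$ by noting $\delta<\omega^k$ means $\delta\leq\omega^{k-1}\cdot c$ for some finite $c$ (if $k\geq1$), whence $\delta\cdot\omega\leq\omega^{k-1}\cdot c\cdot\omega=\omega^k\cdot? $ — I need to be careful here; more robustly, pick $\alpha_\phi=\omega^k$ itself, which is $<\omega^{k+1}$ and dominates $\delta\cdot\omega$ when $\delta<\omega^k$ (since $\delta\cdot\omega\le\omega^k$ and if equality we can still absorb it, but to get strict we bump to $\omega^k\cdot 2<\omega^{k+1}$). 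Setting $\alpha_\phi=\omega^k\cdot 2$ (or any fixed ordinal strictly between $\delta\cdot\omega$ and $\omega^{k+1}$) gives the bound.

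The main obstacle I anticipate is the fixpoint case, specifically two subtleties: (1) correctly handling the semantics of $\eta^\omega$ as a \emph{transfinite} approximation limited to $\omega$ steps rather than a genuine fixpoint — one must verify that the $\omega$-th approximant's stability really is governed by $\sup_{m<\omega}(\beta+\delta\cdot m)=\beta+\delta\cdot\omega$, using that for $\mu$ the approximants are increasing and for $\nu$ decreasing, so the limit's restriction to the tail $[\beta+\delta\cdot\omega,\omega_1)$ is the union/intersection of the constant tails; and (2) the ordinal arithmetic bookkeeping ensuring $\beta+\alpha_\phi$ genuinely serves as a uniform stabilization point for \emph{all} $\beta$ simultaneously, which requires $\alpha_\phi$ to depend only on $\phi$ and not on $\val$ — this is why the inductive hypothesis is phrased with the universally-quantified shift $\beta\mapsto\beta+\alpha_\phi$. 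I would also need to double-check that the vectorial bound case composes correctly: the $n$ coordinates are handled symmetrically and the common bound $\delta=\sup_j\alpha_{\psi_j}$ suffices because all coordinates are stabilized by the same shift. A minor point to verify is that substitution of a previous approximant into $\psi_j$ genuinely yields a valuation stabilizing at the claimed point — this is immediate from the definition of stabilization point as the least ordinal above which \emph{all} variables are stable, together with the fact that the newly-bound variables $x_1,\dots,x_n$ receive the previous approximant as value. Once these are in place, the inductive step closes and Theorem~\ref{Thm-Strictness-Nesting} follows by instantiating at the sentence~\eqref{Eq-Language-Omega^k}: a formula of nesting $k$ stabilizes below $\omega^{k+1}$, hence cannot distinguish all ordinals $\geq\omega^{k+1}$ from each other, whereas $[\omega^{k+1},\omega_1)$ does.
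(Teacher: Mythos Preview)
Your induction handles variables, propositional connectives, modalities, and the countdown operators $\eta^\omega$ correctly, and for those cases your argument is essentially the paper's. But you have omitted an entire inductive case: the classical fixpoint operators $\eta^\infty_i\overline{x}.\overline{\psi}$. These are present in the calculus, and since $\eta^\infty$ does \emph{not} contribute to countdown nesting, the subformulas $\psi_j$ have the same nesting $k$ as $\phi$, not $k-1$. Your approximant-iteration trick therefore breaks down here: the inductive hypothesis only gives $\alpha_{\psi_j}<\omega^{k+1}$, and iterating would give bounds $\delta\cdot m$ with $\delta<\omega^{k+1}$, whose supremum $\delta\cdot\omega$ need not lie below $\omega^{k+1}$. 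Worse, since $\eta^\infty$ denotes a genuine fixpoint on the lattice $\powerset{\omega_1}$, the approximation sequence need not close at stage $\omega$ at all, so there is no reason to stop the iteration there.

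The paper handles this case by a separate argument resting on Proposition~\ref{Prop-Finite-Alternation}: for any countdown formula there is a finite constant $t_\phi$ bounding how many times $\semantics{\phi}^\val$ can change truth value above the stabilization point of $\val$ (proved via pre-modal counter-independence and a convexity lemma). One then defines $\val'(x_j)=\semantics{\eta^\infty_j\overline{x}.\overline{\psi}}^\val$ and argues that $\val'$ can change at most $t_{\max}\cdot n$ times above $\beta$, and that once it is constant for $\alpha_{\max}=\max_j\alpha_{\psi_j}$ many steps it is constant forever (using guardedness and the fixpoint equation $\val'(x_j)=\semantics{\psi_j}^{\val'}$). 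This yields $\alpha_\phi=\alpha_{\max}\cdot t_{\max}\cdot n<\omega^{k+1}$. Without something like this finite-alternation bound, the $\eta^\infty$ case does not go through, so your proof as written is incomplete.
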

\begin{proof}
See Appendix~\ref{app:strictness}.
\end{proof}

From this the theorem follows immediately, as the sentence $\phi$ has no free variables and thus it stabilizes at $\alpha_\phi<\omega^{k+1}$ regardless of the valuation. %Before we proceed further, we prove the following fact.

\section{Decidability issues}
We briefly discuss decidability issues in the countdown $\mu$-calculus. Note that in a finite model every monotone map reaches its fixpoints in finitely many steps. Hence, if we replace every $\eta^\alpha$ in $\phi$ with $\eta^\infty$ and denote the resulting formula by $\widehat{\phi}$, then \emph{in every finite model} $\semantics{\phi}= \semantics{\widehat{\phi}}$. It immediately follows that:

\begin{proposition}
  The model checking problem for the $\muMLC$, i.e. the problem: ``Given $\phi\in\muMLC$ and a point $\point$ in a (finite) model $\M$, does $\point\models\phi$?'' is decidable.
\end{proposition}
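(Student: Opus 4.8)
The plan is to reduce model checking of $\muMLC$ on finite models to model checking of the classical $\muML$, which is well known to be decidable. The key observation, already stated in the paragraph preceding the proposition, is that on a \emph{finite} model every monotone operator on the finite powerset lattice $\powerset{M}$ reaches its least (respectively greatest) fixpoint after finitely many iterations — in fact after at most $|M|$ steps. Consequently, for any ordinal $\alpha$ with $\alpha \geq |M|$ (in particular for $\alpha = \infty$), the approximant $F^\alpha_\mu$ already equals the actual least fixpoint $F^\infty_\mu$, and likewise for $\nu$. This is true simultaneously for all the (finitely many) fixpoint subformulas occurring in a given $\phi$, and it propagates through the inductive definition of the semantics because the lattice is finite and the iteration stabilizes uniformly.

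First I would make the substitution precise: given $\phi \in \muMLC$, let $\widehat{\phi}$ be the formula obtained by replacing every countdown operator $\mu^\alpha_i$ and $\nu^\alpha_i$ occurring in $\phi$ (at every syntactic position, including nested ones) by $\mu^\infty_i$ and $\nu^\infty_i$ respectively; since $\mu^\infty$ and $\nu^\infty$ are exactly the classical fixpoint operators, $\widehat{\phi}$ is a formula of the (vectorial, hence by Beki\'c also scalar) classical $\mu$-calculus. Next I would prove, by induction on the structure of $\phi$, that for every finite model $\M$ and every valuation $\val:\Var\to\powerset{M}$ one has $\semantics{\phi}^\val = \semantics{\widehat{\phi}}^\val$. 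The boolean and modal cases are immediate from the inductive hypothesis, since the semantic clauses for $\vee,\wedge,\diamond{\action},\boxmodal{\action}$ are identical in both calculi and do not depend on $\alpha$. For the fixpoint case $\eta^\alpha_i\overline{x}.\overline{\phi}$, the inductive hypothesis gives that the monotone map $F$ defined from $\overline{\phi}$ and the map $\widehat{F}$ defined from $\overline{\widehat{\phi}}$ coincide on $(\powerset{M})^n$; then, because $M$ is finite, the transfinite iteration sequence $F^\beta_\eta$ is eventually constant (it stabilizes at some $\beta_0 \leq |M|^n$, since a strictly monotone chain in a finite lattice has bounded length), so $F^\alpha_\eta = F^\infty_\eta = \widehat{F}^\infty_\eta$ for every $\alpha \in \Ord_\infty$; projecting to the $i$-th coordinate yields $\semantics{\eta^\alpha_i\overline{x}.\overline{\phi}}^\val = \semantics{\eta^\infty_i\overline{x}.\overline{\widehat{\phi}}}^\val$, as required.

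Finally I would invoke the classical result that model checking for the modal $\mu$-calculus on finite models is decidable (the fixpoints can simply be computed by iteration, which terminates on a finite lattice; see e.g.~\cite{AN01,Ven20}). Hence, given $\phi\in\muMLC$, a finite model $\M$ and a point $\point\in M$, the algorithm is: compute $\widehat{\phi}$, run the classical $\mu$-calculus model-checking procedure to decide whether $\point\in\semantics{\widehat{\phi}}$, and return that answer; by the equivalence just established this is correct. There is essentially no obstacle here — the only point requiring a little care is that the stabilization of the fixpoint iteration must be justified uniformly for \emph{all} nested countdown operators at once, which is exactly what the structural induction in the previous paragraph delivers, since finiteness of $M$ makes every iteration in sight terminate.
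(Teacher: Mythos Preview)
Your proof is correct and takes essentially the same approach as the paper, which simply states that the result follows immediately from the observation $\semantics{\phi}=\semantics{\widehat{\phi}}$ on finite models; you just spell out the structural induction that the paper leaves implicit. One small inaccuracy: your claim that $F^\alpha_\eta = F^\infty_\eta$ for \emph{every} $\alpha\in\Ord_\infty$ is literally false for finite $\alpha$ below the stabilization point, but this is harmless in view of the paper's standing convention (stated after Definition~\ref{def:vec-semantics}) that only limit ordinals occur in countdown operators, so that always $\alpha\geq\omega>|M|^n$.
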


Note that as a corollary we get that deciding the winner of a given (finite) countdown game $\semanticGame$ is also decidable, as set of positions where $\eve$ wins can be easily defined in $\muMLC$.

A more interesting problem is \emph{satisfiability}: ``Given $\phi\in\muMLC$, is there a model $\M$ and a point $\point$ s.t. $\point\models\phi$?''.

\begin{proposition}
  A formula $\phi\in\muMLC$ has \emph{positive countdown} if it does not use $\nu^\alpha$ with $\alpha\neq\infty$. The satisfiability problem is decidable for such formulas.
\end{proposition}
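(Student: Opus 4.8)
The plan is to eliminate the $\mu^\alpha$ operators (with $\alpha\neq\infty$) in favour of ordinary least fixpoints at the cost of an \emph{unbounded but finite} search, and then appeal to decidability of satisfiability for the classical $\muML$. Concretely, for a positive-countdown formula $\phi$, every proper countdown operator appearing in $\phi$ is of the form $\mu^\alpha$; by the remark after Definition~\ref{def:vec-semantics} we may assume $\alpha=\omega$ (or more generally a limit, but one checks the argument below goes through for any fixed ordinal, and in the satisfiability setting one may even allow $\alpha$ to range only over recursive ordinals). The key observation is that $\mu^\omega_i\overline x.\overline\phi = \bigvee_{n<\omega}\mu^n_i\overline x.\overline\phi$, and each $\mu^n_i\overline x.\overline\phi$ is, by finitely many applications of the successor-unfolding identity $\mu^{\alpha+1}_i\overline x.\overline\psi \equiv \psi_i[\,x_j\mapsto\mu^\alpha_j\overline x.\overline\psi\,]$, equivalent to an \emph{ordinary} $\muML$ formula $\phi_n$ with no countdown operators at all (replacing $\mu^0$ by $\bot$). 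Thus $\semantics\phi = \bigcup_{n<\omega}\semantics{\phi_n}$ pointwise, where $\phi_n$ is obtained by substituting the $n$-fold unfoldings into $\phi$ (doing this simultaneously for all top-level countdown operators, and recursively inside the unfoldings for nested ones).

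The next step is monotonicity in $n$: since $\phi$ is \emph{positive} in the sense that every countdown is a $\mu^\alpha$ (hence every approximant sequence is increasing), one has $\semantics{\phi_n}\subseteq\semantics{\phi_{n+1}}$ for every $n$, on every model. Consequently $\phi$ is satisfiable if and only if $\phi_n$ is satisfiable for \emph{some} $n<\omega$: if $\point\models\phi$ in $\M$ then $\point\models\phi_n$ for some $n$ by the displayed union; conversely any model of some $\phi_n$ is a model of $\phi$ because $\semantics{\phi_n}\subseteq\semantics\phi$. Now each $\phi_n$ is a sentence of the classical modal $\mu$-calculus, whose satisfiability is decidable (by the standard automata-theoretic decision procedure for $\muML$). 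So the decision procedure for $\phi$ is: enumerate $n=0,1,2,\dots$, and for each $n$ test satisfiability of $\phi_n$; answer ``yes'' as soon as some $\phi_n$ is satisfiable.

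This is a semi-decision procedure for ``satisfiable''; to get full decidability we also need to detect unsatisfiability, i.e. we need a computable bound $N$ such that $\phi$ is satisfiable iff $\phi_N$ is. The natural route is a small-model argument: if $\phi$ is satisfiable then, being a classical $\muML$-consequence-free formula after unfolding, one of the $\phi_n$ is satisfiable, and classical $\muML$ has the finite-model property with a computable bound on model size in terms of $|\phi_n|$. But $|\phi_n|$ grows with $n$, so instead one argues directly on $\phi$: run the analysis through the countdown-automaton $\A_\phi$ of Section~\ref{sec:formtoaut}. Because $\phi$ has positive countdown, all nonstandard ranks of $\A_\phi$ belong to $\eve$ (they are $\mu$-ranks), so in the semantic game it is $\eve$ who decrements those counters. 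One then shows that if $\eve$ wins $\semanticGame(\A_\phi)$ from the root on \emph{some} model, she wins on some model of size bounded by a computable function of $|Q|$ and the ranks only (not the counter values): intuitively, once a strategy for $\eve$ on a small parity-like model is fixed, the counters can be set to any finite values needed, and finiteness of the model bounds how often each $\eve$-rank is actually revisited. Formalising this yields a bound $N$ on the needed unfolding depth, completing the decision procedure.

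The main obstacle is exactly this last step: extracting a \emph{computable} bound $N$ so that unsatisfiability is detected, rather than merely semi-deciding satisfiability. The difficulty is that the countdown game is not positionally determined, so one cannot directly reuse the parity-automaton small-model machinery; the workaround is to exploit that in positive-countdown formulas the nonstandard ranks are all $\eve$'s, transfer $\eve$'s winning strategy to a finite sub-model via a pumping/unravelling argument on the semantic game, and only afterwards choose counter values large enough for that finite strategy — at which point the required unfolding depth is bounded by the (finite) number of distinct $\eve$-rank revisits, which is controlled by the model size. I expect the bulk of the work, deferred to an appendix, to be this pumping argument and the resulting explicit bound.
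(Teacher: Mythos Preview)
Your semi-decision procedure is correct, but the completion you sketch---a pumping argument on countdown games to extract a computable unfolding bound $N$---is left as an open obligation and is doing far more work than necessary.

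The paper's argument is a two-line observation that sidesteps all of this. Let $\widehat{\phi}$ be $\phi$ with every $\mu^\alpha$ replaced by $\mu^\infty$; this is a classical $\muML$ sentence. Positive countdown gives $\semantics{\phi}\subseteq\semantics{\widehat{\phi}}$ on every model, since each $\mu^\alpha$ is a lower approximant of the corresponding least fixpoint. Hence if $\phi$ is satisfiable then so is $\widehat{\phi}$, and by the finite model property of $\muML$, $\widehat{\phi}$ has a \emph{finite} model $\M$. But on a finite model every monotone map reaches its least fixpoint in finitely many steps, so $\semantics{\phi}=\semantics{\widehat{\phi}}$ there, and $\M$ models $\phi$ as well. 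Thus $\phi$ is satisfiable iff $\widehat{\phi}$ is: a single decidable $\muML$ query, with no enumeration over $n$ and no bound to compute.

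Your missing bound $N$ actually falls out of this for free: it is just the small-model size bound for $\widehat{\phi}$, since on a model of size $m$ one has $\mu^\alpha=\mu^m$ for every $\alpha\geq m$. Your instinct to pass to a small model was right; the detour through finite unfoldings $\phi_n$ and a bespoke pumping argument on countdown games is what obscured the direct route of going \emph{up} to $\mu^\infty$ rather than \emph{down} to $\mu^n$.
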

\begin{proof}
  Observe that for $\phi$ with positive countdown, in every model we have $\semantics{\phi}\subset\semantics{\widehat{\phi}}$. Hence, if $\phi$ is satisfiable, then so is $\widehat{\phi}$ -- but since $\muML$ has a finite model property, this means that $\widehat{\phi}$ has a \emph{finite} model, where $\widehat{\phi}$ and $\phi$ are equivalent. Thus, $\phi$ is satisfiable iff $\widehat{\phi}$ is, and the problem reduces to $\muML$ satisfiability.
\end{proof}
Dualizing the above we get that the \emph{validity} problem is decidable for formulas with \emph{negative countdown}, i.e. with $\alpha=\infty$ for every $\mu^\alpha$.

\bibliographystyle{plainurl}
\bibliography{bib}

\begin{thebibliography}{10}

\bibitem{AN01}
Andr{\'e} Arnold and Damian Niwinski.
\newblock {\em Rudiments of calculus}.
\newblock Elsevier, 2001.

\bibitem{Boj11}
Miko{\l}aj Boja{\'n}czyk.
\newblock Weak mso with the unbounding quantifier.
\newblock {\em Theory of Computing Systems}, 48(3):554--576, 2011.
\newblock \href {https://doi.org/10.1007/s00224-010-9279-2}
  {\path{doi:10.1007/s00224-010-9279-2}}.

\bibitem{BKSZ20}
Miko\l{}aj Boja\'{n}czyk, Edon Kelmendi, Rafa\l{} Stefa\'{n}ski, and Georg
  Zetzsche.
\newblock Extensions of $\omega$-regular languages.
\newblock In {\em Proceedings of the 35th Annual ACM/IEEE Symposium on Logic in
  Computer Science}, LICS '20, page 266–272, 2020.
\newblock \href {https://doi.org/10.1145/3373718.3394779}
  {\path{doi:10.1145/3373718.3394779}}.

\bibitem{BPT16}
Mikolaj Bojanczyk, Pawel Parys, and Szymon Torunczyk.
\newblock {The MSO+U Theory of (N,<) Is Undecidable}.
\newblock In {\em 33rd Symposium on Theoretical Aspects of Computer Science
  (STACS 2016)}, volume~47 of {\em Leibniz International Proceedings in
  Informatics (LIPIcs)}, pages 21:1--21:8, 2016.
\newblock \href {https://doi.org/10.4230/LIPIcs.STACS.2016.21}
  {\path{doi:10.4230/LIPIcs.STACS.2016.21}}.

\bibitem{BT12}
Mikołaj Bojanczyk and Szymon Torunczyk.
\newblock {Weak MSO+U over infinite trees}.
\newblock In {\em 29th International Symposium on Theoretical Aspects of
  Computer Science (STACS 2012)}, volume~14 of {\em Leibniz International
  Proceedings in Informatics (LIPIcs)}, pages 648--660, 2012.
\newblock \href {https://doi.org/10.4230/LIPIcs.STACS.2012.648}
  {\path{doi:10.4230/LIPIcs.STACS.2012.648}}.

\bibitem{Boj04}
Mikołaj Bojańczyk.
\newblock A bounding quantifier.
\newblock In Jerzy Marcinkowski and Andrzej Tarlecki, editors, {\em Computer
  Science Logic}, pages 41--55, 09 2004.
\newblock \href {https://doi.org/10.1007/978-3-540-30124-0_7}
  {\path{doi:10.1007/978-3-540-30124-0_7}}.

\bibitem{BC06}
Mikołaj Bojańczyk and Thomas Colcombet.
\newblock Bounds in w-regularity.
\newblock In {\em Proceedings - Symposium on Logic in Computer Science}, pages
  285 -- 296, 01 2006.
\newblock \href {https://doi.org/10.1109/LICS.2006.17}
  {\path{doi:10.1109/LICS.2006.17}}.

\bibitem{BDGPS20bis}
Mikołaj Bojańczyk, Laure Daviaud, Bruno Guillon, Vincent Penelle, and A.~V.
  Sreejith.
\newblock Undecidability of mso+"ultimately periodic".
\newblock {\em ArXiv}, abs/1807.08506, 2018.

\bibitem{CJKLS17}
Cristian~S. Calude, Sanjay Jain, Bakhadyr Khoussainov, Wei Li, and Frank
  Stephan.
\newblock Deciding parity games in quasi-polynomial time.
\newblock {\em SIAM Journal on Computing}, 0(0), 2017.
\newblock \href {http://arxiv.org/abs/https://doi.org/10.1137/17M1145288}
  {\path{arXiv:https://doi.org/10.1137/17M1145288}}, \href
  {https://doi.org/10.1137/17M1145288} {\path{doi:10.1137/17M1145288}}.

\bibitem{Colc13}
Thomas Colcombet.
\newblock {Regular Cost Functions, Part I: Logic and Algebra over Words}.
\newblock {\em {Logical Methods in Computer Science}}, {Volume 9, Issue 3},
  2013.
\newblock URL: \url{https://lmcs.episciences.org/1221}, \href
  {https://doi.org/10.2168/LMCS-9(3:3)2013}
  {\path{doi:10.2168/LMCS-9(3:3)2013}}.

\bibitem{HK17}
Lauri Hella, Antti Kuusisto, and Raine R\"{o}nnholm.
\newblock Bounded game-theoretic semantics for modal mu-calculus.
\newblock In {\em GandALF}, 2020.

\bibitem{HS12}
Szczepan Hummel and Michał Skrzypczak.
\newblock The topological complexity of mso+u and related automata models.
\newblock {\em Fundamenta Informaticae}, 119:87--111, 2012.

\bibitem{JW96}
David Janin and Igor Walukiewicz.
\newblock On the expressive completeness of the propositional mu-calculus with
  respect to monadic second order logic.
\newblock In Ugo Montanari and Vladimiro Sassone, editors, {\em CONCUR '96:
  Concurrency Theory}, pages 263--277, 1996.

\bibitem{Koz83}
Dexter Kozen.
\newblock Results on the propositional $\mu$-calculus.
\newblock {\em Theoretical Computer Science}, 27(3):333--354, 1983.
\newblock \href {https://doi.org/https://doi.org/10.1016/0304-3975(82)90125-6}
  {\path{doi:https://doi.org/10.1016/0304-3975(82)90125-6}}.

\bibitem{Skrzyp16}
Michał Skrzypczak.
\newblock {\em Descriptive Set Theoretic Methods in Automata Theory -
  Decidability and Topological Complexity}, volume 9802 of {\em Lecture Notes
  in Computer Science}.
\newblock Springer, 2016.
\newblock \href {https://doi.org/10.1007/978-3-662-52947-8}
  {\path{doi:10.1007/978-3-662-52947-8}}.

\bibitem{Ven20}
Yde Venema.
\newblock Lectures on the modal $\mu$-calculus, 2020.

\bibitem{Wal01}
Igor Walukiewicz.
\newblock Pushdown processes: Games and model-checking.
\newblock {\em Information and Computation}, 164(2):234--263, 2001.
\newblock \href {https://doi.org/https://doi.org/10.1006/inco.2000.2894}
  {\path{doi:https://doi.org/10.1006/inco.2000.2894}}.

\end{thebibliography}

\newpage
\appendix

\section{Technical observations about countdown games}\label{app:games}

The following basic results about countdown games will be useful in subsequent sections.

As we mentioned, it is beneficial for each player to keep her/his counter as big as possible. More precisely, given a countdown game, define a partial order $\worse_\exists$ on its configurations: $\pconf{v,\ctr}\worse_\exists\pconf{v,\ctr'}$ and $\cconf{v,\ctr}\worse_\exists\cconf{v,\ctr'}$ if and only if $\ctr(r)\leq\ctr'(r)$ for all $r\in\nonstandardRanks_\exists$ and $\ctr(r)\geq\ctr'(r)$ for all $r\in\nonstandardRanks_\forall$. A partial order $\worse_\forall$ is defined analogously, and it is clearly the inverse of $\worse_\exists$, that is $\gamma\worse_\forall\delta$ if and only if $\delta\worse_\exists\gamma$ for all configurations $\gamma,\delta$. Both orders extend to plays seen as sequences of configuration and compared pointwise. It easily follows from the definition that if $\gamma\worse_\exists\delta$ and $\eve$ has a move from $\gamma$ to a configuration $\gamma'$, then she has a move from $\delta$ to a $\delta'$ such that $\gamma'\worse_\exists\delta'$. A symmetric statement holds for $\adam$. As a result, if $\eve$ has a winning strategy from $\gamma$ and $\gamma\worse_\exists\delta$ then she has a winning strategy from $\delta$; analogously for $\adam$.

Another easy observation is that if the countdown starts from limit ordinals, one may always choose values greater by a finite $k$ than the ones given by some fixed strategy. More specifically,
%
%\begin{proposition}\label{Prop-Plus-K}
for a counter valuation $\ctr$ and a number $k<\omega$, define a valuation $\ctr+_\exists k$ by:
\begin{itemize}
\item $(\ctr+_\exists k)(r)=\text{min}(\ctr(r)+k,\bounds(r))$ for $r\in\nonstandardRanks_\exists$, and
\item $(\ctr+_\exists k)(r)=\ctr(r)$ for $r\in\nonstandardRanks_\forall$.
\end{itemize}
For a play $\pi$, let $\pi+_\exists k$ be the play pointwise equal to $\pi$ on positions, and replacing every counter valuation $\ctr$ with $\ctr+_\exists k$. If all the initial values $\bounds$ are limit ordinals, then for every strategy $\sigma$ there exist a strategy \emph{$k$-above $\sigma$}, denoted $\sigma+_\exists k$, such that $\pi$ is a $(\sigma+_\exists k)$-play iff $\pi=\pi'+_\exists k$ for some $\sigma$-play $\pi'$. Moreover, if $\sigma$ is winning for $\eve$ then so is $\sigma+_\exists k$. An analogous argument works for $\adam$'s winning strategies.

As mentioned, countdown games are not positionally determined. Below we show a much weaker (yet still useful) property: players can win with strategies that do not depend on the counters in finite stages of the game. Consider a countdown game $(V,E,\rank, \bounds)$. For a countdown play:
 \[
   \pi=v_1\ctr_1 v_2...\ctr_{n-1},v_n, \text{\ \ \ \ or \ \ \ \ } \pi=v_1\ctr_1 v_2...\ctr_{n-1},v_n\ctr_n
 \] 
denote by $\positions(\pi)$ the sequence of consecutive positions $v_1...v_n$. Given a phase $\phase$ of the game and a strategy $\sigma$ for player $P$, we say that a partial function $f:V^*\to~V$ \emph{guides} $\sigma$ in $\phase$ if for every $v\in V$ and $\sigma$-plays $\pi,\pi v\in\phase$ such that $v$ is a position chosen by $P$, the value $f(\positions(\pi))$ is defined and equals $v$. We say that $\sigma$ is \emph{counter-independent in $\phase$} or \emph{$\phase$-counter-independent} iff it is guided in $\phase$ by some partial function called the \emph{$\phase$-component} of $\sigma$ and denoted $\sigma^\phase$. Phase $\phase$ is \emph{proper} if membership in $\phase$ does not depend on the counter values, meaning that for plays $\pi,\pi'$ of the same length, $\positions(\pi)=\positions(\pi')$ implies $\pi\in\phase\iff\pi'\in\phase$.

\begin{proposition}\label{Prop-Finite-CtrIndep}
Take a countdown game $\game=(V,E,\rank, \bounds)$ and a proper phase $\phase$ of $\game$. Assume that the set $\positions[\phase]=\{\positions(\pi)\ |\ \pi\in\phase\}$ is finite. If $\eve$ wins from configuration $\gamma_I$, then she wins with a strategy that is counter-independent in $\phase$. 
\end{proposition}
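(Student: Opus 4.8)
The plan is to reduce the problem to the analogous statement for parity games and then extract a counter-independent strategy by a König-type argument over the finite set $\positions[\phase]$. First I would recall that a countdown game $\game$ can be viewed as an ordinary parity game $\game^c$ whose positions are the configurations of $\game$; since $\eve$ wins from $\gamma_I$, she has a winning strategy $\sigma$ in $\game^c$, and by the "bigger counters are better" observation (the $\worse_\exists$-monotonicity recalled at the start of Appendix~\ref{app:games}) we may assume $\sigma$ is \emph{monotone}: along any two $\sigma$-consistent plays that agree on positions, the one with pointwise $\worse_\exists$-larger counters stays $\worse_\exists$-larger. Concretely, for a $\sigma$-play $\pi$ ending in a configuration owned by $\eve$, the position $\sigma$ picks may depend on the counters, but I would argue it can be made to depend only on $\positions(\pi)$ \emph{within the phase} $\phase$.

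The key step uses finiteness of $\positions[\phase]$. Because $\phase$ is proper, membership of a play in $\phase$ depends only on its sequence of positions, so the set of position-sequences $\positions[\phase]$ is a finite prefix-closed (more precisely, convex) set; enumerate it as a finite tree $T$. I would define the guiding function $f:\positions[\phase]\to V$ by induction on the length of $w\in T$: given $w$ ending in an $\eve$-position, consider all $\sigma$-consistent plays $\pi\in\phase$ with $\positions(\pi)=w$ that are reachable when $\eve$ has so far followed the moves prescribed by $f$ on the proper prefixes of $w$; among the resulting counter valuations pick a $\worse_\exists$-maximal one (or, if none is maximal, use the $+_\exists k$ trick to perturb $\sigma$ upward and re-run the argument on a strategy whose counter choices are "$k$-above" — this is where the limit-ordinal hypothesis on $\bounds$, implicit in the surrounding development, would be invoked), and let $f(w)$ be the position $\sigma$ prescribes there. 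Monotonicity of $\sigma$ guarantees that \emph{every} $\sigma$-consistent play in $\phase$ with positions $w$ — not just the maximal one — can be matched, move for move inside $\phase$, by a play that is $\worse_\exists$-dominated by the maximal one, so $\eve$ following $f$ inside $\phase$ and then reverting to $\sigma$ (suitably shifted up) afterwards never does worse than $\sigma$ itself. Hence the resulting strategy $\sigma'$ is still winning, and by construction its restriction to $\phase$ is guided by $f$, i.e. $\sigma'$ is counter-independent in $\phase$ with $\sigma'^\phase = f$.

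I expect the main obstacle to be the bookkeeping in that inductive construction: one must be careful that choosing a $\worse_\exists$-maximal counter valuation at node $w$ is consistent with the choices already fixed at prefixes of $w$, and that when no maximum exists the upward perturbation by a finite $k$ (which requires the initial counters to be limit ordinals, or at least additively indecomposable in the sense of the paper's footnote) genuinely produces a new winning strategy whose counters dominate the old ones everywhere along $\phase$ simultaneously. The finiteness of $\positions[\phase]$ is what makes a single finite $k$ suffice and what lets the induction terminate; the properness of $\phase$ is what makes $f$ well-defined as a function of position-sequences alone. Once these points are handled, the conclusion — that $\eve$ wins with a $\phase$-counter-independent strategy — is immediate.
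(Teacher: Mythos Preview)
Your proposal has the monotonicity direction backwards at the decisive step. The difficult case is not $\eve$'s positional choices but $\adam$'s countdown moves: when $\adam$ decrements one of his counters from a limit ordinal he has infinitely many choices $\alpha$, and for each $\alpha$ the guiding function $\eve$ needs for the remainder of $\phase$ may differ. You propose to look at the $\worse_\exists$-\emph{maximal} reachable counter valuation at a node $w$ and set $f(w)$ to be $\sigma$'s move there, but $\worse_\exists$-maximal means \emph{best for $\eve$} (her counters large, $\adam$'s small); a move that wins from the best reachable configuration tells you nothing about the worse ones, which is exactly where you must still win. What you would need is a move that works at a $\worse_\exists$-\emph{minimal} configuration, and precisely after $\adam$ decrements from a limit ordinal there is no such minimum (his counter can be any $\alpha$ below the bound). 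The $+_\exists k$ perturbation only shifts $\eve$'s counters upward and is irrelevant to $\adam$'s unbounded range of choices, so it cannot rescue the argument. This is not a bookkeeping issue; it is the core obstruction.

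The paper's proof runs a \emph{backward} induction on the length of plays in $\phase$ (from the leaves of $\positions[\phase]$ toward the root) and resolves the limit-ordinal case by pigeonhole rather than by extremal choice. At a node where $\adam$ has just chosen some $\alpha$, the inductive hypothesis supplies for each $\alpha$ a winning strategy $\sigma_{\pi_\alpha}$ that is counter-independent in the remaining subphase $\phase_{\pi_\alpha}$, guided by some $g_\alpha$. Finiteness of $\positions[\phase]$ bounds the number of possible guides, so some fixed $g$ equals $g_\alpha$ for cofinally many $\alpha$. Then for whatever $\alpha$ $\adam$ actually picks, $\eve$ finds $\alpha'\geq\alpha$ with $g_{\alpha'}=g$ and plays as if $\adam$ had chosen $\alpha'$; since the configuration after $\alpha'$ is $\worse_\exists$-worse for $\eve$ than the actual one, the guide $g$ still steers her to a win. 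This pigeonhole-plus-domination step is the heart of the argument and is absent from your outline.
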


\begin{proof}
The assumption on $\positions[\phase]$ implies that there exists a finite bound $l_\max$ on the length of plays in $\phase$.
Consider a winning strategy $\sigma$ for $\eve$.
% Recall that $\phase_\pi$ is the subtree of $\phase$ consisting of plays with $\pi$ as a prefix.
We show by induction on $0\leq l\leq l_\max$ that:
  \begin{align*}
    \parbox[t]{12.5cm}{
      For every $\sigma$-play $\pi$ of length $|\pi|=l_\max-l$,
      there exists a winning strategy $\sigma_\pi$ for $\game,\gamma_I$ that is counter-independent in the subphase $\phase_\pi$ of $\phase$ and equal to $\sigma$ on plays without a prefix from $\phase_\pi$.
    }
  \end{align*}
  Once we prove the claim for $l=l_
  \max$, we obtain a strategy $\sigma_{\epsilon}$ counter-independent in $\phase_{\epsilon}=\phase$, as desired.

The base case is $l=0$ where there is nothing to prove, as $|\pi|=l_\max$ implies that either $\phase_\pi=\{\pi\}$ if $\pi\in\phase$ or $\phase_\pi=\emptyset$ otherwise. In both cases $\sigma$ is trivially guided in $\phase_\pi$ by a partial function undefined on every argument.

For the inductive step, assume that the claim is true for $l$ and for every $\sigma$-play $\pi$ with $|\pi|=l_\max-l$ denote the $\phase_\pi$-component of $\sigma_\pi$ by $\sigma^{\phase_\pi}$. Given a $\sigma$-play $\pi$ with $|\pi|=l_\max-l-1$, there are three cases to consider:
\begin{itemize}
\item After $\pi$ it is $\eve$ who makes a move. Since $\pi$ is a $\sigma$-play, $\sigma$ provides a move $z=\sigma(\pi)$. Since $\pi z$ is also a $\sigma$-play and $|\pi z|=l_\max-l$, by induction hypothesis there exists a winning $\sigma_{\pi z}$ that is counter-independent in $\phase_{\pi z}$.
$\eve$ can therefore win with the strategy: % $\sigma_\pi=\sigma_{\pi z}$.
\[
  \sigma_\pi(\rho) =
  \begin{cases}
    % z & \text{if $\rho=\pi$},\\
    \sigma_{\pi z}(\rho) & \text{if $\pi z$ is a prefix of $\rho$},\\
    \sigma(\rho) & \text{otherwise.}
  \end{cases}
\]
% which, by induction hypothesis, equals $\sigma_{\pi z}$.
Unless $\pi,\pi z\in\phase$ and $z\in V$, the strategy $\sigma_{\pi}$ is guided by $\sigma^{\phase_{\pi}}=\sigma^{\phase_{\pi z}}$ in $\phase_\pi$ and otherwise it is guided by:
\[
  \sigma^{\phase_\pi}(\overline{v}) = 
  \begin{cases}
    z & \text{if $\overline{v}=\positions(\pi)$},\\
    \sigma^{\phase_{\pi z}}(\overline{v}) & \text{otherwise.}
  \end{cases}
\]

\item After $\pi$ $\adam$ chooses a position $v$ from a set $W\subset V$. For every such $v$, $\pi v$ is a $\sigma$-play, $|\pi v|=l_\max-l$ and hence induction hypothesis provides $\sigma_{\pi v}$ guided by $\sigma^{\phase_{\pi v}}$ in $\phase_{\pi v}$. % and consistent with $\pi v$.
We combine strategies for all the possible choices from $W$:
\[
  \sigma_\pi(\rho) =
  \begin{cases}
    \sigma_{\pi v}(\rho) & \text{if $\pi v$ is a prefix of $\rho$},\\
    \sigma(\rho) & \text{otherwise.}
  \end{cases}
\]
Such $\sigma_\pi$ is guided in $\phase_\pi$ by:
\[
  \sigma^{\phase_\pi}(\overline{v}) = 
  \begin{cases}
    \sigma^{\phase_{\pi v}}(\overline{v}) & \text{$\overline{v}$ has $\positions(\pi v)$ as a prefix,}\\
    \text{undefined} & \text{otherwise.}
  \end{cases}
\]

\item After $\pi$ $\adam$ updates the current counters $\ctr$ to $\ctr'$. The only interesting case is when the current rank $r$ is nonstandard and so $\ctr'$ is given by a choice of an ordinal $\alpha<\ctr(r)$ (the case with standard $r$ is similar to the first one). Denote such $\ctr'$ by $\ctr_\alpha$ and $\pi\ctr_\alpha$ by $\pi_\alpha$. For every $\alpha<\ctr(r)$ the play $\pi_\alpha$ is consistent with $\sigma$ and $|\pi_\alpha|=l_\max-l$, so induction hypothesis gives us $\sigma_{\pi_\alpha}$ guided by $\sigma^{\phase_{\pi_\alpha}}$ in $\phase_{\pi_\alpha}$.

Observe that for plays $\pi_\alpha,\pi_\beta$ leading to configurations $\gamma_\alpha$ and $\gamma_\beta$, respectively, we have $\gamma_\alpha\succcurlyeq_\exists \gamma_\beta$ whenever $\alpha<\beta$. It follows that if after $\pi$ $\adam$ chooses $\alpha$, $\eve$ may as well continue as if he picked $\beta$. More precisely, she may play maintaining the invariant that for the current play $\pi_\alpha\xi$ there exists a $\sigma_\beta$-play $\pi_\beta\xi'$ with $\pi_\beta\xi'\worse_\exists\pi_\alpha\xi$. Denote such strategy by $\sigma_{[\beta/\alpha]}$.

Importantly, if $\alpha\leq\beta$ and $\sigma^{\phase_{\pi_\beta}}$ guides $\sigma_{\pi_\beta}$ in $\phase_{\pi_\beta}$, then it also guides $\sigma_{[\beta/\alpha]}$ in $\phase_{\pi_\alpha}$.
This is because whenever $\sigma_{[\beta/\alpha]}$-plays $\pi_\alpha\xi$ and $\pi_\alpha\xi v$ belong to $\phase_{\pi_\alpha}$ and $v$ is chosen by $\eve$, there exists $\pi_\beta\xi'\worse_\exists\pi_\alpha\xi$ such that $\pi_\beta\xi'$ and $\pi_\beta\xi'v$ are $\sigma_{\pi_\beta}$-plays. Since $\pi_\beta\xi'\worse_\exists\pi_\alpha\xi$ implies $\positions(\pi_\alpha\xi)=\positions(\pi_\beta\xi')$, by properness of $\phase$ both $\pi_\beta\xi'$ and $\pi_\beta\xi'v$ belong to $\phase_{\pi_\beta}$. Hence, $\sigma^{\phase_{\pi_\beta}}(\positions(\pi_\alpha\xi))=\sigma^{\phase_{\pi_\beta}}(\positions(\pi_\beta\xi'))=v$, as desired.

There are two cases to consider, depending on whether $\ctr(r)$ is a limit ordinal or not. If it is a successor ordinal then there is a maximal $\alpha$ that can be chosen by $\adam$. In that case, $\eve$ uses the strategy:
\[
  \sigma_\pi(\rho)=
  \begin{cases}
    \sigma_{[\alpha/\beta]}(\rho) & \text{if $\pi_\beta$ is a prefix of $\rho$,}\\
    \sigma(\rho) & \text{otherwise,}
  \end{cases}
\]
guided in $\phase_\pi$ by $\sigma^{\phase_\pi}=\sigma^{\phase_{\pi_\alpha}}$.

On the other hand, if $\ctr(r)$ is a limit ordinal then there is no maximal $\alpha$ that $\adam$ can choose, and for each of his choices $\eve$ might have used a different $\sigma_{\pi_\alpha}$.
However, by assumption the set of positions that appear in $\phase$ is finite. As a consequence, there are only finitely many possible partial functions guiding $\sigma_{\pi_\alpha}$ in $\phase_{\pi_\alpha}$ and we may find $\sigma^{\phase_{\pi}}$ such that $\sigma_{\pi_\alpha}$ is guided in $\phase_{\pi_\alpha}$ by $\sigma^{\phase_\pi}$ for arbitrarily big $\alpha<\ctr(r)$. Define:
\[
  \sigma_\pi(\rho)=
  \begin{cases}
    \sigma_{[\alpha'/\alpha]}(\rho) & \text{if $\pi_{\alpha}$ is a prefix of $\rho$,}\\
    \sigma(\rho) & \text{otherwise,}
  \end{cases}
\]
where $\alpha'\geq\alpha$ is the least number greater than $\alpha$ with $\sigma^{\phase_{\pi_\alpha}}=\sigma^{\phase_\pi}$. By design, $\sigma_\pi$ is guided by $\sigma^{\phase_\pi}$ in $\phase_\pi$.
\end{itemize}
\end{proof}

Sometimes we will want to decompose games into smaller fragments. Given a game $\game=(V,E,\rank,\bounds)$ and a subset $\exitPos\subset V$, the \emph{partial game} $\game|\exitPos$ is played the same as $\game$, except that upon reaching a configuration with a position in $\exitPos$ (this is always a countdown configuration, as they are the first to be reached in any given position) the play ends with a draw, meaning that no player looses or wins. For a non-loosing $\eve$'s strategy $\sigma$ for $\game|\exitPos,v$, the set of its \emph{exit positions}, denoted $\exit(\sigma)\subset\exitPos$, consists of all the positions $u\in\exitPos$ such that some $\sigma$-play ends in a countdown configuration in $u$.

Given another $\game'=(V',E',\rank',\bounds')$ and a subset of positions $\exitPos\subset V\cap V'$, we say that $\game,v$ is \emph{exit-equivalent} to $\game',v'$ with respect to $\exitPos$, denoted $\game,v\exitEquiv{\exitPos}\game',v'$, if for every non-loosing $\eve$'s strategy $\sigma$ for $\game|\exitPos,v$ she has a non-loosing $\sigma'$ for $\game'|\exitPos,v'$ s.t. $\exit(\sigma)=\exit(\sigma')$, and symmetrically for every $\sigma'$ for $\game'|\exitPos,v'$. Exit-equivalence $\exitEquiv{\exitPos}$ is an equivalence relation between games having $Z$ as a subset of its positions.

\begin{lemma}[Decomposition Lemma]\label{Lem-Decomposition}
  Consider games $\game$ and $\game'$ as above such that the most important ranks $r=\max(\rangeRank)$ and $r'=\max(\rangeRank')$ have the same type (meaning $r\in\rangeRank_\exists$ iff $r'\in\rangeRank'_\exists$ and either both are standard or $\bounds(r)=\bounds'(r')$). Assume disjoint subsets $\exitPos_\exit,Z_\max\subset V\cap V'$ with $\exitPos_\max=\rank^{-1}(r)=\rank'^{-1}(r')$. For every $v_I\in V$ and $v'_I\in V'$:
  \begin{itemize}
    \item $\game,v_I \exitEquiv{\exitPos_\exit\cup \exitPos_\max} \game',v'_I$ and:
    \item $ \game,v \exitEquiv{\exitPos_\exit\cup \exitPos_\max} \game',v$  for all $v\in \exitPos_\max$,
  \end{itemize}
  implies $\game,v_I \exitEquiv{\exitPos_\exit} \game',v'_I$.
\end{lemma}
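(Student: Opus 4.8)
The plan is to prove the Decomposition Lemma by a "zipping'' argument: given a non-losing $\eve$-strategy $\sigma$ for $(\game|\exitPos_\exit),v_I$, build a non-losing strategy $\sigma'$ for $(\game'|\exitPos_\exit),v'_I$ with the same exit set, and symmetrically. The key observation is that the highest rank $r$ is visited only at positions in $\exitPos_\max$, so any play in $\game$ that does not terminate at an exit in $\exitPos_\exit$ either (a) never reaches $\exitPos_\max$, or (b) decomposes as a finite play ending the first time it hits some $v\in\exitPos_\max$, followed by a play that starts afresh at $v$. A play that never hits $\exitPos_\max\cup\exitPos_\exit$ is an infinite play whose ranks all lie strictly below $r$, so its winner is determined by the lower part of the game, which the first hypothesis $\game,v_I \exitEquiv{\exitPos_\exit\cup\exitPos_\max}\game',v'_I$ already handles at the level of the partial game $\game|(\exitPos_\exit\cup\exitPos_\max)$.

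First I would make the decomposition precise. For $\eve$'s strategy $\sigma$ on $\game|\exitPos_\exit$, consider its "restriction'' $\sigma_0$ to the partial game $\game|(\exitPos_\exit\cup \exitPos_\max)$, i.e. stop plays also at $\exitPos_\max$; this is non-losing with $\exit(\sigma_0)\subseteq\exitPos_\exit\cup\exitPos_\max$. By the first hypothesis there is a non-losing $\sigma'_0$ on $\game'|(\exitPos_\exit\cup\exitPos_\max)$ with $\exit(\sigma'_0)=\exit(\sigma_0)$. Whenever a $\sigma$-play on $\game|\exitPos_\exit$ passes through some $v\in\exitPos_\max$, the suffix of $\sigma$ from that point is (after forgetting the prefix) a non-losing strategy for $\game|(\exitPos_\exit\cup\exitPos_\max),v$ — note reaching $v\in\exitPos_\max$ resets all lower counters and leaves the counter of $r$ (if nonstandard) strictly decreased, but we can still apply exit-equivalence at $v$. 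By the second hypothesis $\game,v\exitEquiv{\exitPos_\exit\cup\exitPos_\max}\game',v$ we get a matching non-losing strategy on $\game'|(\exitPos_\exit\cup\exitPos_\max),v$ with the same exit set. Now glue: in $\game'$, play $\sigma'_0$ until (if ever) the play reaches some $v\in\exitPos_\max$; then switch to the translated strategy provided at $v$ for the rest of the play, again stopping at $\exitPos_\exit$ or re-entering $\exitPos_\max$, in which case recurse. Because at every visit to $\exitPos_\max$ the counter at rank $r$ (if nonstandard) strictly decreases, and ordinals are well-founded, this recursion terminates: any play either ends at an exit in $\exitPos_\exit$, or ends at a position in $\exitPos_\max$ from which (in the partial game $\game'|\exitPos_\exit$) play continues but with no further visits to $\exitPos_\max$, or is infinite with only finitely many visits to $\exitPos_\max$. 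In the finitely-many-visits case the tail of the play lives in $\game'|(\exitPos_\exit\cup\exitPos_\max)$ without ever exiting, hence is non-losing by construction; in particular $r$ (when standard) is visited only finitely often, so $\eve$ does not lose on account of $r$ either. If $r$ is nonstandard, strict decrease of its counter forbids infinitely many visits to $\exitPos_\max$ outright, so the counter at $r$ never causes a loss. Tracking exits through this gluing shows $\exit(\sigma')=\exit(\sigma)$, and the symmetric construction starting from a $\game'$-strategy gives the converse, establishing $\game,v_I\exitEquiv{\exitPos_\exit}\game',v'_I$.

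The main obstacle I expect is the bookkeeping in the gluing step, specifically ensuring the recursion over visits to $\exitPos_\max$ is well-founded and that the glued object is a genuine strategy (a function of the history) rather than merely a family of consistent plays. The clean way is to index the recursion by the value of the $r$-counter: when $r$ is nonstandard its value is a strictly decreasing ordinal along successive visits to $\exitPos_\max$, so transfinite recursion on that value produces, for each reachable configuration $\pconf{v,\ctr}$ with $v\in\exitPos_\max$, a designated non-losing continuation strategy, and the glued strategy reads off the current configuration to decide which continuation to follow. When $r$ is standard one argues instead that $\eve$'s original $\sigma$ already visits $\exitPos_\max$ only finitely often on any of its plays (otherwise $r$ would appear infinitely often and, being owned by the loser-determining side in the matching-type sense, would contradict non-losing), so the recursion has finite depth. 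A secondary subtlety is that exit-equivalence is stated for strategies of $\eve$ only, but the statement and all hypotheses are about $\eve$, so no separate treatment of $\adam$ is needed — $\adam$'s moves are simply copied verbatim across $\game$ and $\game'$ wherever the position sets coincide, and at modal or branching points the matching is furnished by the exit-equivalence hypotheses themselves.
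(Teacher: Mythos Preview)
Your overall plan---decompose plays at visits to $\exitPos_\max$, translate each segment using the exit-equivalence hypotheses, and glue---is the same idea as the paper's proof. The paper packages the decomposition into an auxiliary countdown game $\widehat{\game}$ whose positions are elements of $V$ (for $\eve$) and non-losing partial-game strategies (for $\adam$), with a single nontrivial rank of the same type as $r$; it then observes $\game,v\exitEquiv{\exitPos_\exit}\widehat{\game},v$ and constructs a bisimulation between $\widehat{\game}|\exitPos_\exit$ and $\widehat{\game'}|\exitPos_\exit$ from the hypotheses. Your direct gluing is the unrolled version of this.

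There is, however, a genuine gap in your handling of the standard case. You claim that when $r$ is standard, $\eve$'s original non-losing strategy $\sigma$ visits $\exitPos_\max$ only finitely often along any play, because otherwise ``$r$ would appear infinitely often and \ldots\ contradict non-losing''. This is false when $r\in\rangeRank_\forall$: then an infinite play seeing $r$ infinitely often is a \emph{win} for $\eve$, so $\sigma$ may well produce such plays, and your recursion need not have finite depth. The fix is not to seek well-foundedness at all in this sub-case but to argue directly: if the glued strategy $\sigma'$ produces a play in $\game'$ visiting $\exitPos_\max$ infinitely often, then $r'$ (which by hypothesis has the same type, hence lies in $\rangeRank'_\forall$) is the greatest rank seen infinitely often, so $\eve$ wins. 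The paper's auxiliary-game formulation sidesteps this case split entirely: in $\widehat{\game}$ there is only one nontrivial rank, and the parity/countdown condition on that single rank uniformly decides infinite plays, regardless of owner.

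A second point to watch: your assertion that ``tracking exits through this gluing shows $\exit(\sigma')=\exit(\sigma)$'' gives the inclusion $\exit(\sigma')\subseteq\exit(\sigma)$ immediately, but the reverse inclusion needs care. After a $\sigma'$-play reaches some $v\in\exitPos_\max$, you continue with the translation of \emph{one particular} suffix of $\sigma$ (after one chosen $\sigma$-play through $v$), and a different $\sigma$-play through the same $v$ might have reached different later exits. You should make explicit how the choice of tracked play still allows every $\exitPos_\exit$-exit of $\sigma$ to be realized by some $\sigma'$-play.
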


\begin{proof}
  Playing $\game|\exitPos_\exit,v_I$ can be decomposed into a sequence of alternating choices starting in the initial configuration $\gamma_0=\pconf{v_I,\bounds}$:
\begin{enumerate}
  \item\label{step1} In a configuration $\gamma_i=\pconf{v_i,\ctr_i}$ $\eve$ declares a fragment $\sigma_i$ of her strategy that determines her moves until some countdown configuration in $\exitPos_\exit\cup \exitPos_\max$ is reached;
  \item $\adam$ chooses a finite $\sigma_i$-play $\pi_i$ leading to a configuration $\cconf{v_{i+1},\ctr}$ for some $v_{i+1}\in\exitPos_\exit\cup\exitPos_\max$;
  \item If $v_{i+1}\in\exitPos_\exit$ the game ends in a draw, otherwise the owner of $r=\rank(v_{i+1})$ updates $\ctr$ to some $\ctr_{i+1}$, and the play proceeds from $\gamma_{i+1}=\pconf{v_{i+1},\ctr_{i+1}}$ as in step~\eqref{step1}~above.
%  \item $\eve$ declares $\sigma_{i+1}$ for the part between $c_{i+1}=(v_{i+1},\ctr_{i+1})$ and a next visit in $V_\max$ etc.\\
%  ...
\end{enumerate}

For every configuration $c_i$ as above, the fragments of strategies $\sigma_i$ chosen by $\eve$ in (1) can be identified with strategies for $\game|\exitPos_\exit\cup\exitPos_\max,\pconf{v_i,\ctr_i}$. Since $\game|\exitPos_\exit\cup\exitPos_\max$ always stops before any position in $\exitPos_\max$ is reached, the counter value for $r$ is irrelevant \emph{inside} $\game|\exitPos_\exit\cup\exitPos_\max$:
\[
  \game,\pconf{v_i,\ctr_i}\exitEquiv{\exitPos_\exit\cup\exitPos_\max} \game, \pconf{v_i,\ctr_i[r\mapsto \beta]}
\]
for every $\beta\in\Ord$. Moreover, ranks greater than $r$ are never reached before the game stops and whenever the game enters a position $v\in\exitPos_\max$, counters for all ranks smaller than $r$ are reset back to their initial values from $\bounds$, so each counter assignment $\ctr_i$ has to be of the form $\bounds[r\mapsto \beta]$. Thus, the fragments of strategies in (1) are the same as strategies for $\game|\exitPos_\exit\cup\exitPos_\max,\pconf{v_i,\bounds}$. Finally, since countdown games are configurationally determined, we may assume that in step (2) $\adam$ picks only a position $v_{i+1}\in\exitPos_\max$ (instead of an entire play $\pi_i$ ending with a countdown configuration  in $v_{i+1}$).

Consider a game $\widehat{\game}$ starting in $v_0=v_I$ and consisting of three alternating phases:
\begin{enumerate}
  \item From $v_i\in V$, $\eve$ picks a non-loosing strategy $\sigma_i$ for $\game|\exitPos_\exit\cup\exitPos_\max,v_i$;
  \item $\adam$ chooses an exit position $v_{i+1}\in\exit(\sigma_i)$, that is, a position in $\exitPos_\max$  reachable by $\sigma_i$ ;
  \item if $r$ is nonstandard, its owner decrements the corresponding counter, otherwise nothing changes, and in both cases we proceed from $v_{i+1}$.
\end{enumerate}

Formally, the game $\widehat{\game}$ is a countdown game with positions $\widehat{V}_\exists=V$ and $\widehat{V}_\forall=\bigcup_{v\in V}S_v$, edges $\widehat{E}=\{(v,\sigma)\ |\ v\in V, \sigma\in S_v\}\cup\{(\sigma,v)\ |\ v\in \exit(\sigma)\}$ where:
\[
  S_v=\{\sigma\ |\ \text{$\sigma$ is a non-loosing strategy for $\game|\exitPos_\exit\cup\exitPos_\max, v$}\}
\]
and $\widehat{\rank}(v)=r$ and $\widehat{\rank}(\sigma)=0$ with $\widehat{\rangeRank}=\{0\preceq r\}$ where $r$ has the same type as in $\game$ and $0$ is an irrelevant, standard rank.
It follows from the discussion above that:
\[
  \game,v \exitEquiv{\exitPos_\exit}\widehat{\game},v
\]  
for every $v\in V$. On the other hand, the assumptions of the lemma imply that the relation:
\[
  \{(v_I,v'_I)\}\cup\{(v,v)\ |\ v\in\exitPos_\max\}\cup\{(\sigma,\sigma')\in \widehat{V}_\forall\times\widehat{V'}_\forall\ |\ \exit(\sigma)=\exit(\sigma')\}
\]
is a bisimulation between the arenas of $\widehat{\game}|\exitPos_\exit$ and $\widehat{\game'}|\exitPos_\exit$ that preserves order and type of ranks and ownership of positions. It follows that:
\[
  \game, v_I \exitEquiv{Z_\exit} \widehat{\game}, v_I \exitEquiv{Z_\exit} \widehat{\game'}, v'_I \exitEquiv{\exitPos_\exit} \game', v'_I
\]
which proves the lemma.
\end{proof}

 \section{Proof of Theorem~\ref{Thm-Adequacy-Countdown}}\label{app:formtoaut}
 
 Unfolding the definition of $\semantics{\A_\phi}^\val$ from Definition~\ref{def:semantic-game}, we prove that
 \begin{align}\label{eq:IH1}
 	 \point \in \semantics{\phi}^\val \iff \eve \text{ wins } \semanticGame^\val(\phi) \text{ from } \pconf{(\point,\phi),\bounds}
 \end{align}
by induction on $\phi$. The only interesting case is when $\phi = \mu^\alpha_i \overline{x}.\overline{\phi}$ for some $\overline{x} = \langle x_1, ..., x_n\rangle$, $\overline{\phi} = \langle \phi_1, ..., \phi_n \rangle$ and $\alpha \in \Ord_\infty$ (the case of $\nu^\alpha$ in place of $\mu^\alpha$ is symmetric). If $\alpha = \infty$ (i.e. $\mu^\alpha$ is just a usual fixpoint operator), then the proof is essentially the same as for the classical $\mu$-calculus~\cite{Ven20}. For $\alpha \in \Ord$, a different argument is needed. We prove~\eqref{eq:IH1} by induction on $\alpha$:
\begin{align}\label{eq:IH2}
  \point \in \semantics{\mu^\alpha_i \overline{x}. \overline{\phi}}^\val \iff \eve \text{ wins } \semanticGame^\val(\mu^\alpha_i \overline{x}. \overline{\phi}) \text{ from } \pconf{(\point,\mu^\alpha_i \overline{x}. \overline{\phi}),\bounds}.
\end{align}
To this end, denote $ H_j^\beta = \semantics{\mu^\beta_j \overline{x}. \overline{\phi}}^\val$  for $j \leq n$ and $\beta \in \Ord$. By Definition~\ref{def:vec-semantics} we have:
  \[
    H_j^\alpha = \bigcup_{\beta<\alpha}\semantics{\phi_j}^{\val_\beta}, \qquad\text{where } \val_\beta = \val[x_1 \mapsto H_1^\beta, ..., x_n \mapsto H_n^\beta].
  \]
By the induction hypothesis~\eqref{eq:IH1} applied to $\phi_j$, $\point \in H_j^\alpha$ if and only if there exists a $\beta<\alpha$ such that $\eve$ wins the game $\semanticGame^{\val_\beta}(\phi_j)$ from the position $\pconf{(\point,\phi_j),\bounds}$. 

Consider now the game $\semanticGame^\val(\mu^\alpha_i \overline{x}.\overline{\phi})$ and its initial positional configuration $\pconf{(\point,\mu^\alpha_i \overline{x}. \overline{\phi}),\bounds}$. The first move from this configuration is deterministic, to the countdown configuration $\cconf{(\point,\phi_i),\bounds}$. The next move in the game is made by $\eve$, as she is the owner of $r=\rank(\phi_i)$. Note that $\bounds(r)=\alpha$, and $r$ is the highest rank for which $\bounds$ is defined. Therefore $\eve$ chooses some $\beta<\alpha$ and moves to the configuration $\pconf{(\point,\phi_i),\bounds[r\mapsto\beta]}$.

The game $\semanticGame^\val(\mu^\alpha_i \overline{x}.\overline{\phi})$, played from this configuration, does not differ from $\semanticGame^{\val_\beta}(\phi_i)$ played from $\pconf{(\point,\phi_i),\bounds}$, until some variable $x_j$ is reached. If this happens, in the former game we continue in a game isomorphic to $\semanticGame^\val(\mu^\beta_j \overline{x}.\overline{\phi})$. In the latter game, $\eve$ wins if and only if the current point $\altpoint$ belongs to $H_j^\beta$. Since $\beta<\alpha$, by the induction hypothesis~\eqref{eq:IH2} these two conditions are equivalent. This finishes the proof.
 \section{Proof of Theorem~\ref{thm:auttoform}}\label{app:auttoform}
 
   Fix an automaton $\A=(Q,q_I,\delta,\rank,\bounds)$. For clarity of presentation we only consider the case when $\A$ has no free variables, the general case requires no new ideas. Without losing generality assume that the highest rank $r_\max$ is not assigned to any state and every other rank is assigned to at least one state.
  
We construct, by induction on $r\in \rangeRank$, a formula $\psi_{r,q}$ over the set $Q$ treated as formal variables, with $\rank(q)\geq r$ whenever $q$ occurs free and $\rank(q)<r$ if it occurs bound, and such that for every $\point\in\M$:
  \begin{align}\label{eq:A2F induction hypothesis}
    \semanticGame(\A),(\point,q) \exitEquiv{\exitPos_r} \semanticGame(\psi_{r,q}),(\point,\psi_{r,q})
  \end{align}
  where:
  \[
    \exitPos_r=\{(\altpoint,q)\in M\times Q\ |\ r\leq\rank(q)\}. %=\rank^{-1}[\{r, ... r_\max\}].
  \]
  Note that although formally $\psi_{r,q}$ may contain free variables, the game $\semanticGame^\val(\psi_{r,q})|\exitPos_r,(\point,\psi_{r,q})$ always stops before any such variable is reached, so we ignore the valuation $\val$ and write $\semanticGame(\psi_{r,q})$.
  
  Given~\eqref{eq:A2F induction hypothesis}, since no state in $\A$ has the highest rank $r_\max$, the set $\exitPos_{r_\max}$ is empty and so the games $\semanticGame(\A),(\point,q_I)$ and $\semanticGame(\psi_{r_\max,q_I}),(\point,\psi_{r_\max,q_I})$ are equivalent, which will prove the theorem.
  
  Denote the lowest rank by $0$. The set $\exitPos_0$ contains all the positions of $\semanticGame(\A)$, meaning that $\semanticGame(\A)|\exitPos_0$ stops immediately after the first move. Thus for the base case of~\eqref{eq:A2F induction hypothesis} it is enough to put:
  \begin{itemize}
    \item if $\delta(s)=(\action,p)$:
    \[
      \psi_{0,s} =
       \begin{cases}
        \diamond{\action}p & \text{ if $q$ belongs to $\eve$}\\
        \boxmodal{\action}p & \text{ if $q$ belongs to $\adam$}
       \end{cases}
    \]
    \item if $\delta(s)\subset Q$:
     \[
      \psi_{0,s} =
       \begin{cases}
        \bigvee\delta(s) & \text{ if $q$ belongs to $\eve$}\\
        \bigwedge\delta(s) & \text{ if $q$ belongs to $\adam$}.
       \end{cases}
     \]
  \end{itemize}
 
  For the inductive step, assuming~\eqref{eq:A2F induction hypothesis} for $r$, we will prove it for the next rank, denoted $r+1$. Let $q_1,...,q_d$ be all states in $Q$ with rank $r$. For every $q_i$ define the vectorial formula:
  \[
    \theta_i=\eta^\alpha_{q_i}(q_1,...,q_d).(\psi_{r,q_1},...,\psi_{r,q_d})
  \]
  with $\alpha=\bounds(r)$ and $\eta=\mu$ if $r$ belongs to $\eve$ and $\eta=\nu$ if $r$ belongs to $\adam$. Then put:
  \begin{align}\label{eq:psirp1q}
    \psi_{r+1,q}= \psi_{r,q}[q_1\mapsto\theta_1, ..., q_d\mapsto\theta_d]
  \end{align}
  for every $q\in Q$. We need to prove that:
  \begin{align}\label{eq:A2F induction goal}
    \semanticGame(\A),(\point,q) \exitEquiv{\exitPos_{r+1}} \semanticGame(\psi_{r+1,q}) ,(\point,\psi_{r+1,q})
  \end{align}
  for all $\point\in\M$.

  The arena $V'=M\times\SubFor(\psi_{r+1,q})$ of the game $\semanticGame(\psi_{r+1,q})=(V',E',\rank',\bounds')$ decomposes into $V^\textit{II}=\bigcup_{1\leq i \leq d}V^\textit{II}_i$ for $V^\textit{II}_i=M\times\SubFor(\theta_i)$ and $V^\textit{I}=V'-V^\textit{II}$. Once a play enters $V^\textit{II}_i$ it stays there forever, since a move from $V^\textit{II}_i$ to $V^\textit{I}$ would only be possible if there was a variable free in $\theta_i$ but bound in its proper superformula (and hence also bound in $\psi_{r,q}$). However, if $p$ is bound in $\psi_{r,q}$ then $\rank(p)<r$, whereas $p$ can be free in $\theta_i$ only if $r\leq\rank(p)$. This implies that in $\psi_{r+1,q}$ the only formulas reachable from $\theta_i$ are its strict subformulas. Therefore, putting:
  \[
    \theta = \eta_y^\alpha(y,q_1,...,q_d).(\psi_{r,q},\psi_{r,q_1},...,\psi_{r,q_d})
  \]
  with $y$ a fresh variable we obtain:
  \[
    \semanticGame(\psi_{r+1,q}) ,(\point,\psi_{r+1,q}) \exitEquiv{\exitPos_{r+1}} \semanticGame(\theta) ,(\point,\psi_{r,q})
  \]
  because $V^\textit{I}$ corresponds to $M\times\SubFor(\psi_{r,q})$ and $V^\textit{II}$ to $\bigcup_{i\leq d}M\times\SubFor(\psi_{r,q_i})$ with freshness of $y$ guaranteeing that there is no return from the second part to the first one.

  Denote the rank of $\psi_{r,q_1},...,\psi_{r,q_d}$ in $\semanticGame(\theta)$ by $r'$ and recall that $q_1,...,q_d$ all have rank 0. Consider the game $\widetilde{\semanticGame(\theta)}$ that is the same as $\semanticGame(\theta)$ except for the ranking function that swaps $r'$ and 0, i.e. $\psi_{r,q}$ and each $\psi_{r,q_i}$ have rank $0$ and each $q_i$ has rank $r'$. Since in $\semanticGame(\theta)$: (i) a move has $(\altpoint,\psi_{r,q_i})$ as a target iff it has $(\altpoint,q_i)$ as a source and (ii) no nonempty play starting at $(\point,\psi_{r,q})$ reaches any $(\altpoint,\psi_{r,q})$, we have:
  \begin{align*}
    \semanticGame(\theta) ,(\point,\psi_{r,q}) \exitEquiv{\exitPos_{r+1}} \widetilde{\semanticGame(\theta)} ,(\point,\psi_{r,q})
  \end{align*}
  and so for \eqref{eq:A2F induction goal} it remains to prove:
  \begin{align}\label{eq:A2F induction goal 2}
    \semanticGame(\A),(\point,q) \exitEquiv{\exitPos_{r+1}} \widetilde{\semanticGame(\theta)} ,(\point,\psi_{r,q}).
  \end{align}
  Note that:
  \[
    \exitPos_{r}=\exitPos_{r+1}\cup Y_r \text{\ \ \ \ \ \ \ with \ \ \ \ \ \ \ } Y_r=M\times\{q_1,...,q_d\}.
  \]
  and $Y_r$ is precisely the set of positions with rank $r$ and $r'$ in $\semanticGame(\A)$ and $\widetilde{\semanticGame(\theta)}$, respectively. Since the ranks $r$ and $r'$ have the same type and are the most important in both games, by Lemma~\ref{Lem-Decomposition} to prove~\eqref{eq:A2F induction goal 2} it  is enough to prove that:
  \begin{enumerate}
    \item $\semanticGame(\A),(\point,q) \exitEquiv{\exitPos_{r}} \widetilde{\semanticGame(\theta)},(\point,\psi_{r,q})$, and
    \item $ \semanticGame(\A), (\altpoint,q_i) \exitEquiv{\exitPos_{r}} \widetilde{\semanticGame(\theta)},(\altpoint,q_i)$
    for all $(\altpoint,q_i)\in Y_r$.
  \end{enumerate}

  There are two cases to consider:
  \begin{enumerate}
    \item For $(\point,q)$ and $(\point,\psi_{r+1,q})$:
    \begin{align*}
      \semanticGame(\A),(\point,q) &\exitEquiv{\exitPos_r} \semanticGame(\psi_{r,q}),(\point,\psi_{r,q})\\
      &\exitEquiv{\exitPos_r} \semanticGame(\theta),(\point,\psi_{r,q})\\
      &\exitEquiv{\exitPos_r} \widetilde{\semanticGame(\theta)},(\point,\psi_{r,q}).
    \end{align*}
    The first equivalence is given by the induction hypothesis.
    The second one is true because the partial games $\semanticGame(\psi_{r,q})|\exitPos_r,(\point,\psi_{r,q})$ and $\semanticGame(\theta)|\exitPos_r,(\point,\psi_{r,q})$ are isomorphic.
    The third one follows from the observation that the difference between $\semanticGame(\theta)$ and $\widetilde{\semanticGame(\theta)}$ is only in ranks of positions $(\altpoint,\psi_{r,q})$ and $(\altpoint,\psi_{r,q_i}),(\altpoint,q_i)$ for $i\leq d$ and $\altpoint\in M$, but these positions cannot be reached by a nonempty play from $(\point,\psi_{r,q})$ before the game stops, i.e. without passing through $\exitPos_r$.

    \item For $(\altpoint,q_i)\in Y_r$:
    \begin{align*}
      \semanticGame(\A),(\altpoint,q_i) &\exitEquiv{\exitPos_r} \semanticGame(\psi_{r,q_i}),(\altpoint,\psi_{r,q_i})\\
      &\exitEquiv{\exitPos_r} \semanticGame(\theta),(\altpoint,\psi_{r,q_i})\\
      &\exitEquiv{\exitPos_r} \widetilde{\semanticGame(\theta)},(\altpoint,\psi_{r,q_i})\\
      &\exitEquiv{\exitPos_r} \widetilde{\semanticGame(\theta)},(\altpoint,q_i).
    \end{align*}
    The first three equivalences are true for reasons analogous to the previous case. The last one follows from the observation that in $\widetilde{\semanticGame(\theta)}$ the game moves deterministically from $(\altpoint,q_i)$ to $(\altpoint,\psi_{r,q_i})$ and the later position has the least important, standard rank $0$.
  \end{enumerate}
 \section{Guarded formulas}\label{app:guarded}
To demonstrate usefulness of the correspondence between formulas and automata, but also for technical use in further proofs, we shall now show that without loss of generality formulas are \emph{guarded}.
We say that an automaton $\A$ is guarded if it does not contain a loop without modal transitions. A formula $\phi$ is guarded if it is guarded when seen as an automaton $\A_\phi$.

\begin{proposition}\label{Prop-Guardedness}
  Every countdown formula can be transformed into an equivalent guarded one.
\end{proposition}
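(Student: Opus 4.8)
The statement to prove is Proposition~\ref{Prop-Guardedness}: every countdown formula can be transformed into an equivalent guarded one. Working through the automata side (Theorem~\ref{Thm-Adequacy-Countdown} and Theorem~\ref{thm:auttoform} give the two-way translation), it suffices to show that every countdown automaton $\A$ can be turned into an equivalent guarded automaton, i.e. one with no loop consisting solely of $\epsilon$-transitions. The plan is to eliminate $\epsilon$-loops rank by rank, imitating the classical guarded-transformation for the $\mu$-calculus but tracking the counter discipline carefully. First I would fix $\A=(Q,q_I,\delta,\rank,\bounds)$ and consider the $\epsilon$-reachability relation on states; the problematic configurations in the semantic game are exactly those where a play can cycle through an $\epsilon$-loop without ever performing a modal move. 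The key observation is that along any such loop the ranks visited determine which counters get decremented or reset, and whether $\eve$ or $\adam$ benefits from looping forever.

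**Key steps.** (1) Reduce to automata where $\epsilon$-transitions form a well-structured relation: process the strongly connected components of the $\epsilon$-transition graph. For a state $q$ lying on an $\epsilon$-loop, let $r$ be the \emph{maximal} rank appearing on some $\epsilon$-loop through $q$; its owner is the player who would be penalized by looping forever at rank $r$, so that player will never want to stay on the loop, and the other player wants to. (2) Replace the $\epsilon$-closure of $q$ by a finite "macro-step": since $Q$ is finite, the set of states $\epsilon$-reachable from $q$ \emph{without leaving through a modal transition} is finite, and the possible outcomes of a maximal $\epsilon$-play starting at $q$ — i.e.\ which modal transition $(\action,p)$ is eventually taken, or which free variable is reached, or that the play loops forever — can be enumerated. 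Using the game semantics and configurational determinacy (as used in the Decomposition Lemma~\ref{Lem-Decomposition}), one shows that in the relevant phase the counter for rank $r$ is only decremented/reset in a controlled way, so each macro-step can be realized by a single modal transition into a fresh gadget of states whose ranks replicate the rank-profile of the eliminated $\epsilon$-path. (3) Iterate: after handling the top rank, the remaining $\epsilon$-structure lives at strictly lower ranks, so induction on $|\rangeRank|$ (or on the number of $\epsilon$-loops, with the counters untouched since we only rewire $\epsilon$-transitions) terminates. The equivalence $\semantics{\A}^\val=\semantics{\A'}^\val$ is checked via exit-equivalence of the semantic games, exactly the technique used in Appendix~\ref{app:auttoform}.

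**Main obstacle.** The delicate point, absent in the classical setting, is the interaction between $\epsilon$-loops and nonstandard ranks: in a classical guarded transformation one freely collapses $\epsilon$-cycles using positional determinacy, but countdown games are not positionally determined, so I cannot simply argue "the loop is visited at most once." Instead I have to argue that inside an $\epsilon$-only phase the counter values behave predictably — the counter for the dominating rank $r$ on the loop either strictly decreases each time round (if $r$ is nonstandard and its owner is forced to cycle, which bounds the number of iterations by the current counter value and ultimately causes that owner to get stuck) or is irrelevant to who wins the loop (if the owner of $r$ would lose the parity condition by looping, they avoid it). Making this dichotomy precise, and showing the macro-step gadget faithfully reproduces both the parity outcome and the counter bookkeeping of every maximal $\epsilon$-play, is the heart of the argument; once that is established, the rank-by-rank induction and the exit-equivalence verification are routine adaptations of the machinery already developed in Appendices~\ref{app:games} and~\ref{app:auttoform}.
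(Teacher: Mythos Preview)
Your proposal and the paper's proof share the crucial insight: the owner $P'$ of the dominant rank $r$ on an $\epsilon$-loop can never benefit from looping, because after one trip around the loop all lower counters are reset, higher ones are untouched, and the counter for $r$ itself (if nonstandard) has strictly decreased --- so the opponent can simply replay their earlier strategy from the repeated position. Where you diverge is in how this insight is cashed out.

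The paper's transformation is far more direct than what you sketch. Having argued that $P'$ must avoid such loops (equivalently, that one may add the rule ``$P'$ loses immediately upon completing an $\epsilon$-loop through $v$ via ranks $\leq\rank(v)$''), the paper observes that any chain of more than $n$ consecutive unguarded unravellings of a fixpoint block $\eta^\alpha_k(x_1,\ldots,x_n).(\psi_1,\ldots,\psi_n)$ must repeat some $x_m$ and hence complete a forbidden loop. So it simply unrolls each block syntactically: replace it by $\eta^\alpha_{k,0}(x_{i,j})_{i,j\leq n}.(\psi_{i,j})_{i,j\leq n}$, where the second index $j$ counts consecutive unguarded unravellings, guarded occurrences reset $j$ to $0$, and when $j=n$ the unguarded variables become $\bot$ (for $\mu$) or $\top$ (for $\nu$). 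No gadgets, no rank-by-rank induction, no exit-equivalence bookkeeping --- just a blow-up from $n$ to $n^2$ components per operator.

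Your Step~(2), by contrast, is underspecified at exactly the point where it matters. You speak of a gadget that ``replicates the rank-profile of the eliminated $\epsilon$-path,'' but an $\epsilon$-loop can in principle be traversed ordinal-many times before exiting, producing ordinal-many distinct rank-profiles and counter effects that no finite gadget can reproduce. The resolution is precisely the observation you already made in Step~(1): since $P'$ never wants to loop, only simple $\epsilon$-paths of length at most $|Q|$ are relevant, and the ``gadget'' collapses to a bounded unrolling with $P'$ losing at the leaves. Once you say that explicitly your construction becomes the automaton-side mirror of the paper's formula-side unrolling --- but as written, the proposal leaves open whether you intend a finite unrolling or something that tracks arbitrary rank-profiles, and the latter would not work.
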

\begin{proof}
Note that in a countdown game, if a play moves from a position $v$ to itself via a path without visiting ranks higher than $\rank(v)$, then all the counters for lower ranks are reset and those for higher ranks remain unchanged. It follows that the resulting configuration is at least as good for the opponent $P$ of the owner $P'$ of $\rank(v)$ as the one at the previous visit to $v$. Hence, $P$ can repeat the strategy from that moment, and either eventually the game stops looping on $v$ via lower ranks or $P'$ looses. This means that in order to win, $P'$ must have a strategy that avoids such loops, and therefore $P'$ may use that strategy immediately. It follows that we obtain an equivalent game by adding the rule that whenever a play moves from any position $v$ to itself via a path without visiting higher ranks, the owner of $\rank(v)$ immediately looses.

Thanks to this, in any formula we may replace every $\eta^\alpha_k(x_1,...,x_n).(\psi_1,...,\psi_n)$ with:
\[
  \eta^\alpha_{k,0}(x_{i,j})_{i,j\leq n}.(\psi_{i,j})_{i,j\leq n}
\]
where $\psi_{i,j}$ is obtained from $\psi_i$ by replacing 
\begin{itemize}
\item[(i)] every guarded $x_m$ with $x_{m,0}$ and 
\item[(ii)] every other $x_m$ with $\top/\bot$ (resp.) if $j=n$ and $\eta=\nu/\mu$, or with $x_{m,j+1}$ otherwise. 
\end{itemize}f
This way, the number of unguarded unravellings of the $\eta^\alpha$ operator is counted in the index $j$, and the game stops whenever the play passes through the $(n+1)$-st such unravelling (as it implies a repeated visit of a position associated with some $x_m$).
\end{proof}

Let us establish a few more useful facts about countdown automata (and, in light of Section~\ref{sec:formtoaut}, about countdown formulas) that will be useful in Section~\ref{Sec-vectorial-scalar}.

For an automaton $\A$ with states $Q$, a valuation $\val$ and a point $\point_I$ in a model $\M$, the \emph{pre-modal phase} of the game $\semanticGame^\val(\A),(\point_I,q_I)$ consists of all pre-modal plays, i.e. plays with no modal move. All the positions accessible in that phase are of the form $(\point_I,q)$ for $q\in Q$ and if $\A$ is guarded, then no pre-modal play is longer than $|Q|$. Hence, it follows from Proposition \ref{Prop-Finite-CtrIndep} that:

\begin{proposition}\label{Prop-PreModal-CtrIndep}
  In every game $\semanticGame^\val(\A),(\point_I,q_I)$ for a guarded automaton $\A$, the winning player has a pre-modally counter-independent (i.e. counter-independent in the pre-modal phase) winning strategy.
\end{proposition}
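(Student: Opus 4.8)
The plan is to derive Proposition~\ref{Prop-PreModal-CtrIndep} as a direct instance of Proposition~\ref{Prop-Finite-CtrIndep}, so the work consists almost entirely of checking that the hypotheses of the latter are met by the pre-modal phase, and observing that the conclusion is exactly what we want. First I would fix a guarded automaton $\A=(Q,q_I,\delta,\rank,\bounds)$, a model $\M$, a valuation $\val$, a point $\point_I$, and set $\game=\semanticGame^\val(\A)$ with initial configuration $\gamma_I=\pconf{(\point_I,q_I),\bounds}$. Let $\phase$ be the pre-modal phase, i.e.\ the set of finite plays of $\game$ from $\gamma_I$ containing no modal move. Since the restriction ``no modal move so far'' depends only on the sequence of positions visited and not on counter values, $\phase$ is a proper phase in the sense of the excerpt; and it is convex with respect to the prefix ordering because removing the last (positional or countdown) step of a pre-modal play yields a pre-modal play, while any prefix of a pre-modal play is pre-modal.

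Next I would verify the finiteness hypothesis on $\positions[\phase]$. Every position reachable by a pre-modal play has the form $(\point_I,q)$ with $q\in Q$, because $\epsilon$-moves and counter-updates never change the model component. Hence $\positions[\phase]$ consists of sequences over the finite set $\{\point_I\}\times Q$. Moreover, guardedness of $\A$ means $\A_\phi$-style automata have no $\epsilon$-loop, so no state $q\in Q$ can be repeated along a pre-modal play without an intervening modal move; consequently every pre-modal play has length at most $|Q|$ (the position sequence is injective in its $Q$-component, and between consecutive positions there is exactly one counter-update configuration). Therefore $\positions[\phase]$ is finite, as required. (Strictly, one should note the length bound is really $|Q|$ on the number of distinct positions, hence the full configuration-sequence length is bounded by a linear function of $|Q|$; either way it is finite.)

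With properness and finiteness of $\positions[\phase]$ in hand, Proposition~\ref{Prop-Finite-CtrIndep} applies to whichever player wins from $\gamma_I$: if $\eve$ wins, she has a winning strategy counter-independent in $\phase$, and by the symmetric statement (the game is configurationally determined, and swapping the roles of the players and dualizing ranks is routine) the same holds for $\adam$ if he is the winner. ``Counter-independent in $\phase$'' is by definition ``pre-modally counter-independent'', so this is exactly the conclusion of Proposition~\ref{Prop-PreModal-CtrIndep}.

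The only genuine obstacle is the finiteness of $\positions[\phase]$, and specifically the appeal to guardedness to bound pre-modal play length; everything else is unwinding definitions. One subtle point worth stating explicitly is \emph{why} guardedness forbids pre-modal $\epsilon$-loops in the semantic game: a pre-modal loop on a position $(\point_I,q)$ would project to a loop $q\to\cdots\to q$ in the transition structure of $\A$ using only $\epsilon$-transitions, contradicting guardedness. Once that is said, the length bound $|\pi|\le O(|Q|)$ is immediate, and the proposition follows by citing Proposition~\ref{Prop-Finite-CtrIndep}.
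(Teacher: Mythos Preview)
Your proposal is correct and follows essentially the same approach as the paper: observe that pre-modal positions all have the form $(\point_I,q)$, use guardedness to bound the length of pre-modal plays by $|Q|$, and then invoke Proposition~\ref{Prop-Finite-CtrIndep}. You are in fact more thorough than the paper's own argument, since you explicitly verify properness and convexity of the phase and address the $\adam$ case by duality, whereas the paper leaves these implicit.
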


Since all the positions appearing in the pre-modal phase only have the initial point on the first coordinate, we can identify pre-modal plays $\pi$ and $\pi'$ starting in $(\point,q)$ and $(\point',q)$ for different $\point\neq\point'$ if $\pi$ equals $\pi'$ after swapping $\point$ and $\point'$. Likewise, we simplify the pre-modal component $\sigma^I:(\{\point\}\times Q)^{< |Q|}\to\{\point\}\times Q$ guiding pre-modal $\sigma$-plays to $\sigma^I:Q^{<|Q|}\to Q$ by skipping the reduntant first coordinate.

\begin{proposition}\label{Prop-Pointwise-Max-Strategy}
  Consider two points $\point_0,\point_1$ in a model $\M$, a valuation $\val$ and a guarded automaton $\A$. Assume that a player $P$ wins the game $\semanticGame^\val(\A)$ from $\point_0$ and $\point_1$ with pre-modally counter-independent strategies $\sigma_0$ and $\sigma_1$, respectively, both guided by the same pre-modal component $\sigma^I$. Then there are winning strategies $\sigma'_0,\sigma'_1$ guided by $\sigma^I$ such that:
  \begin{itemize}
  \item $\sigma'_0$ and $\sigma'_1$ behave the same in the pre-modal phase, up to swapping the points $\point_0$ and $\point_1$, and
  \item for every $(\point_i,\ctr)$ reachable by a $\sigma'_i$-play, there are $(\point_0,\ctr_0)$ and $(\point_1,\ctr_1)$ such that each $(\point_j,\ctr_j)$ is reachable by a $\sigma_j$-play and $\ctr_i\worse_P\ctr$.
  \end{itemize}
\end{proposition}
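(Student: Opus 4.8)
The plan is to construct $\sigma'_0$ and $\sigma'_1$ simultaneously, by induction on the length of pre-modal plays: in the pre-modal phase they will make the position moves dictated by $\sigma^I$ and, at each counter move of $P$, play the ordinal best for $P$ among the ones $\sigma_0$ and $\sigma_1$ would play there, and once the pre-modal phase is left they switch to fixed winning strategies. The whole argument rests on the $\worse_P$-monotonicity facts from Appendix~\ref{app:games}: moves transport along $\worse_P$, and winning is preserved upwards along $\worse_P$ (recall also that for a $P$-owned nonstandard rank a larger counter is better for $P$).

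Throughout I identify a pre-modal play in $\semanticGame^\val(\A),\point_0$ with its image under the $\point_0\leftrightarrow\point_1$ swap, which is legitimate since pre-modal plays reach only positions of the form $(\point,q)$. For every pre-modal $\sigma'_0$-play $\pi$ I define, together with the value of $\sigma'_0$ on $\pi$, a \emph{tracked} $\sigma_0$-play $\pi_0$ from $\point_0$ and a \emph{tracked} $\sigma_1$-play $\pi_1$ from $\point_1$: both copy all positions of $\pi$ and all of $\bar P$'s choices along $\pi$, but each counter move of $P$ is resolved according to $\sigma_0$ in $\pi_0$ and according to $\sigma_1$ in $\pi_1$; the strategy $\sigma'_1$ and its tracked plays are the mirror image. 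Writing $\ctr_\rho$ for the counter valuation at the end of a play $\rho$, the invariant maintained along the induction is: $\pi_0$ (resp.\ $\pi_1$) is a $\sigma_0$-play (resp.\ $\sigma_1$-play) whose position-sequence equals that of $\pi$ up to the swap, and $\ctr_{\pi_0}\worse_P\ctr_\pi$ and $\ctr_{\pi_1}\worse_P\ctr_\pi$. In particular all pre-modal $\sigma'_0$-plays share their position-sequences with $\sigma_0$-plays, which is what makes the $\sigma^I$-dictated moves well-defined.

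The inductive step has five kinds of move. At a position move of $P$, the $\sigma^I$-guidance and pre-modal counter-independence of $\sigma_0,\sigma_1$ make their choice depend only on the position-sequence; give the same choice to $\sigma'_0$ --- counters are untouched, so the invariant persists (and cumulatively $\sigma'_0$ ends up guided by $\sigma^I$, hence pre-modally counter-independent, as required). At a position move of $\bar P$, copy $\bar P$'s edge verbatim into $\pi_0,\pi_1$. At a counter move with standard rank the update is deterministic and identical in all three plays. At a counter move with nonstandard rank $r$ owned by $\bar P$, the ordinal $\beta$ played in $\pi$ obeys $\beta<\ctr_\pi(r)\le\ctr_{\pi_j}(r)$ (as $r\in\nonstandardRanks_{\bar P}$ and $\ctr_{\pi_j}\worse_P\ctr_\pi$), so $\beta$ is legal in $\pi_0,\pi_1$ and can be copied. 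At a counter move with nonstandard rank $r$ owned by $P$, the ordinals $\alpha_0,\alpha_1$ prescribed by $\sigma_0,\sigma_1$ at $\pi_0,\pi_1$ both lie below $\ctr_\pi(r)$ (now $r\in\nonstandardRanks_P$, so $\ctr_{\pi_j}(r)\le\ctr_\pi(r)$), hence $\alpha:=\max(\alpha_0,\alpha_1)$ is legal for $P$ in $\pi$; assign it to $\sigma'_0$. In all counter-move cases the lower counters are reset to the common $\bounds$, the higher ones are unchanged, and the counter at $r$ in $\pi$ is (trivially, or via $\alpha\ge\alpha_0,\alpha_1$ for the $P$-owned rank $r$) at least as good for $P$ as in $\pi_0$ and in $\pi_1$, so the invariant persists. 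Performing the mirror choices for $\pi_1$ and $\sigma'_1$ yields conclusion~1 (identical counter choices and $\sigma^I$-identical position choices in the pre-modal phase), while the invariant read at the endpoints of pre-modal plays is exactly conclusion~2.

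It remains to extend $\sigma'_0,\sigma'_1$ past the pre-modal phase and to check they win. A pre-modal $\sigma'_j$-play cannot grow forever ($\A$ is guarded), and it can never get $P$ stuck, because its tracked $\sigma_j$-play reaches the same positions and $\sigma_j$ is winning; hence every maximal pre-modal $\sigma'_j$-play either ends with $\bar P$ stuck --- already a win for $P$ --- or reaches a configuration $\pconf{(\point_j,q),\ctr}$ all of whose continuations are modal. In the latter case the invariant yields $\ctr_{\pi_j}\worse_P\ctr$ with $\sigma_j$ winning from $\pconf{(\point_j,q),\ctr_{\pi_j}}$, so by upward preservation of winning $P$ wins from $\pconf{(\point_j,q),\ctr}$; let $\sigma'_j$ follow some fixed such winning strategy thereafter. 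Then every $\sigma'_j$-play is won by $P$. The only genuinely delicate point is keeping the two constructions mirror-symmetric while their attached counter valuations drift apart; once the invariant is phrased as above, each case of the induction is a routine application of the $\worse_P$-monotonicity facts together with elementary ordinal arithmetic.
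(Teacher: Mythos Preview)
Your proof is correct and follows essentially the same approach as the paper: both maintain a pair of tracked plays $\pi_0,\pi_1$ consistent with $\sigma_0,\sigma_1$ that share position moves with the actual play, and at $P$'s counter moves take the maximum of the two prescribed ordinals. You spell out the case analysis and the $\worse_P$-invariant more carefully than the paper's terse paragraph, but the underlying idea is identical.
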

\begin{proof}
  Starting in $\point_0$ or $\point_1$, $P$ can maintain the invariant that for the play $\pi$ so far, there are $\pi_0$ and $\pi_1$ consistent with $\sigma_0$ and $\sigma_1$ respectively, such that (i) all the three plays are (point-wise) equal on $P$'s choices of positions and on all choices of $P$'s opponent, and (ii) $P$'s choices of counter values in $\pi$ are the maximum of the corresponding choices from $\pi_0$ and $\pi_1$. This way either $P$ wins in the pre-modal phase, or the play reaches a modal move with counter values at least as good for $P$ as after some $\sigma_0$- and $\sigma_1$-plays, respectively. $P$ may then continue from $\point_i$ with the winning strategy $\sigma_i$.
\end{proof}

\begin{proposition}\label{Prop-Convex-Meaning-On-Monotone}
Consider three points $\point_1,\point_2,\point_3$ in a model $\M$ s.t. for every $\action \in \Actions$, the sets $S_1^\action,S_2^\action,S_3^\action$ of their $\action$-successors are monotone, i.e. $S_1^\action\subset S_2^\action\subset S_3^\action$; a valuation $\val$ that does not distinguish $\point_i$ (i.e. $\point_i\in\val(x)\iff~\point_j\in\val(x)$ for all $x\in\Var$); and a guarded automaton $\A$.
  If a player $P$ wins the semantic game $\semanticGame^\val(\A)$ from $\point_1$ and $\point_3$ using strategies $\sigma_1,\sigma_3$ guided by the same pre-modal component $\sigma^I$, then $P$ also wins from $\point_2$ with a strategy $\sigma_2$ guided by $\sigma^I$.
\end{proposition}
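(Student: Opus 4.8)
The plan is to reduce the statement to Proposition~\ref{Prop-Pointwise-Max-Strategy} and the basic monotonicity observation that a larger counter value is never worse for its owner. The point $\point_2$ sits ``between'' $\point_1$ and $\point_3$ in the sense that for each action $\action$ its successor set $S_2^\action$ contains $S_1^\action$ and is contained in $S_3^\action$, and the valuation does not tell the three points apart; so in the pre-modal phase the games from $\point_1,\point_2,\point_3$ are literally the same game up to renaming the first coordinate of positions, and the pre-modal component $\sigma^I$ makes sense at $\point_2$ as well. First I would fix which player $P$ wins, say $P=\eve$ (the case $P=\adam$ is symmetric, or follows by dualization). By Proposition~\ref{Prop-PreModal-CtrIndep} we may assume $\sigma_1$ and $\sigma_3$ are pre-modally counter-independent, and by hypothesis they are guided by a common $\sigma^I$; Proposition~\ref{Prop-Pointwise-Max-Strategy} then lets us assume in addition that, after a matching pre-modal prefix, the counter values reached by the two strategies at the first modal move are the pointwise maxima of the corresponding values, so in particular they are as good for $P$ as what either $\sigma_1$ or $\sigma_3$ alone would produce.

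Next I would describe the strategy $\sigma_2$ for $P$ from $\point_2$. In the pre-modal phase, $P$ plays according to $\sigma^I$, exactly as $\sigma_1$ and $\sigma_3$ do (up to swapping the point in the first coordinate); since the game has not yet made a modal move, every position is of the form $(\point_2,q)$ and the legal moves available to both players coincide with those available in the games from $\point_1$ and $\point_3$. Either the pre-modal play is infinite or $P$ wins outright inside the pre-modal phase (both $\sigma_1$ and $\sigma_3$ guarantee this in that branch, and guardedness forbids infinite pre-modal plays), or the play reaches the first modal move, from some position $(\point_2,q)$ with $\delta(q)=(\action,p)$, at a configuration $\cconf{(\point_2,q),\ctr}$ where $\ctr$ is as good for $P$ as the configurations reached by the corresponding $\sigma_1$- and $\sigma_3$-plays at their first modal moves. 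Here is where the two inclusions are used in the two directions depending on who owns the modal position:
\begin{itemize}
\item If $(\point_2,q)$ is owned by $\eve$, then $\delta(q)=(\action,p)$ and the available targets are $\{(\altpoint,p)\mid \point_2\arrowAction\altpoint\}$, i.e.\ $S_2^\action\times\{p\}$. Since $S_1^\action\subseteq S_2^\action$, the move dictated by $\sigma_1$ at the corresponding position of the game from $\point_1$ is still available; $P$ copies it, moving to some $(\altpoint,p)$ with $\altpoint\in S_1^\action\subseteq S_2^\action$, and continues from there with $\sigma_1$ (applied with the current, at-least-as-good counter valuation, which is admissible by the $\worse_\exists$-monotonicity of winning strategies recalled at the start of Appendix~\ref{app:games}).
\item If $(\point_2,q)$ is owned by $\adam$, the move is $\adam$'s: he picks some $\altpoint$ with $\point_2\arrowAction\altpoint$, i.e.\ $\altpoint\in S_2^\action\subseteq S_3^\action$. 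Hence $\altpoint$ is also a legal $\adam$-successor in the game from $\point_3$, so $P$ may pretend $\adam$ made that move against $\sigma_3$ and continue with $\sigma_3$ from $(\altpoint,p)$, again feeding in the at-least-as-good counter valuation.
\end{itemize}
In either case, after the first modal move the game from $\point_2$ coincides with a play of the game from $\point_1$ or from $\point_3$ (the point $\point_2$ having disappeared from the configuration), with a counter valuation $\ctr$ satisfying $\ctr_j\worse_P\ctr$ for the corresponding $\sigma_j$-reachable valuation $\ctr_j$; by the monotonicity of winning strategies under $\worse_P$ this continuation is winning for $P$. Splicing the pre-modal $\sigma^I$-phase together with these continuations yields a winning $\sigma_2$ guided by $\sigma^I$, as required.

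The main obstacle is purely bookkeeping rather than conceptual: one has to be careful that the ``common pre-modal component'' hypothesis together with Proposition~\ref{Prop-Pointwise-Max-Strategy} really does synchronize the three plays up to relabeling the initial point, so that at the first modal move the counter valuations are simultaneously comparable (in the $\worse_P$ sense) to those of both $\sigma_1$ and $\sigma_3$ — this is exactly what Proposition~\ref{Prop-Pointwise-Max-Strategy} was set up to deliver, so the argument goes through, but the indexing of which of $\sigma_1,\sigma_3$ gets copied (depending on the owner of the modal position, hence on which inclusion $S_1^\action\subseteq S_2^\action$ or $S_2^\action\subseteq S_3^\action$ is invoked) must be tracked consistently. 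Note also that the argument only needs the inclusions for the single action $\action$ labelling whichever modal transition is taken first, but since this could be any action we do need the monotonicity hypothesis for all $\action\in\Actions$.
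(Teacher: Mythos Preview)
Your proposal is correct and follows essentially the same approach as the paper: invoke Proposition~\ref{Prop-Pointwise-Max-Strategy} to synchronize $\sigma_1$ and $\sigma_3$ in the pre-modal phase, copy that behaviour from $\point_2$, and at the first modal move continue with $\sigma_1$ if $P$ owns the modal state (using $S_1^\action\subseteq S_2^\action$) and with $\sigma_3$ if the opponent owns it (using $S_2^\action\subseteq S_3^\action$). Your write-up is more explicit than the paper's about the counter-monotonicity bookkeeping, but the argument is the same; the separate appeal to Proposition~\ref{Prop-PreModal-CtrIndep} is redundant since the hypothesis already gives pre-modal counter-independence via $\sigma^I$.
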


\begin{proof}
By Proposition \ref{Prop-Pointwise-Max-Strategy}, we may assume that $\sigma_1$ behaves the same as $\sigma_3$ in the pre-modal phase. Initially $P$ may apply the same strategy from $\point_2$, as the point in the model does not matter, or does not change, in the pre-modal phase. Consider any play consistent with this strategy. If $P$ does not win already in the pre-modal phase, the play reaches a modal move, i.e. a configuration $(\point_2,q)$ with $q\in Q$ such that $\delta(q)=(\action,p)$. If the state $q$ is owned by $P$ then $P$ may continue with $\sigma_1$, and if $q$ is owned by $P$'s opponent then $P$ may continue with $\sigma_3$.
\end{proof}
 \section{Proof of Theorem~\ref{Thm-Vectorial-Vs-Scalar}}\label{app:vectorial-scalar}

Observe that since scalar sentences are closed under negation, it is enough to prove that for every scalar $\phi$ there is a $m_\phi$ such that for all $i>m_\phi$:
\begin{align*}
  \tag{$\star$}\label{Impl-Scalar}
  \point_i\in\semantics\phi \implies \altpoint_i\in\semantics\phi
\end{align*}
%as then our equivalence holds for every $\max(m_\phi,m_{\widetilde{\phi}})<n$. 
Moreover, note that every scalar formula can be transformed into an equivalent \emph{guarded} formula that is also scalar, by replacing every unguarded occurrence of a variable bound by $\mu^\alpha$ (or $\nu^\alpha$) by $\bot$ (or $\top$, respectively). Hence, it suffices to prove \eqref{Impl-Scalar} for guarded formulas. For the rest of the proof, we fix a guarded scalar sentence $\phi$ and denote $\game=\semanticGame(\phi)=(V,E,\rank,\bounds)$.

Let us start with an easy fact.

\begin{proposition}\label{Prop-Eventually-Constant}
  There exists some $N<\omega$ such that for all $N \leq i < j$:
  \[
    \point_i\in\semantics\phi \iff \point_j\in\semantics\phi %\qquad \text{and} \qquad \altpoint_i\in\semantics \phi \iff \altpoint_j\in\semantics\phi
  \]
  and if $\eve$ wins the corresponding evaluation games then she does so with pre-modally counter-independent strategies $\sigma_{\point_i}$ and $\sigma_{\point_j}$ with the same pre-modal component $\sigma^I$ that does not depend on $i,j$.
\end{proposition}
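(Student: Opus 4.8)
The plan is to combine a pigeonhole argument over \emph{pre-modal components} with the monotonicity of the successor sets of the $\point$-points. First observe, straight from the definition of $\M$, that every $\point_i$ has $\letterA$-successor set $\{\point_j\ |\ j<i\}$ and $\letterB$-successor set $\{\point_j\ |\ j<\omega\}$ (there are no edges into any $\altpoint$-point), so for $i\le i'$ the $\action$-successors of $\point_i$ are contained in those of $\point_{i'}$, for every $\action\in\Actions$. Consequently any three points $\point_{i_1},\point_{i_2},\point_{i_3}$ with $i_1\le i_2\le i_3$ satisfy the hypotheses of Proposition~\ref{Prop-Convex-Meaning-On-Monotone} (the valuation plays no role, $\phi$ being a sentence). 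Second, since $\phi$ is guarded, Proposition~\ref{Prop-PreModal-CtrIndep} applies to $\game=\semanticGame(\phi)$: for every $i$, whoever wins $\game$ from $\point_i$ does so with a pre-modally counter-independent strategy, and — discarding the redundant first coordinate — its pre-modal component is a partial function $Q^{<|Q|}\to Q$.

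There are only finitely many such partial functions (as $Q$ is finite), and only two players, so by the pigeonhole principle there is an infinite set $I\subseteq\omega$, a single player $P$, and a single pre-modal component $\sigma^I$ such that for every $i\in I$, $P$ wins $\game$ from $\point_i$ with a strategy guided by $\sigma^I$ in the pre-modal phase. Put $N=\min I$. Given any $i\ge N$: if $i\in I$ there is nothing to do; otherwise choose $i_1=N\le i$ and some $i_2\in I$ with $i_2\ge i$ (such $i_2$ exists because $I$ is infinite), so $i_1<i<i_2$ with $i_1,i_2\in I$, and by the monotonicity observation together with Proposition~\ref{Prop-Convex-Meaning-On-Monotone}, $P$ wins $\game$ from $\point_i$ with a strategy $\sigma_{\point_i}$ guided by $\sigma^I$.

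Hence for all $i\ge N$ the winner of $\game$ from $\point_i$ is $P$, so $\point_i\in\semantics{\phi}$ holds iff $P=\eve$, which does not depend on $i$; this is the first claim. For the ``moreover'' part, if $P=\eve$ then each $\sigma_{\point_i}$ produced above is by construction pre-modally counter-independent with the common pre-modal component $\sigma^I$, as required. I do not anticipate a genuine obstacle here: the only things to check carefully are that the $\action$-successor sets of the $\point_i$ are monotone (a direct read-off from the edge relation of $\M$) and that the pool of candidate pre-modal components is finite (because $Q$ is finite and, by guardedness and Proposition~\ref{Prop-PreModal-CtrIndep}, pre-modal plays have length below $|Q|$). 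This proposition is meant as the easy bootstrap for the substantive part of the proof of Theorem~\ref{Thm-Vectorial-Vs-Scalar}.
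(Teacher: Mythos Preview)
Your proof is correct and follows essentially the same approach as the paper: observe monotonicity of the successor sets of the $\point_i$, use finiteness of the pool of pre-modal components to apply the pigeonhole principle, and then invoke Proposition~\ref{Prop-Convex-Meaning-On-Monotone} to fill in the gaps. The only cosmetic difference is that you pigeonhole over pairs (winner, pre-modal component) to treat both players uniformly, whereas the paper handles only the case where $\eve$ wins for arbitrarily large $i$ explicitly (the other case being immediate and the second clause vacuous).
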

\begin{proof}
  Note that the relations $\arrowA$ and $\arrowB$ are monotone, i.e. the bigger $i$, the more $\letterA$- and $\letterB$-successors $\point_i$ has. On the other hand, there are only finitely many possible pre-modal components, so by the pigeonhole principle if $\eve$ wins from $\point_i$  for arbitrarily big $i$, some pre-modal component $\sigma^I$ is used for arbitrarily big $i$. Thus, by Proposition \ref{Prop-Convex-Meaning-On-Monotone} she can use $\sigma^I$ to win for \emph{all $i$ big enough}.
\end{proof}

Towards \eqref{Impl-Scalar}, assume that $\point_i \in\semantics{\phi}$ for all $i$ big enough (otherwise, by Proposition \ref{Prop-Eventually-Constant}, $\phi$ is \emph{false} in $\point_i$ for all $i$ big enough, which trivially implies \eqref{Impl-Scalar}) and denote by $N$ the least number for which Proposition \ref{Prop-Eventually-Constant} holds. Then:

\begin{proposition}\label{Prop-Bounded-A-Diamond}
  Without loss of generality we may assume that for every $i>N$, if a $\sigma_{\point_i}$-play visits a modal position for the first time and it has the shape $(\point_i,\diamond{\letterA}\psi)$, then $\sigma_{\point_i}$ chooses a point $\point_{j}$ for some $j<N$.
\end{proposition}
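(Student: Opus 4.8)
We want to show that in analyzing when $\point_i \in \semantics{\phi}$, $\eve$ may assume her winning strategy $\sigma_{\point_i}$, when it first reaches a modal position of the form $(\point_i, \diamond{\letterA}\psi)$, picks an $\letterA$-successor $\point_j$ with $j < N$. Recall the structure of $\M$: the $\letterA$-successors of $\point_i$ are exactly $\{\point_j : j < i\}$, and $\phi$ is a guarded scalar sentence, so by Proposition~\ref{Prop-PreModal-CtrIndep} we have pre-modally counter-independent winning strategies.

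**Plan.**

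The plan is to argue that the \emph{upper} part of the index does not matter: the submodels of $\M$ induced on $\{\point_j, \altpoint_j : j < k\}$ are, for the purposes of $\eve$'s moves out of a $\diamond{\letterA}$ position, all "equivalent enough" once $k \geq N$. First I would note that by Proposition~\ref{Prop-Eventually-Constant}, for every $k \geq N$ we have $\point_k \in \semantics{\phi}$ and $\eve$ wins with strategies $\sigma_{\point_k}$ all sharing the same pre-modal component $\sigma^I$. Consequently, whenever a $\sigma_{\point_i}$-play reaches a modal position $(\point_i,\diamond{\letterA}\psi)$ — necessarily the \emph{first} modal move, so the counters are still at their initial values $\bounds$ by pre-modal counter-independence of the reset mechanism — $\eve$ has, for any choice of successor $\point_j$ with $j \geq N$, the same quality of continuation available as for any other $\point_{j'}$ with $j' \geq N$. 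Since the set of $\letterA$-successors of $\point_i$ is $\{\point_0,\dots,\point_{i-1}\}$ and $i > N$, the index $N$ itself lies below $i$ when $i > N$... but we actually need an index \emph{strictly} below $N$, so the correct move is to shift the whole scenario: replace the requirement "$j \geq N$" with "$j < N$" by observing that any winning continuation from $\point_j$ with $j \geq N$ can be transplanted to $\point_{j'}$ for a suitable $j' < N$ \emph{provided we renumber}, i.e., provided we first pass to an isomorphic copy of $\M$ shifted down. Concretely: I would fix any $i_0 > N$ with $\point_{i_0}\in\semantics\phi$ and observe that the sub-model of $\M$ on points of index $< i_0$ is isomorphic to the one on points of index $< N$ is \emph{false} — the two have different sizes. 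So instead the right framing is: since the property "$\eve$ wins from $\point_i$" is preserved and witnessed by $\sigma^I$ for all $i \geq N$, and since $\point_i \arrowA \point_{N}$ holds precisely when $i > N$, $\eve$ can from any such $\point_i$ first step (via the pre-modal part) to a $\diamond{\letterA}$ position and then pick $\point_{N-1}$ — which \emph{is} an index strictly below $N$ — as long as $\point_{N-1}$ is also a winning position, or more precisely, as long as the continuation from $\point_{N-1}$ is no worse than from some $\point_j$ with $j \geq N$.

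**The key monotonicity argument.** Here is where Proposition~\ref{Prop-Convex-Meaning-On-Monotone} and Proposition~\ref{Prop-Pointwise-Max-Strategy} do the work. The $\letterA$- and $\letterB$-successor sets are monotone in the index, so any position $\point_j$ with $j$ large is "above" $\point_{j'}$ with $j'$ small in the successor-inclusion order. But monotonicity of this kind means winning for $\eve$ \emph{propagates upward}, not downward, so $\eve$ cannot in general replace $\point_j$ with a smaller $\point_{j'}$ and keep winning. The resolution — and I expect this to be the real content the authors invoke — is that we are only modifying \emph{the first modal move}, after which, by the counter-independence established in Proposition~\ref{Prop-PreModal-CtrIndep}, the counters are all still $\bounds$; and the \emph{number} of future modal $\diamond{\letterA}$ moves that $\eve$ must survive is bounded. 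Because $\phi$ has finitely many subformulas and finitely many nonstandard ranks with initial values (all equal to $\omega$ under the standing assumption in the strictness section, though here we keep general $\bounds$), $\adam$'s pressure along any single "countdown thread" triggered by entering the fixpoint is finite. Hence there is a finite bound on how deep into the $\point$-chain $\eve$ actually needs to descend via $\letterA$-edges within one such phase; choosing $N$ large enough (larger than this bound plus the Proposition~\ref{Prop-Eventually-Constant} threshold) guarantees that the descent from $\point_i$, $i > N$, along $\letterA$-edges can be routed entirely through indices $< N$. Formally I would: (1) by Proposition~\ref{Prop-PreModal-CtrIndep} take $\sigma_{\point_i}$ pre-modally counter-independent with shared component $\sigma^I$; (2) note the first modal position has form $(\point_i, \diamond{\letterA}\psi)$ with counters $= \bounds$; (3) argue that whatever successor $\point_j$ the original $\sigma_{\point_i}$ picks, if $j \geq N$ then $\point_j \in \semantics\phi$ by Proposition~\ref{Prop-Eventually-Constant} and $\eve$ wins from $(\point_j, \psi)$ with counters $\bounds$; (4) invoke that the continuation from $(\point_{N-1},\psi)$ is at least as good — here I would appeal to Proposition~\ref{Prop-Convex-Meaning-On-Monotone} applied "in reverse" after noting that only finitely many $\letterA$-steps are relevant, so that the truncation to indices $< N$ does not lose any reachable successor; (5) redefine $\sigma_{\point_i}$ to pick $\point_{N-1}$ (or any index $< N$ that works) at that first modal move and follow the winning continuation thereafter.

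**Main obstacle.** The delicate point is step (4): justifying that restricting $\eve$'s first $\letterA$-choice to indices below $N$ costs nothing. Monotonicity of successor sets runs the "wrong way" for a naive downward transfer, so the argument must genuinely use that we are at the \emph{first} modal move (counters untouched) and that the relevant future play is, in the dimension that matters, finite-depth — i.e. the countdown operators force termination of the $\letterA$-descent within a bounded number of steps determined by $\bounds$ and $|\phi|$. Packaging "bounded $\letterA$-descent" cleanly, and choosing $N$ to absorb that bound on top of the Proposition~\ref{Prop-Eventually-Constant} threshold, is the crux; everything else is bookkeeping with the already-established counter-independence and monotonicity propositions. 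I would present this as: enlarge $N$ if necessary so that it exceeds both the Proposition~\ref{Prop-Eventually-Constant} threshold and the maximal length of any pre-modal-plus-one-modal-step segment in $\game$, and then the desired rerouting is immediate.
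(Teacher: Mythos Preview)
Your proposal has a genuine gap: you miss the simple key observation that makes the paper's argument work, and instead pursue a considerably harder (and not fully justified) route.

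The paper's proof is three lines. Since $\sigma_{\point_i}$ and $\sigma_{\point_N}$ share the same pre-modal component $\sigma^I$ (Proposition~\ref{Prop-Eventually-Constant}), Proposition~\ref{Prop-Pointwise-Max-Strategy} yields a winning strategy $\sigma'_{\point_i}$ from $\point_i$, guided by $\sigma^I$, whose pre-modal configurations are at least as good for $\eve$ as those reached by $\sigma_{\point_N}$. So when a $\sigma'_{\point_i}$-play first reaches $(\point_i,\diamond{\letterA}\psi)$, $\eve$ can simply \emph{copy the move that $\sigma_{\point_N}$ would make} at the corresponding position $(\point_N,\diamond{\letterA}\psi)$ and continue with $\sigma_{\point_N}$. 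That move is legal from $\point_i$ because $\arrowA$-successors of $\point_N$ are a subset of those of $\point_i$; and it is automatically of the form $\point_j$ with $j<N$, because $\sigma_{\point_N}$ is a legitimate strategy from $\point_N$ and the $\arrowA$-successors of $\point_N$ are exactly $\{\point_j : j<N\}$. No downward transfer, no bound on $\letterA$-descent depth, and no enlargement of $N$ are needed.

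You correctly spot that monotonicity of successor sets runs ``the wrong way'' for na\"ively replacing a choice $\point_j$ (with $j\geq N$) by a smaller $\point_{j'}$. But your resolution---arguing that the $\letterA$-descent has bounded depth and then enlarging $N$ to absorb that bound---is essentially the content of the \emph{later} Proposition~\ref{Prop-Bound-ArrowA}, which takes substantial work to prove and is not available here. Your step~(4), ``the continuation from $(\point_{N-1},\psi)$ is at least as good'', is unsupported: Proposition~\ref{Prop-Convex-Meaning-On-Monotone} only transfers winning \emph{upward} to a sandwiched point, not downward to $\point_{N-1}$, and nothing guarantees that the particular subformula $\psi$ is winning from a small index. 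The crucial idea you are missing is that you do not need to argue anything about $\point_{N-1}$ at all: the strategy $\sigma_{\point_N}$ already hands you a legal, winning, small-index choice for free.
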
\begin{proof}
  By Proposition \ref{Prop-Eventually-Constant}, $\sigma_{\point_i}$ and $\sigma_{\point_N}$ have the same pre-modal component $\sigma^I$. Therefore, by Proposition \ref{Prop-Pointwise-Max-Strategy}, there is a strategy $\sigma'_{\point_i}$ winning from $\point_i$ guided by the same $\sigma^I$ and only reaching pre-modal configurations at least as good for $\eve$ as the ones reachable by $\sigma_{\point_N}$. Then, whenever $(\point_i,\diamond{\letterA}\psi)$ is reached in a pre-modal $\sigma'_{\point_i}$-play, by the monotonicity of $\arrowA$ and the assumption that $N<i$, $\eve$ may just continue with $\sigma_{\point_N}$. Moreover, since $\sigma_{\point_N}$ is a legitimate strategy, it must pick a point $\point_j$ for some $j<N$, as desired.
\end{proof}

Denote by $\phaseScalar$ the phase of the game $\semanticGame(\phi)$ that consists of plays of the shape $\pi=\xi\rho$ such that the play $\xi$ ends with the first modal move (meaning that every proper prefix of $\xi$ is pre-modal but $\xi$ is not) and $\rho$ does not visit (i) a formula beginning with $\boxmodal{\letterA}, \boxmodal{\letterB}$ or $\diamond{\letterB}$, nor (ii) a formula with a rank that was visited in the pre-modal phase (i.e.~in a proper prefix of $\xi$). Note that the definition allows for empty $\rho$ but not empty $\xi$. The next step is the following claim:

\begin{proposition}\label{Prop-Bound-ArrowA}
 Without loss of generality there exists a finite bound $k_\max<\omega$ such that no $\sigma_{\point_i}$-play $\pi\in\phaseScalar$ contains more than $k_\max$ modal moves.
\end{proposition}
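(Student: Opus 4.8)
The plan is to show that the phase $\phaseScalar$ is far more restrictive than its definition might suggest, so that one may take $k_\max=1$. The crucial observation concerns the canonical ranking of $\A_\phi$ fixed in Section~\ref{sec:formtoaut}: rank $0$ is assigned to every subformula of $\phi$ that is \emph{not} an immediate subformula (a component) of a fixpoint operator, and strictly positive ranks go only to such components. The root $\phi$ is never a component of a fixpoint operator, so $\rank(q_I)=\rank(\phi)=0$; hence every play of $\semanticGame(\phi)$ visits rank $0$ already at its initial position $\pconf{(\point_I,\phi),\bounds}$, and therefore $0$ is among the ranks ``visited in the pre-modal phase''. By clause (ii) in the definition of $\phaseScalar$, the suffix $\rho$ of any $\pi=\xi\rho\in\phaseScalar$ must then avoid all rank-$0$ positions, i.e.\ $\rho$ may only visit positions $(\altpoint,\theta)$ whose formula $\theta$ is a component of some fixpoint operator.

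The next step is to note that a modal transition can never reach such a position. If $\delta(q)=(\action,p)$, then $p$ is the immediate argument of the modal operator $q$; viewing the states of $\A_\phi$ as nodes of the syntactic tree of $\phi$, the parent of the node $p$ is $q$, a modal operator rather than a fixpoint operator, so $p$ is not a component of a fixpoint operator and $\rank(p)=0$. Thus every modal move in $\semanticGame(\phi)$ lands in a rank-$0$ position, which is forbidden inside $\rho$. Consequently $\rho$ contains no modal move at all, and the only modal move of any $\pi\in\phaseScalar$ is the one that ends $\xi$. This gives the statement with $k_\max=1$, uniformly and independently of the strategies $\sigma_{\point_i}$; the ``without loss of generality'' in the statement merely carries over the earlier normalizations (guardedness, Proposition~\ref{Prop-Guardedness}, and the choice of pre-modally counter-independent winning strategies from Propositions~\ref{Prop-Eventually-Constant} and~\ref{Prop-Bounded-A-Diamond}), and no fresh strategy surgery is needed here.

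The only point requiring a little care is bookkeeping at the $\xi/\rho$ boundary: one must fix whether the target of the first modal move — which, as observed, always has rank $0$ — is attributed to $\xi$ or to $\rho$. Under the reading that charges it to $\xi$, the argument runs as above; under the stricter reading one gets $\rho=\epsilon$ outright, which is even easier; either way $\rho$ may be extended past that target only when the target is a fixpoint operator or a bound variable, in which case the \emph{next} position is again a fixpoint component (of necessarily positive rank), and by the same observation any subsequent modal step re-enters rank $0$ and leaves $\phaseScalar$. I expect this boundary bookkeeping — not any substantial game-theoretic argument — to be the only thing needing attention. For completeness I note the contingency: if the definition of $\phaseScalar$ is meant so that rank $0$ (the ``ambient'' rank of the play) does not count as ``visited'', then $\rho$ \emph{can} carry modal moves, but by clause (i) each is a $\diamond{\letterA}$-move of $\eve$; since in $\M$ every $\letterA$-edge strictly decreases the index and never reaches an $\altpoint$, the points reached by successive modal moves in $\rho$ strictly decrease, so $\rho$ has at most $j_1$ modal moves where $\point_{j_1}$ is the point reached by the first modal move, and it remains to bound $j_1$ — which is $<N$ if that first move is a $\diamond{\letterA}$-move by Proposition~\ref{Prop-Bounded-A-Diamond}, and in the $\diamond{\letterB}$/box cases is handled by rerouting $\eve$'s strategy via Propositions~\ref{Prop-Pointwise-Max-Strategy} and~\ref{Prop-Convex-Meaning-On-Monotone} together with well-foundedness of $\arrowA$ (which prevents $\eve$ from profiting from a $\letterA$-rope longer than the largest finite counter bound in $\phi$). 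In that reading, making the $\diamond{\letterB}$ rerouting rigorous is the main obstacle.
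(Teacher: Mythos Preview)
Your main argument reads the phase $\phaseScalar$ too literally and ends up trivializing it in a way that contradicts how the paper uses it. You are right that under the default construction of Section~\ref{sec:formtoaut} the root has rank $0$ and every modal target also has rank $0$; taken at face value this collapses $\phaseScalar$ to plays with a single modal move. But look at how condition~(ii) is used downstream in the proof of Theorem~\ref{Thm-Vectorial-Vs-Scalar}: when a play leaves $\phaseScalar$ by revisiting a rank from the pre-modal phase, the paper invokes Proposition~\ref{Prop-Scalar=Injectively-Ranked} to conclude that the \emph{same formula} was revisited, and then argues about counters at that formula. That step is false for rank $0$, which is shared by many states; so the intended reading of ``a rank that was visited'' excludes rank $0$ (equivalently, the paper is tacitly working with an injectively ranked automaton for the scalar $\phi$, as it says a few lines later). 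Under your literal reading the proposition is trivial but the surrounding proof breaks; under the intended reading the proposition is genuinely nontrivial.

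Your contingency plan correctly identifies the shape of the problem under the intended reading: all modal moves in $\rho$ are $\diamond{\letterA}$-moves and hence strictly decrease the point index. But the bound you extract, namely $j_1$, is \emph{not uniform in $i$}. Proposition~\ref{Prop-Bounded-A-Diamond} controls $j_1$ only when the first modal move is itself a $\diamond{\letterA}$-move; if it is $\diamond{\letterB}$ (where $\eve$ may jump to any $\point_{j_1}$) or $\boxmodal{\letterB}$ (where $\adam$ may), $j_1$ is unbounded, and the sentence ``handled by rerouting \ldots together with well-foundedness of $\arrowA$'' is precisely the content of the proposition, not a side remark. The parenthetical about ``the largest finite counter bound in $\phi$'' has no formal meaning here.

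The paper's proof does not try to bound $j_1$. Instead it refines each $\sigma_{\point_i}$ so that no \emph{formula} $\diamond{\letterA}\psi$ is visited twice within $\phaseScalar$, which immediately gives the uniform bound $k_\max=|\SubFor(\phi)|+1$. The refinement uses transitivity of $\arrowA$: whenever the old strategy would revisit $\diamond{\letterA}\psi$, one short-circuits the intermediate $\letterA$-steps and jumps directly to the last chosen point. The substantive work is showing that this short-circuiting preserves winning; for that the paper first reduces to counter-independent strategies in $\phaseScalar$ (Proposition~\ref{Prop-Finite-CtrIndep}), then reformulates winning in terms of the exit position reached (three equivalent games in the appendix), which is invariant under the short-circuit. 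This strategy surgery is the missing idea in your proposal.
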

Before proving the above proposition, let us demonstrate how it implies \eqref{Impl-Scalar}. Put:
\[
  m_\phi=k_\max+N+1
\]
where $k_\max$ is the bound from Proposition \ref{Prop-Bound-ArrowA}. We show that $\altpoint_i \in\semantics{\phi}$ for every $i>m_\phi$. To this end, consider the strategy $\sigma_i=\sigma_{\point_i}+_\exists 1$, i.e.~the strategy one above $\sigma_{\point_i}$. % , as in Proposition \ref{Prop-Plus-K}.
In the pre-modal phase of the evaluation game from $(\altpoint_i,\phi)$, use $\sigma_i$. Since $\point_i$ and $\altpoint_i$ have the same $\arrowA$-successors, if a play visits a formula beginning with $\diamond{\letterA}$, or $[\letterA]$, $\eve$ may continue with $\sigma_i$ and win. The same is true for $[\letterB]$, as every $\arrowB$-successor of $\altpoint_i$ is also a $\arrowB$-successor of $\point_i$.

The only interesting case is when a play reaches a formula that begins with $\diamond{\letterB}$ and $\sigma_i$ chooses $\point_{j'}$ for some $j'\geq i$ (if $j'<i$ then $\point_{j'}$ is a $\arrowB$-successor of $\altpoint_i$, so $\eve$ may use $\sigma_i$). In this case, $\eve$ may choose $\point_j$ where $j=k_\max+N$ and play maintaining the invariant that for the current play $\pi$, as long as it belongs to $\phaseScalar$, there is a $\sigma_i$-play $\pi'$ in $\phaseScalar$ s.t.:
\begin{enumerate}
  \item\label{Invariant-Item-1} all subformulas and ordinals are the same in $\pi$ and $\pi'$,
  \item\label{Invariant-Item-2} for the last points $\point_j$ and $\point_{j'}$ of $\pi$ and $\pi'$, respectively, we have:
  \[
    k+N \leq j \leq j'  \qquad \text{and}\qquad j<i 
  \]
  where $k$ is the bound on the number of modal moves that can be made after playing $\pi$ (or, equivalently, $\pi'$) before leaving $\phaseScalar$ .
\end{enumerate}

It is straightforward to maintain the invariant on $\epsilon$-transitions and when counter values are updated.

Since $\point_j$ and $\point_{j'}$ have the same $\arrowB$-successors, if after the play $\pi$ the game eventually moves via $\arrowB$ then $\eve$ may continue as with $\sigma_i$, i.e.~using the strategy $\pi\rho\mapsto\sigma_i(\pi'\rho)$. % This way we win, because every play $\rho$ is of shape $\pi\rho'$ for a $\sigma_n$-play $\pi'\rho'$).
Moreover, $j \leq j'$ (guaranteed by item \ref{Invariant-Item-2} of the invariant) and monotonicity of $\arrowA$ imply that if after $\pi$ we encounter a formula beginning with $[\letterA]$, $\eve$ just uses $\pi\rho\mapsto\sigma_i(\pi'\rho)$ and win. If after $\pi$ we visit a formula beginning with $\diamond{\letterA}$ and $\sigma_i(\pi')=\point_{j''}$ for some $j''<\omega$, then either:
\begin{enumerate}
  \item $j'' < j$, hence $\point_j \arrowA \point_{j''}$ and $\eve$ wins using $\pi\rho\mapsto\sigma_i (\pi'\rho)$, or
  \item $j\leq j''$, which combined with item \ref{Invariant-Item-2} of the invariant gives $k+N \leq j \leq j''$. Since we have just made a modal move, we are left with at most $k-1$ possible modal moves in $\phaseScalar$, so the choice of $\point_{j-1}$ preserves the invariant, as $k-1+N \leq j-1\leq j''$ and $j-1<j<i$.
\end{enumerate}

Since $\phi$ is guarded, the maximal number of consecutive $\epsilon$-transitions in a play is bounded by $|\SubFor(\phi)|$. Moreover, each time a play passes through $\diamond{\letterA}$, the number $k$ decreases. As a result, after at most $k_\max\cdot|\SubFor(\phi)|$ moves we either end the game in $\phaseScalar$ or leave $\phaseScalar$. In the first case, thanks to item \ref{Invariant-Item-1} of the invariant, $\eve$ must win, for the strategy $\sigma$ is winning. The second case can happen by either (i) visiting a formula that begins with $\boxmodal{\letterA}, \boxmodal{\letterB}$ or $\diamond{\letterB}$ (in which case $\eve$ wins, as described above), or (ii) visiting a subformula $\psi$ s.t. $\rank(\psi)$ was visited in the pre-modal phase. But since $\phi$ is scalar (and so by Proposition \ref{Prop-Scalar=Injectively-Ranked} injectively ranked when seen as an automaton), this implies that \emph{the same} $\psi$ must have been visited in the pre-modal phase.
Denote by $\pi_1$ the play ending with the first visit to $\psi$ after the pre-modal phase (before the counter update, i.e. $\pi_1$ ends with a countdown configuration) and by $\pi_1'$ the corresponding $\sigma_i$-play that exists thanks to the invariant. Let $\pi_0$ and $\pi'_0$ be the prefixes of $\pi_1$ and $\pi'_1$, respectively, ending with the first visit in $\psi$ \emph{after the counter update} (i.e. $\pi_0$ and $\pi_0'$ end with a positional configuration). By item \ref{Invariant-Item-1} of the invariant, the plays $\pi_0$ and $\pi'_0$ ($\pi_1$ and $\pi'_1$) lead to the same counter assignment $\ctr_0$ ($\ctr_1$, respectively).

Consider the strategy $\sigma$ behaving as \emph{$\sigma_i$ after $\pi'_0$}, that is $\sigma(\rho) = \sigma_i(\pi'_0\rho)$ for every $\rho$. Then:
\begin{align}\label{Claim-Winning}
  \text{The strategy $\sigma$ is winning from $\pconf{(\point_j,\psi), \ctr_0}$ for every $N\leq j\leq i$.}
\end{align}
Indeed, for $j=i$, $\sigma$ just continues a $\sigma_i$-play, and hence leads to victory. %, because $\sigma_n$ is winning from $\pconf{(\point^+_n,\psi), \ctr_\psi}$.
For $N\leq j<i$, we also essentially just use the same strategy $\sigma$:
\begin{enumerate}
  \item The pre-modal phase starting from $(\point_j,\psi)$ is identical as if we started from $(\point_i,\psi)$ (recall that we identify pre-modal plays starting in different $(\point_i,\psi)$ and $(\point_j,\psi)$ if they are equal up to swapping positions $(\point_i,\theta)$ and $(\point_j,\theta)$ for all $\theta$).
  \item If after a pre-modal play $\rho$ the game ever reaches a formula $\theta$ that begins with a modal operator, $\eve$ may legally continue using $\sigma$. Indeed, since $\rho$ is pre-modal, it does not change the point, meaning that it leads from $(\point_j,\psi)$ to $(\point_j,\theta)$ and from $(\point_i,\psi)$ to $(\point_i,\theta)$. Because $\point_i$ and $\point_j$ have the same $\arrowB$-successors, if $\theta$ begins with $\diamond{\letterB}$ or $\boxmodal{\letterB}$, then the possible moves from the position $(\point_j,\theta)$ are the same as from $(\point_i,\theta)$ and so we can continue as if we started from $(\point_i,\theta)$.

  Similarly, since $j\leq i$ implies that every $\arrowA$-successor of $\point_j$ is an $\arrowA$-successor of $\point_i$, $\sigma$ can be used to win against every $\adam$'s choice of an $\arrowA$-successor of $\point_j$ if $\theta$ begins with $\boxmodal{\letterA}$.

  The remaining case is when $\theta$ begins with $\diamond{\letterA}$. By Proposition \ref{Prop-Bounded-A-Diamond}, if in the first modal step of a $\sigma_{\point_i}$-play $\eve$ has to provide an $\arrowA$-successor $\point_k$ of $\point_i$, then $\sigma_{\point_i}$ chooses some $\point_k$ with $k<N$. Since $\sigma_i = \sigma_{\point_i} +_\exists 1$, the same is true about $\sigma_i$. But since $\sigma(\rho)=\sigma_i(\pi'_0\rho)$ and both $\rho$ and $\pi'_0$ are pre-modal, it follows from $N\leq j$ that the choice given by $\sigma_i$ is legal from $\point_j$.
\end{enumerate}
This proves \eqref{Claim-Winning}.

\medskip

%The following argument would fail for vectorial formulas, and is therefore the core of the limited expressiveness of the scalar countdown $\mu$-calculus.

Note that since by definition in $\pi_1$ (and $\pi'_1$) it is the first time the game revisits a rank seen in the pre-modal phase, we have:
\begin{align}\label{Eq-Ctr-0-1}
  \ctr_0(r')=\ctr_1(r')
\end{align}
for every nonstandard rank $r'\geq r$.
Indeed, since we are in a game for a \emph{scalar} formula $\phi$, every superformula $\theta$ of $\psi$ must have been visited in the pre-modal phase. Since $\pi_1$ is a minimal play in which some rank is visited twice, guardedness of $\phi$ implies that no such $\theta$ was visited between $\pi_0$ and $\pi_1$. This means that the only formulas that appeared between $\pi_0$ and $\pi_1$ were \emph{strict subformulas} of $\psi$, and hence \emph{all the ranks} visited between $\pi_0$ and $\pi_1$ were strictly lower than $r$, which implies \eqref{Eq-Ctr-0-1}.

\medskip

To finish the proof of \eqref{Impl-Scalar} we need to show how to win once $\pi_1$ has been played and the game reached a countdown configuration $\cconf{(\point_j,\psi),\ctr_1}$. By item \ref{Invariant-Item-1} of the invariant, $N<j\leq i$. Consider the following cases:
\begin{itemize}
  \item If $r$ is standard, the counter update in $\cconf{(\point_j,\psi),\ctr_1}$ is deterministic and by \eqref{Eq-Ctr-0-1} leads to $\ctr_0$. Hence, we end up in a configuration $\pconf{(\point_j,\psi), \ctr_0}$. By \eqref{Claim-Winning}, $\eve$ may use $\sigma$ to win from there.
  
  \item If $r$ is nonstandard and belongs to $\adam$, then by \eqref{Eq-Ctr-0-1} it follows that for every $\ctr$ that $\adam$ can choose in $\cconf{(\point_j,\psi),\ctr_1}$, $\ctr\worse_\forall\ctr_0$. Thus, again by \eqref{Claim-Winning}, $\eve$ may win using $\sigma$ from every such $\pconf{(\point_j,\psi), \ctr}$.
  
  \item If $r$ is nonstandard and belongs to $\eve$, the choice $\ctr_0$ that was picked by $\sigma_i$ at the end of $\pi_0$ is not legal after $\pi_1$.
  However, since $\sigma_i$ is one above $\sigma_{\point_i}$, there exist $\sigma_{\point_i}$-plays $\pi_0^-$ and $\pi_1^-$ one below $\pi'_0$ and $\pi'_1$, respectively, ending with configurations $\ctr_0^-$ and $\ctr_1^-$ such that $\ctr_0^-+_\exists 1=\ctr_0$ and $\ctr_1^-+_\exists 1=\ctr_1$. Define a counter assignment $\ctr$:
  \[
    \ctr(r')=
    \begin{cases}
      \ctr_1(r') & \text{ if $r'>r$,}\\
      \ctr_1^-(r) & \text{ if $r'=r$,}\\
      \bounds(r') & \text{ if $r'<r$}.
    \end{cases}
  \]
  Such $\ctr$ is a legal update from $\ctr_1$, because $\ctr_1^-+_\exists 1=\ctr_1$ implies $\ctr_1^-(r)+1=\ctr_1(r)$. Such a choice leads to the configuration $\pconf{(\point_j,\psi), \ctr}$, and so we show that this configuration is winning for $\eve$. We have:
  \begin{align}\label{Ineq-Chosen-Better}
    \ctr^-_0\worse_\exists\ctr.
  \end{align}
  
  Indeed, \eqref{Eq-Ctr-0-1} implies that $\ctr^-_0(r')=\ctr^-_1(r')$ for all $r'\geq r$. Thus, for $r'>r$:
  \begin{align*}
    \ctr_0^-(r')=\ctr_1^-(r') \leq \ctr_1(r')=\ctr(r') & \text{\ \ \ \ \ \ if $r'$ belongs to $\eve$,}\\
    \ctr_0^-(r')=\ctr_1^-(r') = \ctr_1(r')=\ctr(r') & \text{\ \ \ \ \ \ otherwise;}
  \end{align*}

  and:
  \[
    \ctr_0^-(r)=\ctr_1^-(r)=\ctr(r);
  \]
  whereas for $r'<r$:
  \[
    \ctr^-_0(r')=\bounds(r')=\ctr(r').
  \]

  Note that since $\sigma_i=\sigma_{\point_i}+_\exists 1$ and $\sigma$ is defined as $\rho\stackrel{\sigma}{\longmapsto}\sigma_i(\pi'_0\rho)$, it follows that $\sigma=\sigma^-+_\exists 1$ with $\sigma^-$ given by $\rho\stackrel{\ \sigma^-}{\longmapsto}\sigma_{\point_i}(\pi'_0\rho)$. By \eqref{Claim-Winning}, $\sigma$ wins from $\pconf{(\point_j,\psi), \ctr_0}$. Consequently, $\ctr_0=\ctr_0^-+_\exists 1$ implies that $\sigma^-$ wins from $\pconf{(\point_j,\psi), \ctr_0^-}$, and thanks to \eqref{Ineq-Chosen-Better}, also from $\pconf{(\point_j,\psi), \ctr}$.
\end{itemize}
This completes the proof of \eqref{Impl-Scalar} from Proposition \ref{Prop-Bound-ArrowA}.\\

It remains to prove Proposition \ref{Prop-Bound-ArrowA}, i.e.~to refine strategies $\sigma_{\point_i}$ to obtain a finite bound $k_\max<\omega$ on the number of modal moves in a play in the phase $\phaseScalar$.
We will show a stronger fact: no play $\pi\in\phaseScalar$ visits the same formula of shape $\diamond{\letterA}\psi$ twice. Then, Proposition~\ref{Prop-Bound-ArrowA} follows with the bound $k_\max=|\SubFor(\phi)|+1$, because all the other modal moves (i.e. moves corresponding to formulas beginning with $\boxmodal{\letterA}$, $\diamond{\letterB}$ or $\boxmodal{\letterB}$) end $\phaseScalar$ immediately.

Before we go into the somewhat technical details, let us sketch the core idea of the proof, which splits into two steps. First, we show that if instead of updating the counters during $\phaseScalar$ the players only decrement them once upon leaving $\phaseScalar$, this does not change the winner of the game.
Second, we use this equivalence to massage $\sigma_{\point_i}$ so that instead of performing a sequence:
\[
  (\point_j,\diamond{\letterA}\psi)\to(\point_{j'},\psi)\to\ ...\ \to(\point_l,\diamond{\letterA}\psi)\to(\point_{l'},\psi) \in V^+
\]
of modal moves corresponding to $\diamond{\letterA}\psi$, $\eve$ immediately goes to the \emph{last} point $(\point_j,\diamond{\letterA}\psi)\to(\point_{l'},\psi)$. This is possible thanks to transitivity and well-foundedness of $\arrowA$ and avoids repetitions of $\diamond{\letterA}\psi$.

To prove the claim, it is enough if for every \emph{minimal} (and therefore necessarily ending with a first modal move) $\pi_I\in\phaseScalar$ we refine $\sigma_{\point_i}$ to a strategy $\sigma_{\pi_I}$ so that:
\begin{enumerate}
  \item $\sigma_{\pi_I}$ does not visit any $\diamond{\letterA}\psi$ twice in any play $\rho\in\phaseScalar_{\pi_I}$ and
  \item the behaviour on all other plays is not changed, meaning that
  $
    \sigma_{\pi_I}(\rho) = \sigma_{\point_i}(\rho)
  $
  for every $\rho$ without a prefix in $\phaseScalar_{\pi_I}$.
\end{enumerate}
If we do that for every minimal $\pi_I\in\phaseScalar$, we may combine all such refined strategies into one:
\[
  \sigma_\phaseScalar(\rho) =
  \begin{cases}
    \sigma_{\pi_I}(\rho) & \text{if $\pi_I$ is the minimal prefix of $\rho$ from $\phaseScalar$},\\
    \sigma_{\point_i}(\rho) & \text{otherwise (i.e. for pre-modal $\rho$)};
  \end{cases}
\]
that avoids repetitions of each $\diamond{\letterA}\psi$ in \emph{every} $\pi\in\phaseScalar$.

Towards such a refinement of $\sigma_{\point_i}$, fix a minimal $\pi_I\in\phaseScalar$ leading to a winning countdown configuration $\gamma=\cconf{(\point_z,\theta_z),\ctr_z}$. Denote $v_z=(\point_z,\theta_z)\in V$ and $\phaseScalarBis=\{\rho\ |\ \pi_I\rho\in\phaseScalar\}$. To get our desired $\sigma_{\pi_I}$ it suffices to construct a winning strategy for $\game,\gamma$ that avoids repetitions of each $\diamond{\letterA}\psi$ in every $\pi\in\phaseScalarBis$.

Note that mebership in $\phaseScalarBis$ only depends on the underlying positions. Let $V_\phaseScalarBis\subset V$ be the set of all the positions of shape $(\point,\xi)$ with $\xi$ either (i) beginning with $\boxmodal{\letterA}$, $\boxmodal{\letterB}$ or $\diamond{\letterB}$ or (ii) having a rank that was visited in $\pi_I$. Then $\pi\in\phaseScalarBis$ iff in $\pi$ no position other than the last one belongs to $V_\phaseScalarBis$.

Define a \emph{parity} game  $\widetilde{\game}$ that has arena $(V,E,\rank)$ with all the positions from $V_\phaseScalarBis$ turned into terminal positions immediately winning for $\eve$. To avoid confusion, we will call parity plays $\overline{v}\in V^*$ in $\widetilde{\game}$ \emph{paths} and reserve the term \emph{plays} for $\game$. Observe that for every $\overline{v}\in V^*$:
\begin{align}\label{equiv:pos[G]~parity(G)}
  \text{$\overline{v}$ is a path in $\widetilde{\game},v_z$ $\iff$ $\overline{v}=\positions(\pi)$ for some play $\pi\in\phaseScalarBis$ in $\game,\gamma$}
\end{align}
with the left to right implication following from the fact that all the counters decremented in $\phaseScalarBis$ have initial, and hence limit values in $\gamma$. In particular, \eqref{equiv:pos[G]~parity(G)} implies that every $\phaseScalarBis$-component of a winning strategy for $\game,\gamma$ is a winning strategy for $\widetilde{\game},v_z$. This justifies the following terminology: we call a partial function $f:V^*\to V$, thought of as a candidate for a $\phaseScalarBis$-component of a winning strategy for $\game,\gamma$, a \emph{proto-strategy} if $f$ is a winning strategy in $\widetilde{\game},v_z$.

Since every modal move over $\arrowB$ leaves $\phaseScalarBis$, it follows that all the positions accessible in $\pi\in\phaseScalarBis$ are of the form $(\point_i,\psi)$ for $i\leq z$ (because $\arrowA$ only leads to points with a strictly smaller index) and no such position repeats in $\phaseScalarBis$ (by guardedness of $\phi$ and acyclicity of $\arrowA$). It follows that the set $\positions[\phaseScalarBis]$ is finite. By \eqref{equiv:pos[G]~parity(G)}, this means that also paths in $\widetilde{\game},v_z$ are all finite. Hence, if $f$ is a proto-strategy then every maximal $f$-path in $\widetilde{\game},v_z$ ends with a position $v$ that either (i) belongs to $V_\phaseScalarBis$ or (ii) is controlled by $\adam$ and has no successors in both $\game$ and $\widetilde{\game}$.

We prove that for every proto-strategy $f$, the following are equivalent:
  \begin{enumerate}
    \item\label{it:games 1} $\eve$ has a winning strategy $\sigma$ for $\semanticGame,\gamma$ %\cconf{v_z,\ctr_z}$ 
    guided by $f$ in $\phaseScalarBis$.
    \item\label{it:games 2} $\eve$ wins in the following game:
    
    (i) $\adam$ picks a maximal $f$-path $\overline{v}\in V^*$ starting at $v_z$;
    
    (ii) we play a usual countdown game from $\gamma$ % $\cconf{v_z,\ctr_z}$ 
    but on arena restricted to $\overline{v}$ (i.e. we only update the counters and deterministically follow $\overline{v}$);
    
    (iii) $\eve$ wins iff the resulting configuration $\cconf{v, \ctr}$, with $v$ being the last position of $\overline{v}$, is winning for $\eve$ in $\game$.
    \item\label{it:games 3} $\eve$ wins in the following game:
    
    (i) $\adam$ picks a maximal $f$-path $\overline{v}\in V^*$ starting at $v_z$;
    
    (ii) given the set $\nonstandardRanks_{\overline{v}}\subset\nonstandardRanks$ of all the nonstandard ranks that should have non-initial value after traversing $\overline{v}$, the owner of each $r\in\nonstandardRanks_\pi$ (starting from more important ranks) picks a final counter value $\ctr(r)<\ctr_z(r)$ and we put $\ctr(r)=\ctr_I(r)$ for all other $r \in \nonstandardRanks-\nonstandardRanks_{\overline{v}}$;
    
    (iii) $\eve$ wins iff the resuting configuration $\cconf{v, \ctr}$, with $v$ being the last position of $\overline{v}$, is winning for $\eve$ in $\game$.
  \end{enumerate}

  Note that the set $\nonstandardRanks_{\overline{v}}$ in \eqref{it:games 3} is uniquely determined by $\overline{v}$, as $r\in\nonstandardRanks_{\overline{v}}$ iff $r$ appears in $\overline{v}$ and no higher rank appears after the last occurrence of $r$. However, since we are dealing with a game corresponding to a scalar formula $\phi$, $\nonstandardRanks_{\overline{v}}$ has an even more straightforward description: nonstandard $r$ belongs to $\nonstandardRanks_{\overline{v}}$ iff it is a rank of some superformula of the last formula in $\overline{v}$.\\

  The implication \eqref{it:games 1}$\implies$\eqref{it:games 2} is straightforward. Once $\adam$ picked $\overline{v}$, $\eve$ simply uses $\sigma$ until $\overline{v}$ is traversed. By \eqref{equiv:pos[G]~parity(G)}, until that moment the game stays in $\phaseScalarBis$, so the choices dictated by $\sigma$ are consistent with $\overline{v}$, as $\sigma$ is guided by $f$ in $\phaseScalarBis$ and $\overline{v}$ is an $f$-path. Since $\sigma$ is winning, the configuration reached at the end of $\overline{v}$ must be winning.

  To prove that \eqref{it:games 2}$\implies$\eqref{it:games 1}, assume that for every maximal $f$-path $\overline{v}$ starting at $v_z$, $\eve$ has a strategy $h_{\overline{v}}$ winning in the second stage of \eqref{it:games 2}. Our goal is to provide her with $\sigma$ for \eqref{it:games 2}. When during $\phaseScalarBis$ she has to pick an edge, she uses $\sigma(\rho)=f(\positions(\rho))$ so that $\sigma$ is guided by $f$.
  For choosing ordinals, observe that the tree of all paths in $\widetilde{\game},v_z$ is finite, so for every countdown play $\rho$ guided by $f$ there are only finitely many maximal $f$-paths extending $\positions(\rho)$. Thus, for every play $\rho$ ending in $\eve$'s choice of a counter for rank $r$, she takes the ordinal:
  \[
    \max \{h_{\overline{v}}(\rho)(r)\ |\ \overline{v} \text{ is a maximal $f$-path extending $\positions(\rho)$} \}
  \]
  which is legal, since the longer $\rho$ is, the fewer paths extend $\positions(\rho)$. This way, she either wins before $\phaseScalarBis$ ends, or leave it in a winning configuration, and in the later case she may continue with any winning strategy.

  It remains to prove that  \eqref{it:games 2}$\iff$\eqref{it:games 3}. Note that in \eqref{it:games 2}, once the path $\overline{v}$ is chosen, the only nontrivial choice of a value for $r$ is when its the corresponding counter has initial value and is not going to be reset further in $\overline{v}$.

  Indeed, without lost of generality successor vaues are always decremented by $1$ and if $r$ is going to be reset somewhere further in $\overline{v}$, it suffices to pick the number $k$ of visits in $r$ before the closest reset.
  After the last reset of the counter for $r$, the number $k$ of future decrements of $r$ in $\overline{v}$ is fixed, so in order to end the game with $\ctr(r)=\alpha$ it suffices to pick the value $\alpha+k$ (and again, decrement it by 1 each time the game visits $r$).
  The above choices are legal because by definition of $\phaseScalar$, $r$ was not visited in the pre-modal phase and hence its counter has a maximal value upon entering $\phaseScalar$.

  Moreover, the order of these \emph{nontrivial} choices is precisely the (decreasing) order on $\nonstandardRanks_{\overline{v}}$. This establishes an equivalence between \eqref{it:games 2} and \eqref{it:games 3}, therefore completing the proof of equivalence of games \eqref{it:games 1}, \eqref{it:games 2} and \eqref{it:games 3}.\\

  Let $\sigma$ be a winning strategy for $\game,\gamma$. To complete the proof of Proposition~\ref{Prop-Bound-ArrowA} it suffices to upgrade such $\sigma$ so that no formula component of shape $\diamond{\letterA}\theta$ repeats in $\phaseScalarBis$. Thanks to finiteness of $\positions[\phaseScalarBis]$ we may apply Proposition \ref{Prop-Finite-CtrIndep} and assume that $\sigma$ is guided by $\sigma^\phaseScalarBis$ in $\phaseScalar$. As mentioned, such $\sigma^\phaseScalarBis$ is a legal poto-strategy. Enumerate all the subformulas $\diamond{\letterA}\psi_1,...,\diamond{\letterA}\psi_n$ of $\phi$ of shape $\diamond{\letterA}\theta$. We construct, by induction on $i\leq n$, a sequence $f_0, ..., f_n:V^*\to~V$ of proto-strategies s.t.:
  \begin{enumerate}
    \item $f_0=\sigma^\phaseScalarBis$,
    \item whenever $i<j$ and $\overline{v}\in V^*$ is a maximal $f_j$-path, there exists a maximal $f_i$-path $\overline{w}\in V^*$ ending with the same position,
    \item $f_i$ avoids repetitions of $\{\psi_1,...,\psi_i\}$.
  \end{enumerate}
  
  Assume we already have $f_i$ and want to construct $f_{i+1}$. For every $f_i$-path $\overline{v}\in V^*$ ending in a visit in $\diamond{\letterA}\psi_{i+1}$ consider the set:
  \[
    \HH_{\overline{v}} = \{\overline{w}\in V^*\ |\ \text{$\overline{w}$ is a $f_i$-path, has $\overline{v}$ as a prefix and ends with $\diamond{\letterA}\psi_{i+1}$} \}
  \]
  and fix some $(\overline{v})^\circ$ maximal in $\HH_{\overline{v}}$ ($\HH_{\overline{v}}$ is nonempty as it contains $\overline{v}$ and must contain a maximal path because the length of paths is bounded).
  
  Our new strategy $f_{i+1}$ acts like $f_i$ until the first visit in $\diamond{\letterA}\psi_{i+1}$ and then, instead of making multiple $\arrowA$-moves for $\diamond{\letterA}\psi_{i+1}$, immediately jumps to the \emph{last} choice from such maximal $(\overline{v})^\circ$ that extends the current path $\overline{v}$:
  \[
    f_{i+1}(\overline{w})=
    \begin{cases}
      f_i(\overline{w}) & \text{ if $\overline{w}$ does not visit $\diamond{\letterA}\psi_{i+1}$,}\\
      f_i((\overline{v})^\circ\cdot\overline{u}) & \text{ otherwise, with $\overline{w}=\overline{v}\cdot\overline{u}$ and $\overline{v}$ ending in the first visit in $\diamond{\letterA}\psi_{i+1}$}.
    \end{cases} 
  \]
  Such $f_{i+1}$ is a legal proto-strategy. Indeed, the new moves are allowed thanks to transitivity of $\arrowA$ and $f_{i+1}$ is winning in $\widetilde{\game},v_z$, because positions accessible via $f_{i+1}$ are a subset of the ones accessible by $f_i$. This also implies the second property, whereas the third one follows from the fact that each $(\overline{v})^\circ$ is maximal in $\HH_{\overline{v}}$.

  Since by scalarity of $\phi$ the set $\nonstandardRanks_{\overline{v}}$ in the third variant of the game \eqref{it:games 3} depends only on the last formula in $\overline{v}$ and $\eve$ wins \eqref{it:games 3} with $f=f_0$, thanks to the second property she also wins \eqref{it:games 3} with $f_n$. By equivalence of \eqref{it:games 3} and \eqref{it:games 1}, this means that some strategy $\sigma_{\pi_I}$ winning from $\gamma$ is guided by $f_n$. Moreover, the third property implies that $\sigma_{\pi_I}$ avoids repetitions of all $\diamond{\letterA}\psi_1, ..., \diamond{\letterA}\psi_n$ in $\phaseScalarBis$, thus proving Proposition \ref{Prop-Bound-ArrowA} and completing the proof of Theorem \ref{Thm-Vectorial-Vs-Scalar}.

 \section{Proof of Lemma~\ref{Lem-Strictness}}\label{app:strictness}

\begin{proposition}\label{Prop-Finite-Alternation}
  For every countdown formula $\phi$ there is a finite constant $t_\phi<\omega$ such that for every valuation $\val$ stable above $\kappa$, in the part $[\kappa,\omega_1)$ of the model above $\kappa$, $\phi$ changes its truth value at most $t_\phi$ times.
\end{proposition}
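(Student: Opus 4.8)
We prove, by induction on the structure of a formula $\psi$, the following strengthening: for every $N<\omega$ there is a constant $t_{\psi,N}<\omega$ such that for every $\kappa\le\omega_1$ and every valuation $\val$ in which each $\val(x)$ meets $[\kappa,\omega_1)$ in a union of at most $N$ intervals, the set $\semantics{\psi}^\val\cap[\kappa,\omega_1)$ is again a union of at most $t_{\psi,N}$ intervals; moreover, this bound may be taken independent of $\val(x_0)$ whenever $x_0$ is a variable all of whose occurrences in $\psi$ lie under a modality (and, iterating, independent of all such variables at once). Since every formula is equivalent to a guarded one (Proposition~\ref{Prop-Guardedness}) and subformulas of guarded formulas are guarded, it suffices to run the induction over guarded formulas. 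The Proposition is then the case $N=1$: a valuation stable above $\kappa$ interprets every variable as a single interval of $[\kappa,\omega_1)$, so $\phi$ changes its truth value at most $2t_{\phi,1}$ times, and we set $t_\phi=2t_{\phi,1}$.

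The cases $x$, $\top$, $\bot$, $\psi_1\vee\psi_2$ and $\psi_1\wedge\psi_2$ are routine, with $t_{x,N}=N$, $t_{\top,N}=t_{\bot,N}=1$ and $t_{\psi_1\star\psi_2,N}=t_{\psi_1,N}+t_{\psi_2,N}$ --- a union or intersection of two finite unions of intervals of a linear order is a union of at most the sum of the two counts --- and the ``moreover'' clause propagates immediately. The decisive point is the modal case. Since a model here is an ordinal with $\alpha\to\beta$ iff $\alpha>\beta$, we have $\semantics{\Diamond\chi}^\val=\{\alpha:\semantics{\chi}^\val\cap[0,\alpha)\neq\emptyset\}$, a final segment of $\omega_1$, and $\semantics{\Box\chi}^\val=\{\alpha:[0,\alpha)\subseteq\semantics{\chi}^\val\}$, an initial segment; each meets $[\kappa,\omega_1)$ in a single interval \emph{no matter what $\semantics{\chi}^\val$ is}. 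Hence $t_{\Diamond\chi,N}=t_{\Box\chi,N}=1$, and this bound is oblivious to the interval-complexity of anything occurring under the modality --- which is precisely where guardedness will be cashed in.

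For a fixpoint $\psi=\eta^\alpha_i\overline{y}.\overline{\rho}$ with $\alpha\in\{\omega,\infty\}$, write its meaning as $\pi_i(F^\alpha_\eta)$, where $F$ iterates $\overline{H}\mapsto\semantics{\overline{\rho}}^{\val[\overline{y}\mapsto\overline{H}]}$. Guardedness of $\psi$ means each $y_j$ occurs in each $\rho_l$ only under a modality, so the induction hypothesis for $\rho_l$ --- applied with $\overline{y}$ (and any free variables of $\psi$ that occur only under modalities in $\psi$) among the ``modally guarded'' variables and the remaining free variables still bounded by $N$ --- yields a constant $c=\max_l t_{\rho_l,N}$, \emph{independent of $\overline{H}$}, with $\semantics{\rho_l}^{\val[\overline{y}\mapsto\overline{H}]}\cap[\kappa,\omega_1)$ a union of at most $c$ intervals for every $\overline{H}$. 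Consequently every iterate $F^\beta_\eta$ with $\beta\ge1$ has all components unions of at most $c$ intervals of $[\kappa,\omega_1)$: at successor stages this is the previous sentence, and at limit stages it holds because an increasing union (for $\mu$) or a decreasing intersection (for $\nu$) of sets each a union of at most $c$ intervals of the linear order $[\kappa,\omega_1)$ is again such a union --- were there $c+1$ intervals, one point from each together with $c$ separating points lying outside would already be realised together at one sufficiently large stage, contradicting the bound there. This includes both $\alpha=\omega$ and $\alpha=\infty$ (take $\beta$ the closure ordinal), so we may put $t_{\psi,N}=c$, which is also independent of every variable occurring only under modalities, closing the induction.

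The fixpoint case is the real obstacle. The naive estimate, under which each application of $F$ adds or multiplies interval counts, is useless because the iteration can run for transfinitely many steps; what rescues the argument is that guardedness routes the entire dependence on the bound variables through a modality, and a modality collapses an arbitrary subset of $\omega_1$ to one interval of $[\kappa,\omega_1)$. The rest is bookkeeping: pinning down exactly which variables a given modality ``absorbs'' so that the constant $c$ is honestly independent of the iterates, treating the vectorial operators and nested inner fixpoints uniformly via the strengthened hypothesis, and verifying the short transfinite-limit preservation described above.
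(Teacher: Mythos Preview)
Your proof is correct and takes a genuinely different route from the paper's. The paper argues via game semantics: it invokes pre-modal counter-independence (Proposition~\ref{Prop-PreModal-CtrIndep}), observes that there are only finitely many pre-modal components, and then applies pigeonhole together with the convexity result of Proposition~\ref{Prop-Convex-Meaning-On-Monotone} to rule out more than $2z+2$ sign changes. Your argument, by contrast, is a direct structural induction exploiting a specific feature of the ordinal model: $\semantics{\Diamond\chi}$ is always a final segment and $\semantics{\Box\chi}$ an initial segment, so a modality collapses \emph{any} subset to a single interval. Guardedness then lets you push the bound uniformly through the fixpoint iteration, with the short compactness argument handling limit stages.

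Each approach has its merits. The paper's proof is essentially a one-liner given the game-theoretic machinery already in place, and it illustrates how that machinery pays off. Your proof is more self-contained --- it needs neither the semantic games nor Propositions~\ref{Prop-PreModal-CtrIndep} and~\ref{Prop-Convex-Meaning-On-Monotone} --- and yields a more explicit, structurally computed bound, whereas the paper's bound is in terms of the (potentially large) number of pre-modal components. One point worth tightening in your write-up: the ``moreover'' clause is really a strengthened induction hypothesis (for each $\psi$, the bound depends only on the valuation of variables \emph{not} guarded in~$\psi$), and it would read more cleanly stated upfront in that form rather than as a side remark. But the content is right: the crucial check --- that a variable guarded in $\eta^\alpha_i\overline{y}.\overline{\rho}$ remains guarded in each $\rho_l$, so the inner bound is genuinely uniform in the iterates --- goes through.
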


\begin{proof}
Since without loss of generality the formula is guarded (see Proposition~\ref{Prop-Guardedness}), by Proposition \ref{Prop-PreModal-CtrIndep} we may assume that in the semantic game $\eve$ always uses a pre-modally counter-independent strategy. But the number $z$ of possible pre-modal components for such strategies is finite, so if $\phi$ changed its value more than $t_{\phi}=2z+2$ times above $\kappa$, there would be $\kappa\leq\alpha<\zeta<\beta$ such that $\eve$ wins from $\alpha$ and $\beta$ with the same pre-modal component, but loses from $\zeta$ in between, which is impossible by Proposition \ref{Prop-Convex-Meaning-On-Monotone}.
\end{proof}

We prove Lemma~\ref{Lem-Strictness} by induction on the complexity of the formula $\phi$. The base case is immediate, as for every $x \in \Var$ it suffices to take $\alpha_x=0$. For propositional connectives and modal operators we take $\alpha_{\psi_1 \lor \psi_2}=\alpha_{\psi_1\land\psi_2}=\max(\alpha_{\psi_1},\alpha_{\psi_2})$ and $\alpha_{\Diamond\psi}=\alpha_{\Box\psi}=\alpha_\psi+1$. The remaining non-trivial cases are countdown and fixpoint operators.
\begin{itemize}
  \item For $\phi=\eta^\omega_i\overline{x}.\overline{\psi}$,
  let $\Phi=\{\theta_1,...,\theta_l\}$ be the set of all maximal subformulas of $\overline{\psi}$ not using any variable $x_j$. For each $\theta$, pick a fresh variable $y_{\theta}$ and put:
  \[
    \psi'_j=\psi_j[\theta_1\mapsto y_{\theta_1}...\theta_l\mapsto y_{\theta_l}]
  \]
  i.e. starting from the root $\psi_j$, we replace every subformula $\theta$ that has no variables from $\overline{x}$ with a fresh variable $y_{\theta}$.\footnote{Recall that we do not identify isomorphic subformulas, and so there are no substitutions \emph{inside} the $\theta$'s. In particular, the order of substitutions does not matter.} Observe that $\psi_j=\psi'_j[y_{\theta_1}\mapsto \theta_1...y_{\theta_l}\mapsto \theta_l]$, so: 
    \[
    \eta^\omega_i \overline{x}.\overline{\psi} \equiv (\eta^\omega_i \overline{x}.\overline{\psi'})[y_{\theta_1}\mapsto \theta_1...y_{\theta_l}\mapsto \theta_l].
  \]
  Note that if $\phi$ has countdown nesting at most $k$, then each $\psi'_j$ and each $\theta$ has countdown nesting less than $k$. Thus, by the induction hypothesis there exist $\alpha_{\psi'_j}<\omega^{k}$ and $\alpha_\theta<\omega^{k}$ s.t. $\psi'_j$ and $\theta$ stabilize $\alpha_{\psi'_j}$ and $\alpha_\theta$ above the valuation, respectively. Denote $\alpha_{\overline{\psi'}}=\max\{\alpha_{\psi'_1},...,\alpha_{\psi'_n}\}$.
  
  For $m<\omega$, consider the $m$-th unfolding given by $\psi_j'^0 =x_j$ and $\psi_j'^{m+1}=\psi_j'[x_1 \mapsto \psi_1'^m ... x_n \mapsto \psi_n'^m]$. It follows by a straightforward induction on $m$ that each $\psi_j'^m$ is stable $\alpha_{\overline{\psi'}}\times m$ above the valuation. Moreover, for any valuation $\val$ we have:
  $$\textstyle
    \semantics{\mu^\omega_j\overline{x}.\overline{\psi'}}^\val=\bigcup_{m<\omega}\semantics{\psi_j'^m}^\val \qquad \text{ and } \qquad \semantics{\nu^\omega_j\overline{x}.\overline{\psi'}}^\val=\bigcap_{m<\omega}\semantics{\psi_j'^m}^\val
  $$
  so $\eta^\omega_j\overline{x}.\overline{\psi'}$ is stable $\alpha_{\overline{\psi'}}\times \omega$ above the valuation. Finally, we obtain that $\phi=(\eta^\omega_j\overline{x}.\overline{\psi'})[\theta_1\mapsto y_{\theta_1}...\theta_l\mapsto y_{\theta_l}]$ is stable $\alpha_\phi$ above valuation with:
  \[
    \alpha_\phi = \max\{\alpha_{\theta_1}...\alpha_{\theta_l}\} + \alpha_{\overline{\psi'}} \times \omega.
  \]
  Since $\alpha_{\overline{\psi'}}\times \omega < \omega^{k+1}$ and for each $\theta$, $\alpha_\theta<\omega^{k}$, it follows that $\alpha_\phi<\omega^{k+1}$.

  \item For $\phi=\eta^\infty_i\overline{x}.\overline{\psi}$, note that the countdown nesting of each $\psi_j$ is not greater than that of $\phi$. For each $j\leq n=|\overline{x}|$, let $t_{\eta^\infty_j\overline{x}.\overline{\psi}}<\omega$ be the constant from Proposition \ref{Prop-Finite-Alternation} and $\alpha_{\psi_j}<\omega^{k+1}$ the constant that exists by the inductive hypothesis. Put $\alpha_\max=\max_{j\leq n} (\alpha_{\psi_j})$, $t_\max=\max_{j \leq n}(t_{\eta^\infty_j\overline{x}.\overline{\psi}})$ and $\alpha_\phi=\alpha_\max \times t_\max \times n$.
  Clearly $\alpha_\phi<\omega^{k+1}$, as $\alpha_\max<\omega^{k+1}$ and $t_\max<\omega$ -- so it suffices to show that such bound works. Define a valuation:
  \[
    \val'(y)=
    \begin{cases}
      \semantics{\eta^\infty_j\overline{x}.\overline{\psi}}^\val & \text{ if $y=x_j$}\\
      \val(y) & \text{otherwise.}
    \end{cases}
  \]
  and let $\kappa$ be the stabilization point of $\val$. Note that for each $j\leq n$, $\val'(x_j)$ changes value above $\kappa$ at most $t_\max$ times and so $\val'$ changes its value at most $t_\max \times n$ times above $\kappa$.
  
  On the other hand, if $\val'$ does not change its value for at least $\alpha_\max$ steps, it remains constant forever, i.e. if for some $\kappa\leq\alpha<\omega_1$ we have that $\val'$ is constant on the interval $[\alpha,\alpha+\alpha_\max]$, then it is constant on the entire $[\alpha,\omega_1)$. Indeed, assuming that $\val'$ is constant on $[\alpha,\alpha+\alpha_\max]$, we show by induction on $\alpha_\max\leq\beta$ that it is constant on $[\alpha,\alpha+\beta]$. Indeed we have:
  \[
    \val'(x_j) = \semantics{\eta^\infty_j\overline{x}.\overline{\psi}}^\val = \semantics{\eta^\infty_j\overline{x}.\overline{\psi}}^{\val'} = \semantics{\psi_j}^{\val'}
  \]
  and since all $\overline{x}$ are guarded in $\overline{\psi}$, $\semantics{\psi_j}^{\val'}(\alpha+\beta)$ depends only on the values of $\val'$ strictly below $\alpha+\beta$. In particular, for every $\zeta\leq \alpha+\beta$ we have $\semantics{\psi_j}^{\val'}(\zeta)=\semantics{\psi_j}^{\val_{\alpha+\beta}}(\zeta)$ where $\val_{\alpha+\beta}$ is the valuation that repeats the last value above $\alpha+\beta$:
  \[
    \val_{\alpha+\beta}(y)(\zeta)=
    \begin{cases}
      \val'(y)(\zeta) & \text{if $\zeta<\alpha+\beta$}\\
      \val'(y)(\alpha+\alpha_\max) & \text{otherwise.}
    \end{cases}
  \]
  By the inductive hypothesis, $\val'$ is constant on $[\alpha, \alpha+\beta)$, so $\val_{\alpha+\beta}$ is constant on $[\alpha,\omega_1)$, i.e. it stabilizes at $\alpha$. This implies that $\semantics{\psi_j}^{\val_{\alpha+\beta}}$ is stable above $\alpha+\alpha_{\psi_j}$. Since $\alpha_{\psi_j}\leq\alpha_\max\leq\beta$, we get that:
  \[
    \semantics{\psi_j}^{\val'}(\alpha+\beta) =  \semantics{\psi_j}^{\val_{\alpha+\beta}}(\alpha+\beta)
     =  \semantics{\psi_j}^{\val_{\alpha+\beta}}(\alpha+\alpha_\max)
     =  \semantics{\psi_j}^{\val'}(\alpha+\alpha_\max)
  \]
  
  which shows that $\val'$ is indeed constant on $[\alpha,\alpha+\beta]$.

  It follows that after at most $t_\max \times n$ blocks, each of length at most $\alpha_\max$, the valuation $\val'$ stabilizes.
\end{itemize}
This finishes the proof of Lemma \ref{Lem-Strictness} and Theorem \ref{Thm-Strictness-Nesting}.

%Note that our translation from automata to formulas from Section~\ref{sec:auttoform} can map automata with $k$ nonstandard ranks to formulas with countdown nesting greater than $k$. However, this can be fixed if in our inductive translation we replace $\psi_{r+1,p}= \psi_{r,p}[x_{q_1} \mapsto \psi_{r+1,q_1}, ..., x_{q_n} \mapsto \psi_{r+1,q_n}]$ with the equivalent formula having a vacuous fixpoint operator on top:
%\[
%  \psi_{r+1,p}=\pi_0(\mu^\infty \langle x_0, x_{q_1}, ..., x_{q_n} \rangle. \langle \psi_{r,p}, \psi_{r+1,q_1}, ..., \psi_{r+1,q_n}\rangle)
%\]
%with $x_0$ not being used anywhere. Then, our induction preserves an invariant that $\psi_{r,p}$ has countdown nesting at most $r$. On the other hand, the formula in Equation \ref{Eq-Language-Omega^k} translates to an automaton with $k$ nonstandard ranks. Thus, as a corollary of the Lemma \ref{Lem-Strictness} we also get that the hierarchy induced by the maximal number of nonstandard ranks is strict.
%
%\begin{theorem}\label{Thm-Strictness-Ranks}
%  For every $k<\omega$, automata with $k+1$ nonstandard ranks have strictly more expressive power than those with at most $k$ nonstandard ranks.
%\end{theorem}
%

\end{document}